\def\RP{{\mathbb{R}}_{\geq 0}}
\def\RPP{{\mathbb{R}}_{>0}}
\def\ZP{{\mathbb{N}}}
\def\CG{{\sf LB}}
\def\NCG{{\sf NLB}}
\def\CR{{\sf CR}}
\def\I{{\sf I}}
\def\A{{\sf A}}
\def\N{N}
\def\Z{\mathbb{Z}_{\geq 0}}
\newcommand{\sg}{{\bm\sigma}}
\newcommand{\nsg}{{\bm\Delta}}
\newcommand\Mycite[1]{%
	\cite{#1}}
\newtheorem{fact}{Fact}
\begin{document}
\title{Nash Social Welfare in Selfish and Online Load Balancing\thanks{This work was partially supported by the Italian MIUR PRIN 2017 Project ALGADIMAR “Algorithms, Games, and Digital Markets”.}}
\titlerunning{Nash Social Welfare in Load Balancing}
\author{Vittorio Bilò \inst{1} \and
Gianpiero Monaco \inst{2}  \and
Luca Moscardelli \inst{3}\and 
Cosimo Vinci \inst{4}  
}
\authorrunning{V. Bilò et al.}
\institute{University of Salento, Italy  \and University of L'Aquila, Italy \and University of Chieti-Pescara, Italy \\ \and Gran Sasso Science Institute, Italy} 
\maketitle              
\begin{abstract}
In load balancing problems there is a set of clients, each wishing to select a resource from a set of permissible ones, in order to execute a certain task. Each resource has a latency function, which depends on its workload, and a client's cost is the completion time of her chosen resource. Two fundamental variants of load balancing problems are {\em selfish load balancing} (aka. {\em load balancing games}), where clients are non-cooperative selfish players aimed at minimizing their own cost solely, and {\em online load balancing}, where clients appear online and have to be irrevocably assigned to a resource without any knowledge about future requests.
We revisit both selfish and online load balancing under the objective of minimizing the {\em Nash Social Welfare}, i.e., the geometric mean of the clients' costs. To the best of our knowledge, despite being a celebrated welfare estimator in many social contexts, the Nash Social Welfare has not been considered so far as a benchmarking quality measure in load balancing problems. We provide tight bounds on the price of anarchy of pure Nash equilibria and on the competitive ratio of the greedy algorithm under very general latency functions, including polynomial ones. For this particular class, we also prove that the greedy strategy is optimal as it matches the performance of any possible online algorithm.
\keywords{Congestion games \and Nash social welfare  \and Pure Nash equilibrium \and Price of anarchy \and Online algorithms.}
\end{abstract}
\section{Introduction}

In load balancing problems there is a set of clients, each wishing to select a resource from a set of permissible ones, in order to execute a certain task. Each resource has a latency function, which depends on its workload, and a client's cost is the completion time of her chosen resource. These problems stand at the foundations of the Theory of Computing and have been studied under a variety of objective functions, such as the maximum client's cost (aka. the makespan) \cite{Grah,HS87,HS76,LST90} and the average weighted client's cost (see \cite{CK04} for an excellent survey). 

Two extensively studied variants of load balancing problems are {\em selfish load balancing} \cite{V07} (aka. {\em load balancing games}) and {\em online load balancing} \cite{Grah}.
Selfish load balancing, where clients are non-cooperative selfish players aimed at minimizing their own cost solely, constitutes a notable subclass of \emph{weighted congestion games} \cite{R73} and, as such, enjoys some nice theoretical properties. For instance, they always admit pure Nash Equilibria \cite{I+05}. Moreover, under the assumption that all tasks have unitary weight ({\em unweighted congestion games}), any best-response dynamics converges to a pure Nash Equilibrium in polynomial time \cite{HRV08}. In online load balancing, instead, clients appear online and have to be irrevocably assigned to a resource without any knowledge about future requests.

Interpreting the set of clients of a load balancing problem as a society and adopting the terminology of welfare economics, the makespan and the average weighted client's cost objective functions get called, respectively, the {\em egalitarian} and the {\em utilitarian} social function.
In the case of unweighted tasks, the egalitarian function is defined as $\max_{i}x_i$, and the utilitarian one is defined as $\frac{1}{n}\sum_i x_i$, where $n$ is the number of clients and ${\bm x}=(x_1,x_2,...)$ is the vector encoding the clients' costs. Another interesting social function is the {\em Nash Social Welfare} (NSW) \cite{Nash50}, which is defined as $\left(\prod_i x_i\right)^{\frac 1 n}$, i.e., as the geometric mean of the clients' costs. These definitions naturally extend to the more general case of weighted tasks (see Section \ref{sec_model}).

The NSW is a celebrated welfare measure in many settings, such as Fisher markets \cite{B+10,BS00} and fair division \cite{AW19,BGM17,BMSY19,CGH19,CKMP016,CG18,GHM18}, as it satisfies a set of interesting properties (most of the aforementioned papers focus on fairness properties such as envy-freeness and maximin share, and Pareto optimality) and achieves a balanced compromise between the equity of the egalitarian social welfare function and the efficiency of the utilitarian one. We notice that when $x_i>0$, for any $i=1,\ldots,n$, this balance holds regardless of whether the objective is maximizing or minimizing the NSW. The case where each $x_i$ can be either a positive or negative value has been considered in \cite{AW19}. In the context of congestion games we do not take into account envy-freeness and maximin share, however, it is easy to see that an outcome that minimizes the NSW is Pareto optimal. Another interesting motivation for considering the NSW in load balancing comes from the following observation. An alternative reasonable way to define a client's cost can come by taking the ratio between the completion time of her chosen resource and the completion time she could obtain when being the only client in the system (i.e., when she is the unique user of the fastest resource). This definition avoids situations where the cost of a specific client determines almost completely the value of the social welfare. This happens, for instance, when there is a client $i$ owing a highly time-consuming task. Here, both the utilitarian and the egalitarian social welfare end up depending on the cost of $i$, thus almost neglecting the other clients' costs. In this setting, the NSW is the proper metric to use. More generally, the NSW is the only correct mean to use when averaging normalized results, that is, results that are presented as ratios to reference values \cite{FW86}. It is important to emphasize the scale-freeness of the NSW in load balancing problems, that is, the NSW is a robust social welfare function as its analysis is not affected by this change in the definition of a client's cost. 

\subsection{Related Work}

\noindent\textbf{Selfish Load Balancing.} The literature concerning the efficiency of Nash equilibria in selfish load balancing is highly tied with that of its superclass of congestion games. In the following, we first focus on results for the mostly studied case of the \emph{utilitarian social welfare}. In this setting, it is assumed that all clients selecting the same resource experience the same cost.

The efficiency of pure Nash equilibria in congestion games has been first considered in \Mycite{AAE05} and \Mycite{CK05}, where it has been independently shown that the price of anarchy is $5/2$ and $(3+\sqrt{5})/2$ for, respectively, unweighted and weighted congestion games with affine latency functions. These bounds have been extended to load balancing games in \Mycite{CFKKM11}. However, under the additional assumption that the game is symmetric (i.e., all resources are available to any client), the price of anarchy improves to $4/3$ \Mycite{LMMR08}. Exact bounds for both weighted and unweighted congestion games with polynomial latency functions have been given in \Mycite{ADGMS11}, and \Mycite{GS07,BGR10} prove that they hold even for unweighted load balancing games and symmetric weighted load balancing games, respectively. These results have been further generalized in \Mycite{BV17}, where it is proved that, under general latency functions encompassing polynomial ones, the worst-case price of anarchy of both symmetric weighted congestion games and unweighted congestion games is attained by load balancing instances. This worst-case behavior, however, does not occur under identical resources, where load balancing games exhibit better performance with respect to general congestion games. For instance, for affine latency functions, the price of anarchy drops to $2.012067$ for unweighted games \Mycite{CFKKM11,STZ07} and to $9/8$ for symmetric weighted games \Mycite{LMMR08}. Tight bounds for this last class of games under polynomial and more general latency functions have been given in \cite{GLMM06,BV17}.

For the class of non-atomic congestion games (a variant assuming that each client's task is infinitesimally small with respect to the workload required by the whole society and suited to model communication and transportation networks) \Mycite{R03,roughowb,rougboun} provide bounds on the price of anarchy under general latency functions and prove that they are tight even for a two-node network with two parallel links. An interesting connection between load balancing games and non-atomic congestion games has been uncovered in \Mycite{F10} where it is shown that, under fairly general latency functions, the price of anarchy of unweighted symmetric load balancing games coincides with that of non-atomic congestion games.

Less has been done for the \emph{egalitarian social welfare}. The study of the price of anarchy was initiated in \cite{KP99}, where weighted congestion games of $m$ parallel links with linear latency functions are considered. The price of anarchy for the egalitarian social welfare is $\Theta(\frac{\log m}{\log \log m})$. The lower bound was shown in \cite{KP99} and the upper bound in \cite{czum}. For load balancing games, the price of anarchy is $\Theta(\frac{\log n}{\log \log n})$ where $n$ is the number of players \cite{GLMM06}, while for unweighted congestion games is $\Theta(\sqrt{n})$ \cite{CK05}. \cite{R04} proves that the price of anarchy of non-atomic congestion games with general non-decreasing latency function is $\Omega(n)$. 

\smallskip
\noindent\textbf{Online Load Balancing.} The performance of greedy load balancing with respect to the utilitarian social welfare and under affine latency functions has been studied in \cite{AAG+95,CFKKM11,STZ07}. \cite{AAG+95} considers a more general model where each client has a load vector denoting her impact on each resource (i.e., how much her assignment to a resource will increase its load) and the objective is to minimize the $L_p$ norm of the load of the resources. Their results, together with \cite{CFKKM11}, imply a competitive ratio of the greedy algorithm equal to $3+2\sqrt{2}\approx 5.8284$ for the utilitarian social welfare. 
This bound carries over also to the case of weighted clients where the objective is to minimize the weighted average latency. 
\cite{STZ07} and \cite{CFKKM11} provide a tight  bound of $17/3$ for different resources and show that
the competitiveness of greedy load balancing is between $4$ and
$\frac{2}{3}\sqrt{21}+1\approx 4.05505$ for identical resources. 
\cite{BV17} characterizes the competitive ratio of the greedy algorithm applied to congestion games with general latency functions.

\cite{BFFM09,CMS06} analyse a different online algorithm (usually termed one-round walk starting from the empty state) for load balancing and prove that its competitive ratio is $2+\sqrt{5}$ under affine latency functions. Bounds for the case of polynomial latencies are given in \cite{B18,BV19,KST19}, while \cite{BV17,V19} address more general latency functions with respect to atomic and non-atomic congestion games, respectively.

Concerning the egalitarian social welfare, most of the results of the literature investigate the case of identical resources, usually termed as machines. \cite{A99,BFKV95,FKT89,FW00,GW93,Grah,KPT96}. We notice that the machine scheduling problem with related (resp. identical) machines is a special case of our weighted load balancing problem with linear latency functions (resp. identical resources with linear latency functions). For $m$ identical machines, \cite{Grah} shows that the greedy algorithm achieves a competitive ratio of exactly $2 - \frac{1}{m}$ and this bound is proven the best possible one for $m=2,3$ in \cite{FKT89}. The currently best known algorithm achieves a competitive ratio of $1.9201$ \cite{FW00} for any $m$ and no algorithm can achieve a competitive ratio bettern than $1.88$ \cite{R01}. For related machines, \cite{AAFPW97,ANR92} show a tight bound of $\log m$, while \cite{C08} considers the case of unrelated machines with the objective of minimizing the norm of the machines loads.



\subsection{Our Contribution}

We revisit both selfish and online load balancing under the objective of minimizing the NSW. To the best of our knowledge, this is the first work adopting the NSW as a benchmarking quality measure in load balancing problems.   
We analyze the price of anarchy \cite{KP99} of pure Nash equilibria (the loss in optimality due to selfish behavior) and the competitive ratio of online algorithms (the loss in optimality due to lack of information) under very general latency functions. These questions have been widely addressed under the utilitarian and egalitarian functions, but never under the NSW.


We notice that by adopting the NSW as new metric, we are not going to modify the set of Nash equilibria but only the social values. The main difference between the NSW and the classical notion of utilitarian social welfare consists in the fact that, while in the latter the players' costs are summed, in the former they are multiplied.
This may lead to think that, by turning the costs into their logarithms, a classical utilitarian analysis can be easily adapted to deal with the NSW. 
Actually, this is not the case. In fact, on the one hand, using this idea for bounding a performance ratio (e.g., the price of anarchy or the competitive ratio), one obtains a bound on the ratio between two logarithms (each one having the product of the players' costs as argument). On the other hand, we are interested in bounding the ratio between the argument of these logarithms, and there is no direct correlation between these two ratios (notice that logarithm of the latter ratio is equal to the difference between the corresponding utilitarian social costs, and therefore it is not related to the former one). Thus, the analysis of the NSW requires different proof arguments.   
In order to have another evidence of this fact, it is worth noticing that the results obtained for the NSW substantially differ from the ones holding for the utilitarian social function, not only from a quantitative point of view, but also from a qualitative one. In fact, while it is well known (see \Mycite{CFKKM11}) that for the utilitarian social welfare the simpler combinatorial structure of load balancing games does not improve the price of anarchy of general congestion games, our Theorem \ref{thm_CG} (deferred to the appendix) and Corollary \ref{cor1} show that, for the NSW, even for the case of linear latency functions, the price of anarchy drops from $n$ to $2$.

All upper bounds shown in this paper are quite general, given that they hold for any non-decreasing and positive latency function. Moreover, the provided matching lower bounds hold for latency functions verifying mild assumptions; it is worth to remark that they are satisfied by the well studied class of polynomial latency functions and by many other ones.

In particular, Theorem \ref{thm_w_upp} provides an upper bound to the price of anarchy for the case of weighted load balancing games, while Theorem \ref{thm_w_low} gives a matching lower bound. 
Similarly, we focus on unweighted games (a special case of weighted ones) by providing tight bounds that, in general, are lower than the ones that can be obtained for weighted games (see Subsection \ref{subsec_nash_unweighted}). 
However, Corollaries \ref{cor1} (or \ref{cor1b}) and \ref{cor2} show that, when considering polynomial latency functions of degree $p$, the two analyses (for weighted games and for unweighted ones) give the same tight bound of $2^p$. Furthermore, when considering weighted games, the tight bound of $2^p$ holds even for symmetric games (Corollary \ref{cor1}) and for games with identical resources (Corollary \ref{cor1b}). We also provide a tight analysis holding for non-atomic games (see Subsection \ref{subsec_nash_nonatomic}); for the case of polynomial latency functions of degree $p$, Corollary \ref{cor3} shows that the price of anarchy is $\left( e ^{\frac 1 e} \right) ^ p \simeq (1.44)^p$. For the online setting, we analyze the greedy algorithm that assigns every client to a resource minimizing the total cost of the instance revealed up to the time of its appearance. We provide a tight analysis of the  competitive ratio of the greedy algorithm, and we show that, when considering polynomial latency functions of degree $p$, there exists no online algorithm achieving a competitive ratio better than the one of the greedy algorithm, that is equal to $4^p$ (see Section \ref{sec_onlineLB}). In Table \ref{tab:1}, we consider the case of polynomial latency functions, and we compare the performance under the NSW with that under the utilitarian social welfare studied in some previous works. 

The rest of the paper is structured as follows. 
Section \ref{sec_model} introduces the model. 
Sections \ref{sec_selfishLB} and \ref{sec_onlineLB} are devoted to the performance analysis of the price of anarchy and of the competitive ratio, under the selfish and the online setting, respectively. 
Finally, in Section \ref{sec_conlusion} we give some conclusive remarks and state some interesting open problems. Due to lack of space, some proofs are sketched or omitted, and are left to the appendix.
\begin{table}
\begin{center}
\begin{tabular}{| c || c | c |}
\hline
 & NSW & USW \\ 
\hline\hline
Weighted & $2^p$ & $(\Phi_p)^{p+1}\sim \Theta\left(\frac{p}{\log(p)}\right)^{p+1}$,  \cite{ADGMS11}\\ \hline 
Unweighted & $2^p$ & $\frac{(k+1)^{2p+1}-k^{p+1}(k+2)^p}{(k+1)^{p+1}-(k+2)^p+(k+1)^p-k^{p+1}}\sim \Theta\left(\frac{p}{\log(p)}\right)^{p+1}$, \cite{ADGMS11}\\ \hline
Non-atomic & $\left(e^{\frac{1}{e}}\right)^p$ & $\left(1-p(p+1)^{-(p+1)/p}\right)^{-1}\sim \Theta\left(\frac{p}{\log(p)}\right)$, \cite{R03}\\ \hline
Online & $4^p$ & $(2^{1/(p+1)}-1)^{-(p+1)}\sim \Theta(p)^{p+1}$, \cite{C08}\\ \hline
\end{tabular}
\caption{\label{tab:1}Tight bounds on the performance of load balancing with polynomial latency functions of maximum degree $p$, under the NSW and the utilitarian social welfare (USW). $\Phi_p$ denotes the unique solution of equation $x^{p+1}=(x+1)^p$, and $k:=\lfloor \Phi_p\rfloor$. We observe that the performance under the NSW case is definitely better (even asymptotically) than that under the USW case, except for the non-atomic setting.}
\end{center}
\end{table}
\section{Model}\label{sec_model}
Given $k\in \mathbb{N}$, let $[k]:=\{1,2,\ldots, k\}$.
A class $\mathcal{C}$ of functions is called 
{\em ordinate-scaling} if, for any $f\in\mathcal{C}$ and $\alpha\geq 0$, the function $g$ such that $g(x)=\alpha f(x)$ for any $x\geq 0$, belongs to $\mathcal{C}$;
{\em abscissa-scaling} if, for any $f\in\mathcal{C}$ and $\alpha\geq 0$, the function $g$ such that $g(x)=f(\alpha x)$ for any $x\geq 0$, belongs to $\mathcal{C}$;
{\em all-constant-including} if it contains all the constant functions (i.e., all functions $f$ such that $f(x)=c$ for some $c>0$);
{\em unbounded-including} if all the latency functions $f$, except for the constant ones, verify $\lim_{x\rightarrow \infty}f(x)=\infty$.
Let $\mathcal{P}(p)$ denote the class of polynomial latencies of maximum degree $p$, i.e., the class of functions $f(x)=\sum_{d=0}^p\alpha_{d} x^d$, with $\alpha_{d}\geq 0$ for any $d\in [p]\cup\{0\}$ and $\alpha_{d}>0$ for some $d\in [p]\cup\{0\}$. A function $f$ is {\em quasi-log-convex} if $x\ln(f(x))$ is convex. 

We first deal with \emph{selfish load balancing}, by defining \emph{load balancing games}, and then we turn our attention to the online setting.

\subsection{Selfish Load Balancing}

\noindent\textbf{(Atomic) Load balancing games.} A {\em weighted (atomic) load balancing game}, or {\em load balancing game} for brevity, is a tuple $\CG=\left(\N,R,(\ell_j)_{j\in R},(w_i)_{i\in\N},({\Sigma}_i)_{i\in\N}\right),$ where $\N$ is a set of $n\geq 1$ players (corresponding to clients), $R$ is a finite set of resources, $\ell_j:\RPP\rightarrow\RPP$ is the (non-decreasing and positive) latency function of resource $j\in R$, and, for each $i\in\N$, $w_i>0$ is the weight of player $i$ and ${\Sigma}_i\subseteq R$ (with ${\Sigma}_i\neq\emptyset$) is her set of strategies (or admissible resources). For notational simplicity, we assume that each latency function $\ell$ verifies $\ell(0)=0$.

An {\em unweighted load balancing game} is a weighted load balancing game with unitary weights. A {\em symmetric weighted load balancing game} is a congestion game in which each player can select all the resources, i.e., ${ \Sigma}_i=R$ for any $i\in \N$. 

Given a class $\mathcal{C}$ of latency functions, 
let ${\sf ULB}(\mathcal{C})$ be the class of unweighted load balancing games, ${\sf WLB}(\mathcal{C})$ be the class of weighted load balancing games, and ${\sf SWLB}(\mathcal{C})$ be the class of weighted symmetric load balancing games, all having latency functions in the class $\mathcal{C}$. We say that resources are {\em identical} if all of them have the same latency function.

\smallskip
\noindent\textbf{Non-Atomic Load balancing Games.} The counterpart of the class of atomic load balancing games is that of {\em non-atomic load balancing games} \Mycite{Beckmann1956,pigo,ward}: these games are a good approximation for atomic ones when players become infinitely many and the contribution of each player to social welfare becomes infinitesimally small. 
A {\em non-atomic load balancing game} is a tuple $\NCG=\left(\N,R,(\ell_j)_{j\in R},(r_i)_{i\in \N},({\Sigma}_i)_{i\in\N}\right)$, where $\N$ is a set of $n\geq 1$ \emph{types} of players, $R$ is a finite set of resources, $\ell_j:\RPP\rightarrow\RPP$ is the (non-decreasing and positive) latency function of resource $j\in R$; moreover, given $i\in\N$, $r_i\in \RP$ is the amount of players of type $i$ and ${\Sigma}_i \subseteq R$ is the set of strategies of every player of type $i$. 

Given a class $\mathcal{C}$ of latency functions, 
let ${\sf NLB}(\mathcal{C})$ be the class of non-atomic load balancing games, and ${\sf SNLB}(\mathcal{C})$ be the class of symmetric non-atomic load balancing games, all having latency functions in the class $\mathcal{C}$. 

\smallskip
\noindent\textbf{Strategy Profiles and Cost Functions.} In atomic load balancing games, a {\em strategy profile} is an $n$-tuple ${\bm\sigma}=(\sigma_1,\ldots,\sigma_n)$, where $\sigma_i\in {\Sigma}_i$ is the resource chosen by each player $i\in \N$ in $\sg$. Given a strategy profile $\sg$, let $k_j(\sg):=\sum_{i\in \N:\sigma_i=j}w_i$ be the {\em congestion} of resource $j\in R$ in $\sg$, and let 
$cost_i(\sg):=\ell_{\sigma_i}(k_{\sigma_i}(\sg))$  be the {\em cost} of player $i\in \N$ in $\sg$. 

In non-atomic load balancing games, a {\em strategy profile} is an $n$-tuple  $\nsg=(\Delta_1,\ldots,\Delta_n)$, where $\Delta_i:{\Sigma}_i\rightarrow \RP$ is a function denoting, for each resource $j\in {\Sigma}_i$, the amount $\Delta_i(j)$ of players of type $i$ selecting resource $j$, so that $\sum_{j\in {\Sigma}_i}\Delta_i(j)=r_i$. 
Observe that $\Delta_i(j)=0$ if $j\notin{\Sigma}_i$.  
For a strategy profile $\bm\Delta$, the congestion of resource $j\in R$ in $\bm\Delta$, denoted as $k_j(\nsg):=\sum_{i\in\N}\Delta_i(j)$, is the total amount of players using resource $j$ in $\bm\Delta$ and its cost is given by $cost_{j}(\nsg)=\ell_j(k_j(\nsg))$.
The cost of a player of type $i$ selecting a resource $j \in {\Sigma}_i$ is equal to  $cost_{j}(\nsg)$ and each player aims at minimizing it.

\smallskip
\noindent\textbf{Nash Social Welfare.} In atomic load balancing games, the {\em  Nash Social Welfare (NSW)} of a strategy profile $\sg$ is defined as:
$
{\sf NSW}(\sg):=\left({\prod_{i\in \N}cost_i(\bm \sigma)^{w_i}}\right)^{\frac{1}{\sum_{i\in \N}w_i}}.
$
Using the previous definition, for unweighted games we get ${\sf NSW}(\sg)=\left({\prod_{i\in \N}cost_i(\bm\sigma)}\right)^{\frac{1}{n}}$. 
Given a strategy profile $\sg$, let $R(\sg):=\{j\in R: k_j(\sg)>0\}$. For weighted load balancing games we get:
${\sf NSW}(\sg)=\left({\prod_{i\in \N}cost_i(\sg)}\right)^{\frac{1}{\sum_{i\in \N}w_i}}=\left({\prod_{j\in R(\sg)}\ell_j(k_{j}(\sg))^{k_{j}(\sg)}}\right)^{\frac{1}{\sum_{i\in \N}w_i}}=\left({\prod_{j\in R(\sg)}\ell_j(k_{j}(\sg))^{k_{j}(\sg)}}\right)^{\frac{1}{\sum_{j\in R(\sg)}k_j(\sg)}}$.

Let ${\sf SP}(\CG)$ be the set of strategy profiles of an atomic load balancing game $\CG$. An optimal strategy profile $\sg^*(\CG)$ of a load balancing game $\CG$ is a strategy profile $\sg^*\in \arg\min_{\sg\in {\sf SP}(\CG)}{\sf NSW}(\sg)$, i.e., a strategy profile minimizing the NSW.

Analogously, for the non-atomic setting, we have
${\sf NSW}(\nsg)=\left({\prod_{j\in R(\nsg)}cost_j(\nsg)^{k_{j}(\nsg)}}\right)^{\frac{1}{\sum_{j\in R(\nsg)}k_j(\nsg)}},$
where $R(\nsg):=\{j\in R: k_j(\nsg)>0\}$.
Let ${\sf SP}(\NCG)$ be the set of strategy profiles of a non-atomic load balancing game $\NCG$. An optimal strategy profile $\nsg^*(\NCG)$ of a load balancing game $\NCG$ is a strategy profile $\nsg^*\in \arg\min_{\nsg\in {\sf SP}(\NCG)}{\sf NSW}(\nsg)$, i.e., a strategy profile minimizing the NSW.

\smallskip
\noindent\textbf{Pure Nash Equilibria and their Efficiency.} In the atomic setting, for a given strategy profile $\sg$, let $(\sg_{-i},\sigma_i'):=(\sigma_1,\sigma_2,\ldots, \sigma_{i-1},\sigma_i',\sigma_{i+1},\ldots, \sigma_n)$, i.e., a strategy profile equal to $\sg$, except for strategy $\sigma_i'$. A {\em pure Nash equilibrium} is a strategy profile $\sg$ such that $cost_i(\sg)\leq cost_i(\sg_{-i},\sigma_i')$ for any $\sigma_i'\in {\Sigma}_i$ and $i\in \N$, i.e., a strategy profile in which no player can improve her cost by unilateral deviations. 
Let ${\sf PNE}(\CG)$ be the set of pure Nash equilibria of a load balancing game $\CG$. The {\em Nash price of anarchy} of $\CG$ is defined as:
$
{\sf NPoA}(\CG)=\sup_{\sg\in {\sf PNE}(\CG)}\frac{{\sf NSW}(\sg)}{{\sf NSW}(\sg^*(\CG))}
$
Given a class $\mathcal{G}$ of load balancing games, the {\em Nash price of anarchy} of $\mathcal{G}$ is defined as ${\sf NPoA}(\mathcal{G})=\sup_{\CG\in\mathcal{G}}{\sf NPoA}(\CG)$. In the non-atomic setting, a {\em pure Nash equilibrium} is a strategy profile $\nsg$ such that, for any player type $i\in\N$, resources $j, j'\in{\Sigma}_i$ such that $\Delta_i(j)>0$, $cost_{j}(\nsg)\leq cost_{j'}(\nsg)$ holds, that is, an outcome of the game in which no player can improve her situation by unilaterally deviating to another strategy. The {\em Nash price of anarchy} of a non-atomic game $\NCG$ (denoted as ${\sf NPoA}(\NCG)$) is defined as in the atomic setting, and again, given a class $\mathcal{G}$ of non-atomic load balancing games, the {\em Nash price of anarchy} of $\mathcal{G}$ is defined as ${\sf NPoA}(\mathcal{G})=\sup_{\NCG\in\mathcal{G}}{\sf NPoA}(\NCG)$.

\subsection{Online Load Balancing}
We now introduce online load balancing. There is a natural correspondence between a load balancing game and an instance of the online load balancing problem. When dealing with the online setting, as usual in the literature, we adopt a different nomenclature. In particular, an instance $\I$ of the online load balancing problem is a tuple $\I=\left(\N,R,(\ell_j)_{j\in R},(w_i)_{i\in\N},({\Sigma}_i)_{i\in\N}\right),$ where $\N=[n]$ is a set of $n\geq 1$ \emph{clients}, $R$ is a finite set of resources, $\ell_j:\RPP\rightarrow\RPP$ is the (non-decreasing and positive) latency function of resource $j\in R$, and, for each $i\in\N$, $w_i>0$ is the weight of client $i$ and ${\Sigma}_i\subseteq R$ (with ${\Sigma}_i\neq\emptyset$) is her set of \emph{admissible resources}. Furthermore, in the online setting an assignment of clients to resources is called state: A {\em state} is an $n$-tuple ${\bm\sigma}=(\sigma_1,\ldots,\sigma_n)$, where $\sigma_i\in {\Sigma}_i \subseteq R$ is the resource assigned to player $i\in \N$ in $\sg$. As in load balancing games, given a class of latency latency functions $\mathcal{C}$, let ${\sf WLB}(\mathcal{C})$ denote class of load balancing instances with latency functions in $\mathcal{C}$. 

The NSW of a state and the optimal state are defined analogously to the selfish load balancing setting.

\smallskip
\noindent\textbf{The online setting.} In \emph{online load balancing}, clients appear in online fashion, in consecutive \emph{steps}; when a client appears, an irrevocable decision has to be taken in order to assign it to a resource. We assume w.l.o.g. that clients appear in increasing order, i.e., client $i\in [n]$ appears before client $j \in [n]$ if and only if $i<j$.
More formally, for any $i \in [n]$, an online algorithm has to assign client $i$ to a resource being admissible for it without the knowledge of the future clients $i+1,i+2,\ldots$; the assignment of client $i$ decided by the algorithm at step $i$ cannot be modified at later steps.

Notice that at each step $i>1$ a new instance is obtained by adding client $i$ to the instance of step $i-1$.

\smallskip
\noindent\textbf{Competitive Ratio.} Following the standard performance
measure in competitive analysis, we evaluate the performance of an online
algorithm in terms of its {\em competitiveness} (or {\em competitive
ratio}). 

An online algorithm $\A$ is \emph{$c$-competitive} on instance $\I$ if the following holds: Let $\sigma$ and $\sigma^*$ be the state computed by algorithm $\A$ and the optimal state for $\I$, respectively. Then, 
${\sf NSW}(\sigma) \leq c \cdot {\sf NSW}(\sigma^*)$.
The competitive ratio $\CR_\A(\I)$ of algorithm $\A$ on instance $\I$ is the smallest $c$ such that $\A$ is $c$-competitive on $\I$~\cite{BoE98}.

Given a class $\mathcal{I}$ of load balancing instances, the competitive ratio $\CR_\A(\mathcal{I})$  of Algorithm $\A$ on $\mathcal{I}$ is simply given by the maximum competitive ratio of  $\A$  over all instances $\I \in \mathcal{I}$,i.e., $\CR_\A(\mathcal{I})=\sup_{\I \in \mathcal{I}}{\CR_\A(\I)}$.

\smallskip
\noindent\textbf{Greedy algorithm.} A natural algorithm proposed in \cite{AAG+95} for this problem is to assign each client to the resource yielding the minimum increase to the social welfare (ties are broken arbitrarily). This results to {\em greedy assignments}. 
Therefore, given an instance of online load balancing, an assignment of clients to resources is called a greedy assignment if the assignment of a client to a resource minimizes the total cost of the instance revealed up to the time of its appearance.

\section{Selfish Load Balancing}\label{sec_selfishLB}
In this section we focus on selfish load balancing. 
In particular, in Subsection \ref{subsec_nash_weighted} we deal with the analysis of the price of anarchy in weighted load balancing games, in Subsection \ref{subsec_nash_unweighted} we consider the subclass of unweighted load balancing games, while in Subsection \ref{subsec_nash_nonatomic} we analyze the price of anarchy of non-atomic load balancing games. 

\subsection{The {\sf NPoA} for Weighted Load Balancing Games}\label{subsec_nash_weighted}
We first provide an upper bound to the Nash price of anarchy of weighted load balancing games.
\begin{theorem}\label{thm_w_upp}
Let $\mathcal{C}$ be a class of latency functions. The Nash price of anarchy of weighted load balancing games with latency functions in $\mathcal{C}$ is 
$
{\sf NPoA}({\sf WLB}(\mathcal{C}))\leq \sup_{k_1\geq o_1>  0,o_2>k_2\geq 0,f_1,f_2\in \mathcal{C}}\left(\frac{f_1(k_1+o_1)}{f_1(o_1)}\right)^{\frac{(o_2-k_2)o_1}{k_1 o_2-k_2 o_1}}\left(\frac{f_2(k_2+o_2)}{f_2(o_2)}\right)^{\frac{(k_1-o_1)o_2}{k_1 o_2-k_2 o_1}}.
$
\end{theorem}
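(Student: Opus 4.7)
The plan is to combine the multiplicative form of the Nash condition with a careful two-group decomposition of the resources, then identify the resulting aggregate expression with the parametric form appearing in the theorem's sup.

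First I would fix a game $\CG \in {\sf WLB}(\mathcal{C})$ together with a pure Nash equilibrium $\sg$ and an optimum $\sg^*$, and write $k_j := k_j(\sg)$, $o_j := k_j(\sg^*)$, and $W := \sum_{i\in \N} w_i = \sum_j k_j = \sum_j o_j$. The Nash condition gives, for every player $i$, $\ell_{\sigma_i}(k_{\sigma_i}) \leq \ell_{\sigma_i^*}(k_{\sigma_i^*} + w_i)$ (trivially when $\sigma_i = \sigma_i^*$, by monotonicity of the latencies). Taking logarithms, multiplying by $w_i$, summing over all players, and regrouping the right-hand side by the value of $\sigma_i^*$ (using $w_i \leq o_{\sigma_i^*}$ and the monotonicity of $\ell_j$ to bound each contribution) yields
\[
\sum_j k_j \ln \ell_j(k_j) \;\leq\; \sum_j o_j \ln \ell_j(k_j+o_j).
\]
Subtracting $\sum_j o_j \ln \ell_j(o_j) = W \ln {\sf NSW}(\sg^*)$ and dividing by $W$ produces the master inequality
\[
\ln \frac{{\sf NSW}(\sg)}{{\sf NSW}(\sg^*)} \;\leq\; \frac{1}{W}\sum_{j} o_j \ln \frac{\ell_j(k_j+o_j)}{\ell_j(o_j)}.
\]

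Next I would split the resources into $R^+ = \{j : k_j \geq o_j\}$ and $R^- = \{j : k_j < o_j\}$, and set $A := \sum_{j\in R^+} o_j$, $B := \sum_{j\in R^-} o_j$, so that $A+B = W$. On each group the weighted sum of $\ln(\ell_j(k_j+o_j)/\ell_j(o_j))$ is upper bounded by the aggregate weight times the per-group maximum, attained by some representatives $(k_1, o_1, f_1)\in R^+$ and $(k_2, o_2, f_2)\in R^-$ satisfying $k_1\geq o_1 > 0$ and $o_2 > k_2 \geq 0$. Exponentiating, this gives the intermediate bound
\[
\frac{{\sf NSW}(\sg)}{{\sf NSW}(\sg^*)} \;\leq\; \left(\frac{f_1(k_1+o_1)}{f_1(o_1)}\right)^{A/W}\left(\frac{f_2(k_2+o_2)}{f_2(o_2)}\right)^{B/W},
\]
whose base ratios and admissibility constraints already match those of the sup in the theorem.

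The hard part—and the main obstacle of the proof—is to convert the intermediate exponents $A/W$ and $B/W$ into the parametric exponents $\lambda_1 = (o_2-k_2)o_1/(k_1 o_2 - k_2 o_1)$ and $\lambda_2 = (k_1-o_1)o_2/(k_1 o_2 - k_2 o_1)$ of the theorem. One first verifies $\lambda_1 + \lambda_2 = 1$ and $\lambda_1, \lambda_2\geq 0$, so the theorem's expression is itself a valid weighted geometric mean of the same two base ratios; moreover, in the two-resource ``conservation'' case $k_1 - o_1 = o_2 - k_2$ one computes $\lambda_1 = o_1/(o_1+o_2) = A/W$, so the two bounds coincide. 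For a general multi-resource instance one must argue that the sup, taken over all admissible $(k_1,o_1,k_2,o_2,f_1,f_2)$, still dominates the intermediate bound: this is done by carefully choosing the representative parameters along the level sets of the base ratios (exploiting the freedom in the sup) and invoking the instance-wide conservation identity $\sum_{j\in R^+}(k_j-o_j) = \sum_{j\in R^-}(o_j-k_j)$ to force $\lambda_1 = A/W$ at the chosen point, effectively reducing the analysis to a two-resource worst case. This final reconciliation between aggregate quantities of the instance and the per-resource parametric form is where the technical depth of the theorem resides; the rest of the argument is a direct application of the Nash condition aggregated in log-space.
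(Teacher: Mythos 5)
Your first half is sound and coincides with the paper's argument: the Nash condition $cost_i(\sg)\leq cost_i(\sg_{-i},\sigma_i^*)$, aggregated multiplicatively with exponents $w_i$ and regrouped by resource (using $w_i\leq o_{\sigma_i^*}$ and monotonicity), gives exactly the master inequality $\ln\frac{{\sf NSW}(\sg)}{{\sf NSW}(\sg^*)}\leq \frac{1}{W}\sum_j o_j\ln\frac{\ell_j(k_j+o_j)}{\ell_j(o_j)}$, which is the paper's inequality (2).

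The second half has a genuine gap. After splitting into $R^+$ and $R^-$ and taking per-group maxima, you obtain $r_1^{A/W}r_2^{B/W}$ where the representatives have \emph{fixed} parameters $(k_1,o_1)$ and $(k_2,o_2)$ coming from the instance; the exponents $\lambda_1,\lambda_2$ that the theorem's expression assigns to \emph{those same} parameters are determined by them and in general differ from $A/W$ and $B/W$. Since $\lambda_1+\lambda_2=A/W+B/W=1$, whenever $\lambda_1<A/W$ and $r_1>r_2$ your intermediate bound strictly exceeds the value of the theorem's expression at that point, and you would have to exhibit some \emph{other} admissible point of the sup that dominates it. Your proposed fix --- ``choosing the representative parameters along the level sets of the base ratios'' and invoking $\sum_{R^+}(k_j-o_j)=\sum_{R^-}(o_j-k_j)$ to force $\lambda_1=A/W$ --- is an assertion, not an argument: the base ratios and the exponents are both functions of the same quadruple $(k_1,o_1,k_2,o_2)$ and cannot be tuned independently, and since $\mathcal{C}$ is an arbitrary class you have no freedom to rescale the latencies either. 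The conservation identity constrains aggregate congestions, not the parameters of the two per-group maximizers. The paper closes exactly this step differently: it upper bounds $\frac{\sum_j o_j(\ln\ell_j(k_j+o_j)-\ln\ell_j(o_j))}{\sum_j k_j}$ by a linear program over reweightings $(\alpha_j)_{j\in R}$ subject to $\sum_j\alpha_jk_j=\sum_j\alpha_jo_j=1$ (the instance being the feasible point $\alpha_j\equiv$ const), and then uses the fact that a vertex of this polytope has at most two nonzero coordinates; solving the two active constraints yields $\alpha_1=\frac{o_2-k_2}{k_1o_2-k_2o_1}$ and $\alpha_2=\frac{k_1-o_1}{k_1o_2-k_2o_1}$, which is precisely where the exponents $\lambda_1,\lambda_2$ come from. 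You would need to replace your reconciliation step with an argument of this type (or prove directly that the weighted geometric mean over all resources is dominated by a two-resource configuration with the stated exponents).
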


\begin{proof}
Let $\CG\in {\sf WLB}(\mathcal{C})$ be a weighted load balancing game with latency functions in $\mathcal{C}$, and let $\sg$ and $\sg^*$ be a worst case pure Nash equilibrium and an optimal strategy profile of $\CG$, respectively. Let $k_j$ denote $k_j(\sg)$ and $o_j$ denote $k_j(\sg^*)$. 

Since $\sg$ is a pure Nash equilibrium, we have that $cost_i(\sg)\leq cost_i(\sg_{-i},\sigma_i^*)$. Thus, we get 
$
\prod_{i\in \N}cost_i(\sg)^{w_i}\leq \prod_{i\in \N}cost_i(\sg_{-i},\sigma_i^*)^{w_i}.
$
Since $cost_i(\sg)=\ell_{\sigma_i}(k_{\sigma_i})$ and $cost_i(\sg_{-i},\sigma_i^*)\leq \ell_{\sigma_i^*}(k_{\sigma_{i}^*}+w_i)$, it holds that
$
\prod_{i\in \N}cost_i(\sg)^{w_i}=\prod_{i\in \N}\ell_{\sigma_i}(k_{\sigma_i})^{w_i}=\prod_{j\in R(\sg)}\ell_{j}(k_{j})^{\sum_{i:j=\sigma_i}w_i}=\prod_{j\in R(\sg)}\ell_{j}(k_{j})^{k_j}$
and
$\prod_{i\in \N}cost_i(\sg_{-i},\sigma_i^*)^{w_i}\leq  \prod_{i\in \N}\ell_{\sigma_i^*}(k_{\sigma_i^*}+w_i)^{w_i}\leq \prod_{i\in \N}\ell_{\sigma_i^*}(k_{\sigma_i^*}+o_{\sigma_i^*})^{w_i}=\prod_{j\in R(\sg^*)}\ell_{j}(k_{j}+o_j)^{\sum_{i:j=\sigma_i^*}w_i}=\prod_{j\in R(\sg^*)}\ell_{j}(k_{j}+o_j)^{o_j}.
$
By putting together the above inequalities we get
\begin{align}
\prod_{j\in R(\sg)}\ell_{j}(k_{j})^{k_j}&= \prod_{i\in \N}cost_i(\sg)^{w_i}\leq \prod_{i\in \N}cost_i(\sg_{-i},\sigma_i^*)^{w_i}\leq \prod_{j\in R(\sg^*)}\ell_{j}(k_{j}+o_j)^{o_j}.\label{w_form_nash} 
\end{align}
By exploiting the properties of the logarithmic function and by using (\ref{w_form_nash}), we obtain\small
\begin{align}
&\ln\left({\sf NPoA}(\CG)\right)=\ln\left(\frac{\left({\prod_{j\in R(\sg)}\ell_j(k_{j})^{k_{j}}}\right)^{\frac{1}{\sum_{i\in N}w_i}}}{\left({\prod_{j\in R(\sg^*)}\ell_j(o_{j})^{o_{j}}}\right)^{\frac{1}{\sum_{i\in N}w_i}}}\right)\nonumber\\
&\leq \ln\left(\frac{\left({\prod_{j\in R(\sg^*)}\ell_j(k_{j}+o_j)^{o_{j}}}\right)^{\frac{1}{\sum_{i\in N}w_i}}}{\left({\prod_{j\in R(\sg^*)}\ell_j(o_{j})^{o_{j}}}\right)^{\frac{1}{\sum_{i\in N}w_i}}}\right)=\frac{\sum_{j\in R(\sg^*)}o_j\left(\ln(\ell_j(k_j+o_j))-\ln(\ell_j(o_j))\right)}{\sum_{i\in N}w_i},\label{w_form_4b}
\end{align}
\normalsize
Since $\sum_{i\in \N}w_i=\sum_{j\in R}k_j=\sum_{j\in R}o_j$, we have that (\ref{w_form_4b}) is upper bounded by the optimal solution of the following optimization problem on some new linear variables $(\alpha_j)_{j\in R}$ (as (\ref{w_form_4b}) is the solution obtained by setting $\alpha=1$ for each $j\in R$):\small
\begin{align}
\max \quad & \frac{\sum_{j\in R(\sg^*)}\alpha_j o_j\left(\ln(\ell_j(k_j+o_j))-\ln(\ell_j(o_j))\right)}{\sum_{j\in R}\alpha_j k_j}\label{w_form_5b}\\
\text{s.t.} \quad &\sum_{j\in R}\alpha_j k_j=\sum_{j\in R}\alpha_j o_j,\quad  \alpha_j\geq 0\ \forall j\in R.\nonumber
\end{align}
\normalsize
\begin{fact}\label{prel_lem_w}
The maximum value of the optimization problem considered in (\ref{w_form_5b}) is at most 
$
\sup_{\substack{k_1\geq  o_1> 0,\\o_2>k_2\geq 0,\\f_1,f_2\in \mathcal{C}}}\frac{(o_2-k_2)o_1\left(\ln(f_1(k_1+o_1))-\ln(f_1(o_1))\right)+(k_1-o_1)o_2\left(\ln(f_2(k_2+o_2))- \ln(f_2(o_2))\right)}{k_1 o_2-k_2 o_1}.
$
\end{fact}
By Fact \ref{prel_lem_w}, and by continuing from (\ref{w_form_4b}), we have that the upper bound provided in Fact \ref{prel_lem_w} is higher or equal than $\ln({\sf NPoA}({\sf LB}))$. Thus, by exponentiating such inequality, we get
$
{\sf NPoA}(\CG)\leq \sup_{\substack{k_1\geq o_1> 0,\\o_2>k_2\geq 0,\\f_1,f_2\in \mathcal{C}}}\left(\frac{f_1(k_1+o_1)}{f_1(o_1)}\right)^{\frac{(o_2-k_2)o_1}{k_1 o_2-k_2 o_1}}\left(\frac{f_2(k_2+o_2)}{f_2(o_2)}\right)^{\frac{(k_1-o_1)o_2}{k_1 o_2-k_2 o_1}}.
$
Hence, by the arbitrariness of $\CG\in {\sf WLB(\mathcal{C})}$, the claim follows. \qed
\end{proof}
In the following theorem we show that the upper bound derived in Theorem \ref{thm_w_upp} is tight under mild assumptions on the latency functions.

\begin{theorem}\label{thm_w_low}
Let $\mathcal{C}$ be a class of latency functions. 

(i) If $\mathcal{C}$ is abscissa-scaling and ordinate-scaling, then 
$
{\sf NPoA}({\sf WLB}(\mathcal{C}))\geq  \sup_{k_1\geq o_1> 0,o_2>k_2\geq 0,f_1,f_2\in \mathcal{C}}\left(\frac{f_1(k_1+o_1)}{f_1(o_1)}\right)^{\frac{(o_2-k_2)o_1}{k_1 o_2-k_2 o_1}}\left(\frac{f_2(k_2+o_2)}{f_2(o_2)}\right)^{\frac{(k_1-o_1)o_2}{k_1 o_2-k_2 o_1}}.
$

(ii) If $\mathcal{C}$ is abscissa-scaling, ordinate-scaling, and unbounded-including, the previous inequality holds even for symmetric weighted load balancing games. 
\end{theorem}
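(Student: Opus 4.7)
For every $\varepsilon>0$, I would exhibit a game in ${\sf WLB}(\mathcal{C})$ (and, in part (ii), in ${\sf SWLB}(\mathcal{C})$) whose Nash price of anarchy is within $\varepsilon$ of the right-hand side. Fix an approximate maximizer $(k_1,o_1,k_2,o_2,f_1,f_2)$ of the supremum and positive integers $N,M$ satisfying $N(k_1-o_1)=M(o_2-k_2)$ (rational parameters give this after clearing denominators; irrational ones are handled by a continuity approximation). The game I would build has $N$ copies of a ``type-1'' resource carrying a latency $\ell_1 \in \mathcal{C}$ obtained from $f_1$ by ordinate- and abscissa-scaling, and $M$ copies of a ``type-2'' resource with $\ell_2 \in \mathcal{C}$ obtained from $f_2$ analogously; both operations preserve membership in $\mathcal{C}$ by hypothesis.

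Populate the game with two families of players. ``A-players'' of weight $o_1$ are in bijection with the type-1 resources: each is alone on her designated type-1 resource in the candidate optimum $\sg^*$. ``B-players'' of weight $o_2$ play the analogous role for the type-2 resources. In the designated Nash profile $\sg$, A-players migrate to type-2 resources and B-players migrate to type-1 resources, in such a way that each type-1 attains congestion $k_1$ and each type-2 attains $k_2$; the conservation identity $N(k_1-o_1)=M(o_2-k_2)$ is precisely the feasibility condition. If $k_1 k_2 \neq o_1 o_2$, a small set of ``filler'' players of vanishing weight tunes the congestions exactly, with negligible contribution to the NSW ratio in the scaling limit. Choose the scaling parameters in $\ell_1,\ell_2$ so that the two binding Nash inequalities $\ell_1(k_1) \leq \ell_2(k_2+o_2)$ (for a B-player currently on a type-1 resource) and $\ell_2(k_2) \leq \ell_1(k_1+o_1)$ (for an A-player currently on a type-2 resource) hold (tight in the limit); an explicit computation of ${\sf NSW}(\sg)/{\sf NSW}(\sg^*)$ using $\alpha+\beta=1$ together with the conservation identity then reproduces the target expression up to an $O(\varepsilon)$ error.

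For part (ii), broaden every player's strategy set to the full resource set $R$ and append auxiliary ``penalty'' resources, each equipped with a latency from $\mathcal{C}$ that, by the unbounded-including hypothesis, diverges to $\infty$ at positive congestion and makes deviation to it strictly dominated; the equilibrium and optimum of part (i) therefore survive verbatim. The main obstacle throughout is the coupled calibration of $\ell_1$ and $\ell_2$: ordinate- and abscissa-scaling together afford just enough flexibility to make the two Nash inequalities tight while reproducing the target formula, and one must pick the approximate maximizer carefully so that the residual slack (from tightness, from the fillers, and from rational approximation of $k_1,o_1,k_2,o_2$) is absorbed into $\varepsilon$. Verifying that the remaining unilateral deviations (to same-type copies or, in part (ii), to the penalty resources) are non-profitable is then a routine monotonicity check.
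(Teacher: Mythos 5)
There is a genuine gap, and it sits exactly where you flag the ``main obstacle'': the coupled calibration of $\ell_1$ and $\ell_2$ is not merely delicate, it is impossible in your two-resource-type architecture. For your NSW ratio to approach the target, both Nash inequalities must become tight, i.e. $\ell_1(k_1)=\ell_2(k_2+o_2)$ and $\ell_2(k_2)=\ell_1(k_1+o_1)$. Multiplying these two equalities gives $\ell_1(k_1)\,\ell_2(k_2)=\ell_1(k_1+o_1)\,\ell_2(k_2+o_2)$, which, since the latencies are non-decreasing and $o_1,o_2>0$, forces $\ell_1(k_1)=\ell_1(k_1+o_1)$ and $\ell_2(k_2)=\ell_2(k_2+o_2)$ --- i.e. both latencies must be locally constant on the relevant ranges. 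No amount of ordinate/abscissa scaling escapes this, because scaling multiplies both sides of each indifference by the same constants. Concretely, take $f_1=f_2=\mathrm{id}$, $o_1=1,k_1=2,k_2=1,o_2=2$, $N=M=1$, $\ell_i(x)=c_ix$: your ratio is $(c_1/c_2)^{1/3}$, the two Nash constraints give $1/3\leq c_1/c_2\leq 3/2$, so the best you achieve is $(3/2)^{1/3}\approx 1.14$, while the target value at these parameters is $3^{1/3}(3/2)^{2/3}\approx 1.89$. A second symptom of the same problem: your equilibrium NSW carries the factor $\ell_1(k_1)^{Nk_1}\ell_2(k_2)^{Mk_2}$ (equilibrium congestions in the exponents), whereas the target requires $\ell_1(k_1+o_1)^{No_1}\ell_2(k_2+o_2)^{Mo_2}$; a single bipartite swap gives no mechanism to convert one into the other. (The construction also degenerates at the polynomial optimum $k_1=o_1$, $k_2=0$, where the conservation identity forces $M=0$ yet the $A$-players still need somewhere to go.)

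The paper avoids the two-cycle of indifferences by using a one-directional chain: $2m$ groups of resources $R_1,\dots,R_{2m}$, with $|R_j|=s^{j-1}$ copies, latencies $\alpha_j\hat f_j(\beta_j x)$ with geometrically chosen $\alpha_j,\beta_j$, and player groups $\N_j$ of weight $w_j=1/\beta_{j+1}$ whose equilibrium resource is at level $j$ and whose optimal resource is at level $j+1$. Each player then has exactly \emph{one} binding indifference (stay at level $j$ versus move to level $j+1$), which can always be made exact by choosing $\alpha_{j+1}$, with the first $m-1$ levels governed by $f$ at congestions $(k,k+1,1)$ and the last $m$ levels by $g$ at $(h,h+1,1)$; the geometric blow-up $s^{j}$ of group sizes produces precisely the exponents $\frac{1-h}{k-h}$ and $\frac{k-1}{k-h}$ in the limit $m\to\infty$, and the boundary terms of the telescoping product vanish. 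Deviations by more than one level are killed either by restricting strategy sets to $R_j\cup R_{j+1}$ (part (i)) or, for the symmetric case (part (ii)), by taking $s$ large and invoking the unbounded-including hypothesis so that jumping far down the chain incurs unboundedly large congestion-scaled arguments. If you want to salvage your approach, you would have to replace the single $A\leftrightarrow B$ swap by such a chain (or some other acyclic deviation structure); the rest of your bookkeeping (conservation identity, fillers, rational approximation) is reasonable but does not address this core obstruction.
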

\begin{proof}[Sketch of the proof]
We show part (ii) of the claim only (the proof of part (i) resorts to similar arguments and is left to the appendix). Let us assume that $\mathcal{C}$ is abscissa-scaling, ordinate-scaling, and unbounded-including. In order to prove part (ii), we equivalently show that for any $M<\sup_{k_1\geq o_1> 0,o_2>k_2\geq 0,f_1,f_2\in \mathcal{C}}\left(\frac{f_1(k_1+o_1)}{f_1(o_1)}\right)^{\frac{(o_2-k_2)o_1}{k_1 o_2-k_2 o_1}}\left(\frac{f_2(k_2+o_2)}{f_2(o_2)}\right)^{\frac{(k_1-o_1)o_2}{k_1 o_2-k_2 o_1}}$ there exists a game $\CG\in {\sf WLB}(\mathcal{C})$ such that ${\sf NPoA}(\CG)>M$. 

Let $f_1,f_2\in\mathcal{C}$, $k_1,k_2,o_1,o_2\geq 0$ such that $k_1\geq o_1>0,o_2>k_2\geq 0$, and a sufficiently small $\epsilon>0$ such that $\left(\frac{f_1(k_1+o_1)}{f_1(o_1)}\right)^{\frac{(o_2-k_2)o_1}{k_1 o_2-k_2 o_1}}\left(\frac{f_2(k_2+o_2)}{f_2(o_2)}\right)^{\frac{(k_1-o_1)o_2}{k_1 o_2-k_2 o_1}}> M+\epsilon.$ 
Let $f,g\in\mathcal{C}$ be such that $f(x):=f_1(o_1x)$ and $g(x):=f_2(o_2x)$,  and let $k:=k_1/o_1$ and $h:=k_2/o_2$. Since $\left(\frac{f_1(k_1+o_1)}{f_1(o_1)}\right)^{\frac{(o_2-k_2)o_1}{k_1 o_2-k_2 o_1}}\left(\frac{f_2(k_2+o_2)}{f_2(o_2)}\right)^{\frac{(k_1-o_1)o_2}{k_1 o_2-k_2 o_1}}=\left(\frac{f(k+1)}{f(1)}\right)^{\frac{1-h}{k-h}}\left(\frac{g(h+1)}{g(1)}\right)^{\frac{k-1}{k-h}}$ we have that 
\begin{equation}\label{w_low_form_0}
\left(\frac{f(k+1)}{f(1)}\right)^{\frac{1-h}{k-h}}\left(\frac{g(h+1)}{g(1)}\right)^{\frac{k-1}{k-h}}> M+\epsilon\text{, for some } f,g\in\mathcal{C},\ k\geq 1,\text{ and }h<1.
\end{equation}
Observe that $f$ and $g$ can be chosen in such a way that they are non-constant functions. Indeed, if one of them is constant, it is sufficient replacing it with an arbitrary non-constant function, so that  (\ref{w_low_form_0}) holds as well. Since $\mathcal{C}$ is unbounded-including and $f,g$ are non-constant, we have that $\lim_{x\rightarrow \infty}f(x)=\lim_{x\rightarrow \infty}g(x)=\infty$. 

We consider the case $h>0$ only (the case $h=0$ is analogue and is left to the appendix). Given two integers $m\geq 3$ and $s\geq 1$, let $\CG(m,s)$ be a symmetric weighted load balancing game where the resources are partitioned into $2m$ groups $R_1,R_2,R_3\ldots, R_{2m}$. Each group $R_j$ has $s^{j-1}$ resources and the latency function of each resource $r\in R_{j}$ is defined as $\ell_r(x):=\alpha_j \hat{f}_j\left(\beta_j x\right)$ with 
\begin{align}
&\hat{f}_j:=
\begin{cases}
f & \text{ if }j\leq m-1\\
g & \text{ if }j\geq m
\end{cases},\quad 
\beta_j:=
\begin{cases}
\left(\frac{s}{k}\right)^{j-1} & \text{ if }j\leq m-1\\
\left(\frac{s}{h}\right)^{j-m}\left(\frac{s}{k}\right)^{m-1} & \text{ if }m\leq j\leq 2m
\end{cases},\\
&\alpha_j:=
\begin{cases}
\left(\frac{f(k)}{f(k+1)}\right)^{j-1} & \text{ if }j\leq m-1\\
\left(\frac{g(h)}{g(h+1)}\right)^{j-m}\left(\frac{f(k)}{g(h+1)}\right)\left(\frac{f(k)}{f(k+1)}\right)^{m-2} & \text{ if }m\leq j\leq 2m-1\\
\frac{g(h)}{g(1)}\left(\frac{g(h)}{g(h+1)}\right)^{m-1}\left(\frac{f(k)}{g(h+1)}\right)\left(\frac{f(k)}{f(k+1)}\right)^{m-2} & \text{ if }j=2m
\end{cases}.
\end{align}
The set of players $\N$ is partitioned into $2m-1$ sets $\N_1,\N_2,\ldots, \N_{2m-1}$, and each group $\N_j$ has $s^{j}$ players having weight $w_j:=1/\beta_{j+1}$. Let $\sg$ be the strategy profile in which, for any $j\in [2m-1]$, each resource of group $R_{j}$ is selected by exactly $s$ players of group $\N_j$ (see Figure \ref{fig:boat2}.a). One can show that, for any integer $m\geq 3$, there exists a sufficiently large $s_m$ such that $\sg$ is a pure Nash equilibrium of the game $\CG(m,s_m)$  (see the appendix for a complete proof). 

\begin{figure}
 \includegraphics[width=\linewidth]{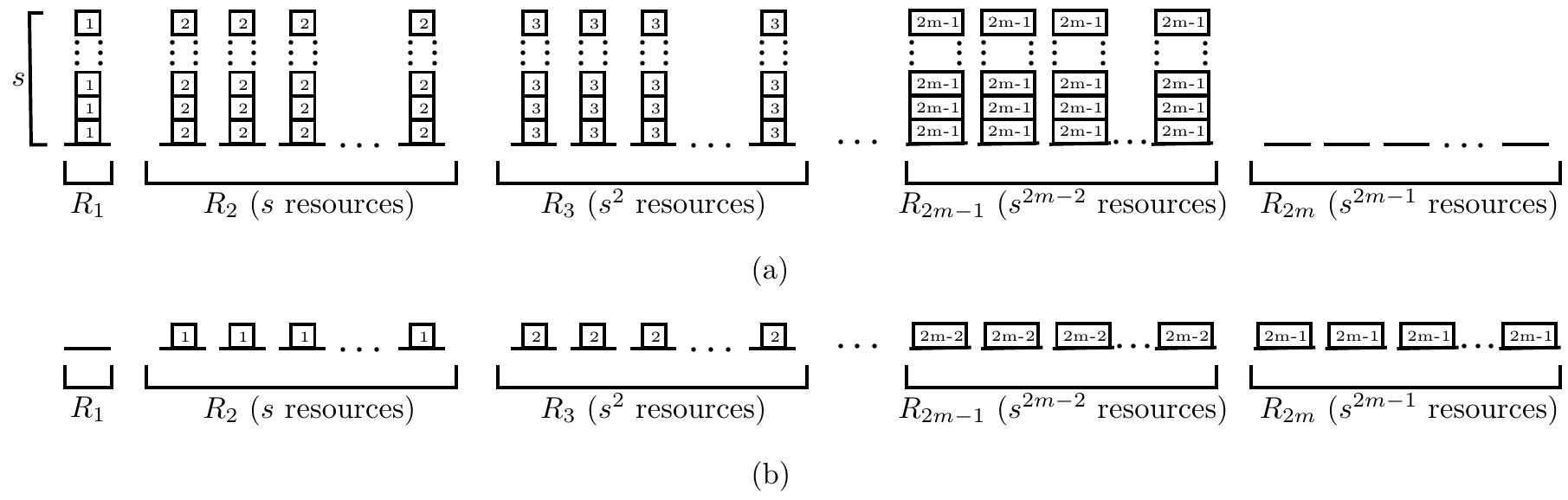}
 \caption{The $\CG$ used in the proof of Theorem \ref{thm_w_low}. Columns represent resources and squares represent players (number $j$ inside a square means that the player belongs to group $N_j$). (a): a Nash equilibrium $\bm\sigma$; (b): the strategy profile ${\bm\sigma}^*$.}\label{fig:boat2}
\end{figure}

Now, let $\sg^*$ be the strategy profile of $\CG(m,s_m)$ in which, for any $j\in [2m-1]$, each resource of group $R_{j+1}$ is selected by exactly one player of group $\N_j$ (see Figure \ref{fig:boat2}.b). By exploiting the definitions of $\alpha_j$,$\beta_j$, $\hat{f}_j$, $w_j$, and $N_j$, and by choosing a sufficiently large $m$, one can show that the following inequalities hold (see the appendix for a complete proof):
$
\frac{{\sf NSW}(\sg)}{{\sf NSW}(\sg^*)}\geq \lim_{m\rightarrow \infty}\left(\frac{\prod_{j=1}^{2m-1}\left(\alpha_j\hat{f}_j\left(\beta_j s_m w_j\right)\right)^{|\N_j|w_j}}{\prod_{j=2}^{2m}\left(\alpha_{j}\hat{f}_j\left(\beta_j w_{j-1}\right)\right)^{|\N_{j-1}|w_{j-1}}}\right)^{\frac{1}{\sum_{j=1}^{2m-1}|\N_j|w_j}}-\epsilon=\left(\frac{f(k+1)}{f(1)}\right)^{\frac{1-h}{k-h}}\left(\frac{g(h+1)}{g(1)}\right)^{\frac{k-1}{k-h}}-\epsilon>M+\epsilon-\epsilon=M,
$
thus showing part (ii) of the claim. \qed
\end{proof}

When considering functions belonging to the class $\mathcal{P}(p)$ of polynomials of maximum degree $p$, the following technical lemma holds.
\begin{lemma}\label{lemma_PoA_weighted_Polynomial latency functions}
$\sup_{\substack{k_1\geq o_1> 0,\\o_2>k_2\geq 0,\\f_1,f_2\in \mathcal{P}(p)}}\left(\frac{f_1(k_1+o_1)}{f_1(o_1)}\right)^{\frac{(o_2-k_2)o_1}{k_1 o_2-k_2 o_1}}\left(\frac{f_2(k_2+o_2)}{f_2(o_2)}\right)^{\frac{(k_1-o_1)o_2}{k_1 o_2-k_2 o_1}}= 2^p.$
\end{lemma}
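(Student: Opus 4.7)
The plan is to establish the two inequalities separately: an upper bound by reducing the double supremum over polynomials to the monomial case $x^p$ and then invoking the weighted AM--GM inequality, and a matching lower bound by exhibiting a single explicit configuration. No calculus is needed; the heart of the argument is a telescoping identity that makes the AM--GM bound tight.

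\textbf{Upper bound via monomial domination and AM--GM.} The first step is to observe that, for any $f \in \mathcal{P}(p)$ and any $b>0$, $a\geq 0$, since $(a+b)/b\geq 1$ and $d\leq p$ implies $((a+b)/b)^{d}\leq ((a+b)/b)^{p}$,
\begin{equation*}
\frac{f(a+b)}{f(b)}=\frac{\sum_{d}\alpha_{d}b^{d}\cdot ((a+b)/b)^{d}}{\sum_{d}\alpha_{d}b^{d}}\leq \left(\frac{a+b}{b}\right)^{p},
\end{equation*}
since the left-hand side is a convex combination of the quantities $((a+b)/b)^{d}$. I would apply this to both factors of the supremum. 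Introducing $x:=k_{1}/o_{1}\geq 1$ and $y:=k_{2}/o_{2}\in[0,1)$, a direct computation shows that the two exponents become $A_{1}:=(1-y)/(x-y)$ and $A_{2}:=(x-1)/(x-y)$, and the problem reduces to proving $(x+1)^{A_{1}}(y+1)^{A_{2}}\leq 2$. Since $A_{1},A_{2}\geq 0$ and $A_{1}+A_{2}=1$, weighted AM--GM gives $(x+1)^{A_{1}}(y+1)^{A_{2}}\leq A_{1}(x+1)+A_{2}(y+1)$, and the key observation is the telescoping identity
\begin{equation*}
A_{1}(x+1)+A_{2}(y+1)=\frac{(1-y)(x+1)+(x-1)(y+1)}{x-y}=\frac{2(x-y)}{x-y}=2.
\end{equation*}
Raising both sides to the $p$-th power then yields the upper bound $2^{p}$.

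\textbf{Lower bound and main obstacle.} For the matching lower bound, the simplest choice is $f_{1}(x)=x^{p}$ (with $f_{2}$ arbitrary in $\mathcal{P}(p)$), $o_{1}=k_{1}=1$, $o_{2}=1$, $k_{2}=0$: the exponent of the first factor then equals $1$, the exponent of the second factor equals $0$, and the expression collapses to $(2^{p}/1)^{1}\cdot 1=2^{p}$. The main obstacle, and essentially the only nontrivial part, is recognizing that the two exponents $A_{1},A_{2}$ are convex weights whose associated arithmetic mean of $x+1$ and $y+1$ equals exactly $2$; once this identity is spotted, the reduction to monomials and the AM--GM step are routine. As a sanity check, the tightness of AM--GM requires $x+1=y+1$, i.e., $x=y$, which is excluded in the interior but attained in the limit at the boundary $x=1$ or $y=1$ of the domain, consistent with the lower bound configuration above.
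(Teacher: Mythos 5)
Your proposal is correct and follows essentially the same route as the paper's proof: both reduce the supremum over $\mathcal{P}(p)$ to the monomial $x^p$ by noting that the ratio $f(a+b)/f(b)$ is a convex combination of the powers $((a+b)/b)^d$, both substitute the normalized variables $x=k_1/o_1$, $y=k_2/o_2$ so that the exponents become convex weights summing to one, and both conclude via the same inequality (your weighted AM--GM is exactly the paper's concavity-of-logarithm step) together with the telescoping identity $A_1(x+1)+A_2(y+1)=2$, with the bound attained at $x=1$, $y=0$. No gaps; the only cosmetic difference is that you state the monomial reduction as an inequality plus an explicit lower-bound witness, whereas the paper phrases it as an equality of suprema.
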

Given Lemma \ref{lemma_PoA_weighted_Polynomial latency functions}, and since the class of polynomial latency functions is ordinate-scaling, abscissa-scaling, and unbounded-including, the following corollary of Theorems \ref{thm_w_upp} and \ref{thm_w_low} establishes the exact Nash price of anarchy for polynomial latency functions.
\begin{corollary}\label{cor1}
The Nash price of anarchy of weighted load balancing games with polynomial latency functions (even for symmetric games) of maximum degree $p$ is ${\sf NPoA}({\sf WLB}(\mathcal{C})) = 2^p$.
\end{corollary}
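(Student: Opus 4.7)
The plan is to derive Corollary \ref{cor1} by combining Theorem \ref{thm_w_upp} (upper bound), Theorem \ref{thm_w_low}(ii) (matching lower bound, holding even for symmetric games), and Lemma \ref{lemma_PoA_weighted_Polynomial latency functions} (evaluation of the relevant supremum at $2^p$). First I would check that the class $\mathcal{P}(p)$ satisfies the three structural hypotheses required by Theorem \ref{thm_w_low}(ii): multiplying a polynomial with nonnegative coefficients and degree at most $p$ by $\alpha \ge 0$, or replacing $x$ by $\alpha x$, preserves both properties, so $\mathcal{P}(p)$ is ordinate- and abscissa-scaling; and every non-constant polynomial with nonnegative coefficients diverges to infinity, so $\mathcal{P}(p)$ is unbounded-including. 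After that, the bulk of the work reduces to proving Lemma \ref{lemma_PoA_weighted_Polynomial latency functions}.

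For the lemma itself, my first step is to reduce the supremum over $f_1,f_2\in\mathcal{P}(p)$ to the monomial $x^p$. For any $f(x)=\sum_{d=0}^{p}\alpha_d x^d$ with $\alpha_d\ge 0$ and $k,o>0$, I observe that
\[
\frac{f(k+o)}{f(o)} \;=\; \sum_{d=0}^{p}\frac{\alpha_d\, o^d}{f(o)}\,\Bigl(1+\tfrac{k}{o}\Bigr)^{d},
\]
i.e.\ a convex combination of the quantities $(1+k/o)^d$. Since $1+k/o\ge 1$, this combination is at most $(1+k/o)^p$, with equality achieved by the monomial $x^p$. Applying this bound to both $f_1$ and $f_2$, and introducing $x:=k_1/o_1\ge 1$ and $y:=k_2/o_2\in[0,1)$, pulling out a common factor $o_1 o_2$ from the exponents in Theorem \ref{thm_w_upp} reduces the problem to bounding
\[
\Bigl((1+x)^{(1-y)/(x-y)}\,(1+y)^{(x-1)/(x-y)}\Bigr)^{p}
\]
over $x\ge 1$ and $y\in[0,1)$.

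The decisive step, which I expect to be the crux, is to recognize the normalization hidden in these exponents. Setting $\lambda:=(1-y)/(x-y)$ and $1-\lambda=(x-1)/(x-y)$, both nonnegative since $y<1\le x$, a direct computation shows $\lambda+(1-\lambda)=1$ and $\lambda x+(1-\lambda)y=1$. Taking logarithms, the inner bracket equals $\lambda\ln(1+x)+(1-\lambda)\ln(1+y)$, which by concavity of $t\mapsto\ln(1+t)$ is at most $\ln\bigl(1+\lambda x+(1-\lambda)y\bigr)=\ln 2$. Exponentiating and raising to the $p$-th power gives the upper bound $2^p$. Without the identification $\lambda x+(1-\lambda)y=1$, the four-parameter optimization over $(k_1,o_1,k_2,o_2)$ would be considerably more painful; with it, the bound becomes a one-line application of Jensen's inequality. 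For tightness, the choice $f_1(t)=f_2(t)=t^p$ with $o_1=k_1=o_2=1$ and $k_2=0$ simultaneously saturates the monomial reduction (since the $f_i$ are already monomials of degree $p$) and the concavity step (since $\lambda=1$), yielding exactly $2^p$.

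To conclude Corollary \ref{cor1}, I combine the pieces: Theorem \ref{thm_w_upp} with Lemma \ref{lemma_PoA_weighted_Polynomial latency functions} gives ${\sf NPoA}({\sf WLB}(\mathcal{P}(p)))\le 2^p$; while Theorem \ref{thm_w_low}(ii), which is applicable because $\mathcal{P}(p)$ enjoys all three structural properties verified above, provides the matching lower bound of $2^p$ already within the subclass of symmetric weighted load balancing games. This yields the equality ${\sf NPoA}({\sf WLB}(\mathcal{P}(p)))=2^p$ in both the general and the symmetric case, as claimed.
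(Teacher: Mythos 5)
Your proposal is correct and follows essentially the same route as the paper: verify that $\mathcal{P}(p)$ is ordinate-scaling, abscissa-scaling, and unbounded-including, invoke Theorems \ref{thm_w_upp} and \ref{thm_w_low}(ii), and evaluate the supremum via Lemma \ref{lemma_PoA_weighted_Polynomial latency functions} by first reducing to the monomial $x^p$ and then applying concavity of the logarithm to the convex combination with weights $\frac{1-h}{k-h}$ and $\frac{k-1}{k-h}$, whose barycenter of $(k,h)$ equals $1$ so that the bound is $\ln 2$. Your explicit identification $\lambda x+(1-\lambda)y=1$ is exactly the computation hidden in the paper's simplification $\frac{(1-h)(k+1)+(k-1)(h+1)}{k-h}=2$, and your tightness witness $(x,y)=(1,0)$ matches the paper's $(k,h)=(1,0)$.
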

When considering identical resources with polynomial latency functions, the price of anarchy does not decrease, as shown in the following corollary of Theorem \ref{thm_w_low}.
\begin{corollary}\label{cor1b}
The Nash price of anarchy of weighted load balancing games with polynomial latency functions of maximum degree $p$ and identical resources is at least $2^p$.
\end{corollary}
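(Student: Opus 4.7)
The plan is to adapt the lower-bound construction from Theorem \ref{thm_w_low}(i), specialized to polynomial latencies, so that all resources carry the identical latency function $x^p$. First, I would start with the construction of Theorem \ref{thm_w_low}(i) applied to $f_1 = f_2 = x^p$, which produces a game whose Nash price of anarchy matches the supremum value $2^p$ computed in Lemma \ref{lemma_PoA_weighted_Polynomial latency functions}. In that construction, each resource in a group $R_j$ has latency of the form $\alpha_j (\beta_j x)^p = c_j \cdot x^p$ for some positive constant $c_j$, so all latencies are already scalar multiples of the single function $x^p$.

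Next, I would exploit the homogeneity of polynomial latencies: a resource with latency $c \cdot x^p$ carrying players of total load $k$ yields cost $c \cdot k^p$, which is equivalent to a resource with latency $x^p$ when all player weights on that resource are rescaled by $c^{1/p}$. The natural attempt is to rewrite the game so that every resource has latency $x^p$ with appropriately rescaled player weights. The subtle point is that each player has a single global weight, while in the original construction a player of type $j$ plays on $R_j$ at Nash and on $R_{j+1}$ at Opt, whose scaling constants $c_j,c_{j+1}$ generally differ.

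To overcome this consistency issue, I would directly build a tailored chain construction with identical resources: take $m+1$ groups of resources $R_0,\ldots,R_m$ each with latency $x^p$ and $|R_j| = s^{j-1}$; player groups $N_1,\ldots,N_m$ with $|N_j| = s^j$ and geometrically decreasing weights $w_j$ tuned so that the Nash condition $w_{j+1}/w_j \geq (s-1)/s$ is tight; and strategy set $\Sigma_j = \{R_{j-1},R_j\}$ for intermediate groups, with the boundary groups' strategies restricted to avoid deviations to empty resources. The candidate Nash equilibrium clusters each $N_j$ on resources of $R_{j-1}$ (with $s$ players per resource), while the candidate optimum spreads each $N_j$ over $R_j$ (one player per resource). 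Computing the weighted geometric mean of costs at both profiles and letting $s,m \to \infty$ along a suitable path would yield a ratio approaching $2^p$, matching the supremum of Lemma \ref{lemma_PoA_weighted_Polynomial latency functions}.

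The main obstacle will be handling the boundary of the chain correctly. A player in the last group $N_m$ could profitably deviate to an empty resource in $R_m$, destroying the Nash property, and similarly a player in $N_1$ might prefer the empty $R_0$. I plan to address this by adding a terminal \emph{filler} group of forced players that keeps $R_m$ loaded at Nash (analogously to the construction in the proof of Theorem \ref{thm_w_low}(ii)), or by restricting the boundary groups' strategy sets to a single resource; in either case, a careful accounting will be needed to show that the boundary contribution to the weighted geometric mean of costs vanishes in the limit $m \to \infty$, so that the $2^p$ lower bound is recovered.
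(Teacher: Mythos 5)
Your overall strategy---specializing the chain construction of Theorem \ref{thm_w_low}(i) to the polynomial $x^p$ with the parameters that attain the supremum $2^p$ of Lemma \ref{lemma_PoA_weighted_Polynomial latency functions}---is exactly the route the paper takes, but you stop one step short of the observation that makes it close. You correctly note that every resource in group $R_j$ gets latency $\alpha_j\beta_j^p\,x^p$, and you worry that the constants $c_j=\alpha_j\beta_j^p$ differ across groups, so that rescaling weights by $c_j^{1/p}$ cannot be done consistently for a player who uses $R_j$ at equilibrium and $R_{j+1}$ at the optimum. What you miss is that for the maximizing parameters $k=k_1/o_1=1$, $h=k_2/o_2=0$ and the free choice $s=2$, one has $\alpha_j=2^{-p(j-1)}$ and $\beta_j=2^{j-1}$, hence $c_j=1$ for every interior group: the instance already has (essentially) identical latency $x^p$ on all resources, and no rescaling and no new construction are needed. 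This is precisely the paper's two-line proof of Corollary \ref{cor1b}: take $\CG'(m,s)$ from Theorem \ref{thm_w_low}(i) with $s=2$, $k=1$, $h=0$, $f=g=x^p$, and invoke the bound ${\sf NPoA}(\CG'(m))>2^p-\epsilon$ already established there.

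Your fallback plan---rebuilding a tailored chain with identical resources from scratch---has the right shape but is not a proof as written. The group sizes are inconsistent: with $|N_j|=s^j$ and $|R_{j-1}|=s^{j-2}$ you cannot cluster $N_j$ on $R_{j-1}$ with $s$ players per resource (that would require $|N_j|=s^{j-1}$); in the paper $N_j$ clusters on $R_j$ and spreads over $R_{j+1}$. More importantly, the equilibrium verification, the boundary treatment, and the limit computation are all deferred, and these are the nontrivial parts. The boundary is genuinely delicate: in the paper's construction the terminal resource group is calibrated (via the choice of $\alpha_m$, which makes its effective constant $2^p$ rather than $1$) so that a deviation to an empty resource costs exactly the equilibrium cost; neither a bare strategy restriction nor an unquantified ``filler'' group automatically gives this, and one must then also check that the terminal group's contribution to both geometric means vanishes as $m\to\infty$. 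So either supply the observation $c_j=1$ and reduce to Theorem \ref{thm_w_low}(i), or carry out your tailored construction in full; as it stands the argument is a plan rather than a proof.
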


\subsection{The {\sf NPoA} for Unweighted Load Balancing Games}\label{subsec_nash_unweighted}
We first provide an upper bound to the Nash price of anarchy of unweighted load balancing games.

\begin{theorem}\label{thm_unw_upp}
Let $\mathcal{C}$ be a class of latency functions. The Nash price of anarchy of unweighted load balancing games with latency functions in $\mathcal{C}$ is
$
{\sf NPoA}({\sf ULB}(\mathcal{C}))\leq \sup_{f\in\mathcal{C}, k\in \ZP, o\in  [k]}\left(\frac{f(k+1)}{f(o)}\right)^{\frac{o}{k}}.
$
\end{theorem}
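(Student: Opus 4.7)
The plan is to specialise the proof of Theorem~\ref{thm_w_upp} to the unweighted setting, exploiting the fact that unit weights let the two-variable optimisation of the weighted case collapse into a single per-resource comparison.

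First, I would fix an arbitrary $\CG\in{\sf ULB}(\mathcal{C})$ and let $\sg$ be a worst-case pure Nash equilibrium and $\sg^*$ an optimal strategy profile of $\CG$, setting $k_j:=k_j(\sg)$ and $o_j:=k_j(\sg^*)$. Because every player has unit weight, deviating from $\sigma_i$ to $\sigma_i^*$ raises the congestion on $\sigma_i^*$ by at most one, so the Nash inequality together with the monotonicity of $\ell_{\sigma_i^*}$ yields $\ell_{\sigma_i}(k_{\sigma_i})\le\ell_{\sigma_i^*}(k_{\sigma_i^*}+1)$ for every $i\in\N$. Multiplying these inequalities over $i$ and grouping the factors by resource gives the unweighted analogue of the key product bound in the weighted proof:
\[
\prod_{j\in R(\sg)}\ell_j(k_j)^{k_j}\ \le\ \prod_{j\in R(\sg^*)}\ell_j(k_j+1)^{o_j}.
\]

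The second step is to take logarithms in the definition of ${\sf NSW}$ and, using $\sum_{j\in R(\sg)}k_j=\sum_{j\in R(\sg^*)}o_j=n$, rewrite the Nash price of anarchy as
\[
n\ln{\sf NPoA}(\CG)\ =\ \sum_{j\in R(\sg)}k_j\ln\ell_j(k_j)\,-\,\sum_{j\in R(\sg^*)}o_j\ln\ell_j(o_j)\ \le\ \sum_{j\in R(\sg^*)}o_j\ln\frac{\ell_j(k_j+1)}{\ell_j(o_j)},
\]
where the inequality applies the product bound above. Setting $S:=\ln\bigl(\sup_{f\in\mathcal{C},\,k\in\ZP,\,o\in[k]}(f(k+1)/f(o))^{o/k}\bigr)$, which is non-negative, the heart of the argument is then a per-resource estimate asserting that for every $j\in R(\sg^*)$
\[
o_j\ln\frac{\ell_j(k_j+1)}{\ell_j(o_j)}\ \le\ S\cdot k_j.
\]

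I would justify this by a short case split: when $1\le o_j\le k_j$ the bound is immediate from the definition of $S$ applied with $f=\ell_j$, $k=k_j$, $o=o_j$; when $o_j>k_j$ (including the boundary case $k_j=0$) monotonicity of $\ell_j$ gives $\ell_j(k_j+1)\le\ell_j(o_j)$, making the left-hand side non-positive while $S\cdot k_j\ge 0$. Summing the per-resource inequality over $j$ and using $\sum_j k_j=n$ collapses the previous display into $n\ln{\sf NPoA}(\CG)\le Sn$, and exponentiating yields the claimed bound; by the arbitrariness of $\CG\in{\sf ULB}(\mathcal{C})$ this upgrades to the stated bound on ${\sf NPoA}({\sf ULB}(\mathcal{C}))$. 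I expect the main obstacle to be precisely the per-resource estimate: since the supremum defining $S$ is taken only over $o\le k$, pairs with $o_j>k_j$ fall outside its domain, and handling them correctly is where the monotonicity hypothesis on the latencies enters and where the simpler unit-weight structure replaces the LP-type relaxation (Fact~\ref{prel_lem_w}) required in the weighted proof.
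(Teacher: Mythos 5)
Your proposal is correct and takes essentially the same route as the paper: the same unit-weight Nash deviation inequality yielding $\prod_{j\in R(\sg)}\ell_j(k_j)^{k_j}\le\prod_{j\in R(\sg^*)}\ell_j(k_j+1)^{o_j}$, followed by taking logarithms and neutralizing the resources with $o_j>k_j$, whose contribution is non-positive by monotonicity since $k_j+1\le o_j$ for integer congestions. Your per-resource estimate $o_j\ln(\ell_j(k_j+1)/\ell_j(o_j))\le S\,k_j$ is just a repackaging of the paper's restriction to $R_+=\{j\in R(\sg^*):k_j\ge o_j\}$ followed by its max-of-ratios (mediant) step.
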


We  show that the upper bound derived in Theorem \ref{thm_unw_upp} is tight if the considered latency functions are ordinate-scaling (the proof is deferred to the appendix). The following result for polynomial latency functions holds.
\begin{corollary}\label{cor2}
The Nash price of anarchy of unweighted load balancing games with polynomial latency functions of maximum degree $p$ is ${\sf NPoA}({\sf ULB}(\mathcal{C})) = 2^p$.
\end{corollary}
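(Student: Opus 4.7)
The plan is to combine the upper bound of Theorem~\ref{thm_unw_upp} with the matching lower bound stated right before the corollary (tight whenever the latency class is ordinate-scaling). Since $\mathcal{P}(p)$ is ordinate-scaling---multiplying a polynomial with non-negative coefficients by a positive constant yields another such polynomial of the same maximum degree---it suffices to show
\[
\sup_{f\in\mathcal{P}(p),\,k\in\ZP,\,o\in[k]}\left(\frac{f(k+1)}{f(o)}\right)^{o/k}=2^p.
\]
The lower bound is immediate from $f(x)=x^p$ and $k=o=1$, which give $(2^{p}/1)^{1}=2^p$.

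For the upper bound, I would first reduce to the monomial $f(x)=x^p$. Writing $f(x)=\sum_{d=0}^{p}\alpha_d x^d$ with $\alpha_d\geq 0$ (not all zero) and using that $k+1>o\geq 1$ makes each ratio $(k+1)^d/o^d$ non-decreasing in $d$, the identity
\[
\frac{f(k+1)}{f(o)}=\sum_{d=0}^{p}\frac{\alpha_d o^d}{f(o)}\cdot\frac{(k+1)^d}{o^d}
\]
exhibits $f(k+1)/f(o)$ as a convex combination of the numbers $\{(k+1)^d/o^d\}_{d=0}^{p}$, hence bounded by their maximum $((k+1)/o)^p$, with equality for the monomial $f(x)=x^p$. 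Thus it remains to prove $\sup_{k,o\in\mathbb{Z},\,k\geq o\geq 1}((k+1)/o)^{po/k}\leq 2^p$, equivalently $\sup_{k\geq o\geq 1}h(k,o)\leq\ln 2$, where $h(k,o):=(o/k)\ln((k+1)/o)$.

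For this final inequality, fix $k$ and view $h(k,\cdot)$ as a function of a real variable $o\in[1,k]$. A direct differentiation yields $\partial^{2} h/\partial o^{2}=-1/(ko)<0$, so $h(k,\cdot)$ is strictly concave with unique critical point $o^{\star}=(k+1)/e$ and maximum value $(k+1)/(ek)$. For $k=1$ the only integer in $[k]$ is $o=1$, giving $h(1,1)=\ln 2$. For every $k\geq 2$ the unrestricted real maximum $(k+1)/(ek)$ is at most $3/(2e)<\ln 2$, and this upper-bounds the integer maximum as well. Combining the two cases yields $\sup h=\ln 2$, so $\sup((k+1)/o)^{po/k}=2^p$ and the corollary follows. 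The only non-obvious step is the convex-combination reduction to $x^p$; the rest reduces to the elementary comparison $3<2e\ln 2\approx 3.77$ and the monotonicity of $(k+1)/(ek)$ in $k$.
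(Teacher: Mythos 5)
Your proposal is correct and follows essentially the same route as the paper: it combines Theorem~\ref{thm_unw_upp} with the ordinate-scaling lower bound (Theorem~\ref{thm_unw_low}), reduces general polynomials to the monomial $x^p$ via the same convex-combination argument, and then evaluates $\sup_{k\geq o\geq 1}\left(\frac{k+1}{o}\right)^{o/k}=2$. The only (immaterial) difference is in verifying this last scalar fact: the paper substitutes $x:=k/o$ and invokes $2^x\geq x+1$ for $x\geq 1$, whereas you optimize $h(k,o)=(o/k)\ln((k+1)/o)$ over real $o$ by concavity and compare $(k+1)/(ek)$ with $\ln 2$; both verifications are valid.
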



\subsection{The {\sf NPoA} for Non-Atomic Load Balancing Games}\label{subsec_nash_nonatomic}

We first provide an upper bound to the Nash price of anarchy of non-atomic load balancing games.

\begin{theorem}\label{thm_non_upp}
Let $\mathcal{C}$ be a class of latency functions. The Nash price of anarchy of non-atomic load balancing games with latency functions in $\mathcal{C}$ is 
$
{\sf NPoA}({\sf NLB}(\mathcal{C}))\leq \sup_{f\in\mathcal{C},  k\geq o>0}\left(\frac{f(k)}{f(o)}\right)^{\frac{o}{k}}.
$
\end{theorem}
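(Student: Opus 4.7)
The plan is to mimic the proof of Theorem \ref{thm_w_upp}, exploiting the fact that in the non-atomic (Wardrop) setting the equilibrium conditions already yield a \emph{pointwise} comparison of latencies, which makes it unnecessary to introduce auxiliary variables $\alpha_j$.

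First I would fix $\NCG\in{\sf NLB}(\mathcal{C})$ with a worst-case Nash equilibrium $\bm\Delta$ and an optimal profile $\bm\Delta^*$, denote $k_j:=k_j(\bm\Delta)$, $o_j:=k_j(\bm\Delta^*)$ and $T:=\sum_j k_j=\sum_j o_j$, and invoke the Wardrop condition. For each type $i\in N$ there is a common value $m_i$ such that $\ell_j(k_j)=m_i$ whenever $\Delta_i(j)>0$ and $\ell_{j'}(k_{j'})\geq m_i$ for every $j'\in\Sigma_i$. Taking logarithms and using $\sum_j\Delta_i(j)=r_i=\sum_j\Delta_i^*(j)$ gives $\sum_j\Delta_i(j)\ln\ell_j(k_j)=r_i\ln m_i=\sum_j\Delta_i^*(j)\ln m_i\le\sum_j\Delta_i^*(j)\ln\ell_j(k_j)$; summing over $i$ delivers the \emph{aggregate logarithmic Nash inequality}
\begin{equation*}
\sum_{j} k_j\,\ln\ell_j(k_j)\;\le\;\sum_{j} o_j\,\ln\ell_j(k_j),
\end{equation*}
which is the non-atomic analogue of (\ref{w_form_nash}).

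Second, I would expand the log of the NSW ratio and apply the above to obtain
\begin{equation*}
\ln{\sf NPoA}(\NCG)\;=\;\frac{1}{T}\Bigl(\sum_j k_j\ln\ell_j(k_j)-\sum_j o_j\ln\ell_j(o_j)\Bigr)\;\le\;\frac{1}{T}\sum_j o_j\,\ln\!\Bigl(\frac{\ell_j(k_j)}{\ell_j(o_j)}\Bigr).
\end{equation*}
Because $\ell_j$ is non-decreasing, every summand with $k_j<o_j$ is non-positive and can be dropped, leaving a sum over indices $j$ with $k_j\ge o_j>0$. On each such index, rewriting $o_j\ln(\ell_j(k_j)/\ell_j(o_j))=k_j\cdot(o_j/k_j)\ln(\ell_j(k_j)/\ell_j(o_j))$ and invoking the definition of $M:=\sup_{f\in\mathcal{C},\,k\ge o>0}(o/k)\ln(f(k)/f(o))$ bounds each summand by $k_jM$. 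Since $\sum_{j:k_j\ge o_j>0}k_j\le T$, we conclude $\ln{\sf NPoA}(\NCG)\le M$, and exponentiating yields the claimed bound; arbitrariness of $\NCG$ then gives the statement.

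The main delicate point is the aggregation step when some resource $j$ satisfies $o_j>0$ but $k_j=0$, so that $\ell_j(k_j)$ is formally outside the domain $\RPP$. This is handled by the standard Wardrop convention $\ell_j(0):=\lim_{x\to 0^+}\ell_j(x)$: the equilibrium condition precludes $\ell_j(0)<m_i$ for any type $i$ with $j\in\Sigma_i$ (otherwise an infinitesimal deviation to $j$ would be strictly improving), so the inequality $\ln\ell_j(k_j)\ge\ln m_i$ persists under this extension and the aggregate bound is unaffected. Everything else is strictly lighter than the weighted atomic case, because non-atomicity delivers the equality $r_i\ln m_i=\sum_j\Delta_i(j)\ln\ell_j(k_j)$ and lets us pass directly from the aggregate logarithmic Nash inequality to the pointwise supremum appearing in the statement, with no need for the optimisation over auxiliary variables used in Theorem \ref{thm_w_upp}.
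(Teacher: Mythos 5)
Your proof is correct and follows the same overall architecture as the paper's: both reduce the claim to the aggregate inequality $\prod_{j}\ell_j(k_j)^{k_j}\le\prod_{j}\ell_j(k_j)^{o_j}$, take logarithms, discard the non-positive contributions of resources with $k_j<o_j$, and bound each remaining term against the pointwise supremum. The one place where you genuinely diverge is in how that aggregate inequality is established: the paper introduces the coupling quantities $\alpha^i_{j,j^*}$ (the amount of type-$i$ players using resource $j$ in $\nsg$ and resource $j^*$ in $\nsg^*$), uses only the one-sided Nash condition $cost_j(\nsg)\le cost_{j^*}(\nsg)$ on pairs with $\alpha^i_{j,j^*}>0$, and multiplies these inequalities after checking the marginals $\sum_{j^*}A_{j,j^*}=k_j$ and $\sum_{j}A_{j,j^*}=o_{j^*}$; you instead invoke the full Wardrop characterization (a common latency value $m_i$ on all resources used by type $i$, with every admissible resource at least as costly) and compare $r_i\ln m_i$ with $\sum_j\Delta^*_i(j)\ln\ell_j(k_j)$. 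Both derivations are sound and of comparable length; yours is marginally cleaner in that it dispenses with the coupling bookkeeping, while the paper's has the minor virtue of needing only the inequality half of the equilibrium definition rather than the equal-latency property. Your closing remark about resources with $o_j>0$ but $k_j=0$ addresses an edge case that the paper's own proof passes over silently, and your resolution (the equilibrium condition already forces $\ell_j(k_j(\nsg))\ge m_i$ for every admissible resource, used or not) is the right one.
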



We show that the upper bound derived in Theorem \ref{thm_non_upp} is tight the considered latency functions are all-constant-including (the proof is deferred to the appendix). The following result for polynomial latency functions holds. 
\begin{corollary}\label{cor3}
The Nash price of anarchy of non-atomic load balancing games with polynomial latency functions of maximum degree $p$ (even for symmetric games) is ${\sf NPoA}({\sf NLB}(\mathcal{P}(p)))={\sf NPoA}({\sf SNLB}(\mathcal{P}(p)))=\left(e^\frac{1}{e}\right)^p\simeq (1.44)^p$. 
\end{corollary}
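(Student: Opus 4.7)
The plan is to apply Theorem \ref{thm_non_upp} with $\mathcal{C}=\mathcal{P}(p)$ and reduce the resulting supremum to a single-variable calculus exercise. Since $\mathcal{P}(p)$ is all-constant-including (it contains every positive constant as a degree-$0$ polynomial), the tightness part of Theorem \ref{thm_non_upp} (proved in the appendix) supplies the matching lower bound, so it suffices to evaluate
$$\sup_{f\in\mathcal{P}(p),\,k\geq o>0}\left(\frac{f(k)}{f(o)}\right)^{o/k}$$
in closed form.

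First I would reduce the optimization over $f\in\mathcal{P}(p)$ to the monomial case $f(x)=x^p$. For any $f(x)=\sum_{d=0}^p\alpha_d x^d$ with $\alpha_d\geq 0$ (and $\alpha_d>0$ for some $d$) and any $k\geq o>0$, one has $f(o)>0$, and
$$\frac{f(k)}{f(o)}=\sum_{d=0}^p\lambda_d\left(\frac{k}{o}\right)^d,\qquad \lambda_d:=\frac{\alpha_d o^d}{f(o)},$$
is a convex combination (the $\lambda_d$ are non-negative and sum to $1$) of the quantities $(k/o)^d$. Since $k/o\geq 1$, the map $d\mapsto (k/o)^d$ is non-decreasing on $\{0,1,\dots,p\}$, so the combination is at most $(k/o)^p$, with equality attained when $\alpha_d=0$ for $d<p$. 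Hence the supremum collapses to $\sup_{k\geq o>0}(k/o)^{p\,o/k}$.

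Next I would perform the resulting one-dimensional optimization. Setting $t:=o/k\in(0,1]$, the quantity becomes $t^{-pt}=\exp(-pt\ln t)$. The derivative of $-t\ln t$ equals $-\ln t-1$, which vanishes at $t=1/e$ with maximum value $1/e$ (and the function is concave on $(0,1]$); hence the supremum equals $e^{p/e}=\left(e^{1/e}\right)^p$. Combined with Theorem \ref{thm_non_upp} this yields ${\sf NPoA}({\sf NLB}(\mathcal{P}(p)))=\left(e^{1/e}\right)^p$, and since ${\sf SNLB}(\mathcal{P}(p))\subseteq {\sf NLB}(\mathcal{P}(p))$, the upper bound carries over to the symmetric subclass. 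For the matching symmetric lower bound, I would verify that the worst-case non-atomic instance underlying Theorem \ref{thm_non_upp} can be realized as a symmetric game by allowing a single type to access every resource; this preserves both the Nash equilibrium and the optimum since in the non-atomic regime these depend only on the vector of resource loads, not on the type partition. The main delicate step is the convex-combination reduction to the monomial — which relies on the sign restriction $\alpha_d\geq 0$ built into $\mathcal{P}(p)$ — while the rest is a clean calculus calculation plus a straightforward symmetrization.
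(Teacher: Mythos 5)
Your proposal is correct and follows essentially the same route as the paper: both reduce the supremum over $\mathcal{P}(p)$ to the monomial $f(x)=x^p$ (your convex-combination argument is just a spelled-out version of the paper's step replacing $\frac{\sum_d\alpha_d k^d}{\sum_d\alpha_d o^d}$ by $\max_d (k/o)^d$), and both then optimize the single-variable expression, obtaining the maximizer at $k/o=e$ and the value $\left(e^{1/e}\right)^p$. The symmetric case is likewise handled as in the paper, since the lower-bound construction behind Theorem~\ref{thm_non_low} is already a symmetric instance with a single player type.
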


\section{Online load balancing}\label{sec_onlineLB}
We first provide an upper bound on the competitive ratio of the greedy algorithm.
\begin{theorem}\label{thm_onl_upp}
Let $\mathcal{C}$ be a class of quasi-log-convex functions. The competitive ratio of the greedy algorithm ${\sf G}$ applied to load balancing instances with latency functions in $\mathcal{C}$ is
$
{\sf CR}_{\sf G}({\sf WLB}(\mathcal{C}))\leq \sup_{k_1\geq  o_1> 0,o_2>k_2\geq 0,f_1,f_2\in \mathcal{C}}\left(\frac{f_1(k_1+o_1)^{k_1+o_1}}{f_1(k_1)^{k_1}f_1(o_1)^{o_1}}\right)^{\frac{o_2-k_2}{o_2 k_1-o_1k_2}}\left(\frac{f_2(k_2+o_2)^{k_2+o_2}}{f_2(k_2)^{k_2}f_2(o_2)^{o_2}}\right)^{\frac{k_1-o_1}{o_2 k_1-o_1k_2}},
$
where we set $f_2(0)^0:=1$. 
\end{theorem}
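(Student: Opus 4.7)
The plan is to adapt the proof of Theorem~\ref{thm_w_upp} to the online setting, replacing the Nash deviation inequality by the greedy ``best-so-far'' inequality and exploiting quasi-log-convexity to control a telescoping sum. Fix $\I\in{\sf WLB}(\mathcal{C})$; let $\sg$ be the state output by the greedy algorithm ${\sf G}$ and $\sg^*$ an optimal state. Write $k_j:=k_j(\sg)$, $o_j:=k_j(\sg^*)$, and $k_j^{(i)}$ for the congestion on resource $j$ after the first $i$ clients have been processed by ${\sf G}$. Set $g_j(x):=x\ln \ell_j(x)$, which is convex since $\ell_j\in\mathcal{C}$ is quasi-log-convex.

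At step $i$, ${\sf G}$ picks $\sigma_i$ so as to minimize the multiplicative increment of $\prod_{j}\ell_j(k_j^{(i)})^{k_j^{(i)}}$; comparing $\sigma_i$ with the alternative $\sigma_i^*$ yields the key per-step inequality
\[
g_{\sigma_i}\!\left(k_{\sigma_i}^{(i-1)}+w_i\right)-g_{\sigma_i}\!\left(k_{\sigma_i}^{(i-1)}\right)\;\leq\; g_{\sigma_i^*}\!\left(k_{\sigma_i^*}^{(i-1)}+w_i\right)-g_{\sigma_i^*}\!\left(k_{\sigma_i^*}^{(i-1)}\right).
\]
Summing over $i\in[n]$, I would telescope the left-hand side by grouping on the greedy resource (using the convention $g_j(0)=0$) to obtain $\sum_{j}g_j(k_j)$. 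On the right-hand side I would group by optimal resource via $S_j:=\{i:\sigma_i^*=j\}$ and combine two consequences of the convexity of $g_j$: (a) since $k_j^{(i-1)}\leq k_j$, the non-decreasing-slopes property gives $g_j(k_j^{(i-1)}+w_i)-g_j(k_j^{(i-1)})\leq g_j(k_j+w_i)-g_j(k_j)$; and (b) the function $\phi_j(b):=g_j(k_j+b)-g_j(k_j)$ is convex with $\phi_j(0)=0$, so $\phi_j(\lambda b)\leq \lambda \phi_j(b)$ for $\lambda\in[0,1]$, and summing over $i\in S_j$ yields $\sum_{i\in S_j}\phi_j(w_i)\leq \phi_j\!\left(\sum_{i\in S_j}w_i\right)=\phi_j(o_j)$.

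Chaining (a) and (b) produces
\[
\sum_{j}g_j(k_j)\;\leq\;\sum_{j}\bigl[g_j(k_j+o_j)-g_j(k_j)\bigr],
\]
and since $\sum_j k_j=\sum_j o_j=\sum_i w_i=:W$, subtracting $\sum_j g_j(o_j)$ from both sides and dividing by $W$ gives
\[
\ln\frac{{\sf NSW}(\sg)}{{\sf NSW}(\sg^*)}=\frac{\sum_{j}[g_j(k_j)-g_j(o_j)]}{W}\;\leq\;\frac{\sum_{j}A_j}{\sum_{j}k_j},\quad A_j:=\ln\frac{\ell_j(k_j+o_j)^{k_j+o_j}}{\ell_j(k_j)^{k_j}\,\ell_j(o_j)^{o_j}}.
\]

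Finally, mirroring the relaxation step in the proof of Theorem~\ref{thm_w_upp}, I would introduce non-negative multipliers $(\alpha_j)_{j\in R}$ and maximize $\frac{\sum_j\alpha_j A_j}{\sum_j\alpha_j k_j}$ subject to $\sum_j\alpha_j k_j=\sum_j\alpha_j o_j$ (the current bound corresponds to $\alpha_j\equiv 1$). The analogue of Fact~\ref{prel_lem_w} then reduces the optimum to a two-resource configuration with $k_1\geq o_1>0$ and $o_2>k_2\geq 0$; the choice $\alpha_1=o_2-k_2$, $\alpha_2=k_1-o_1$ satisfies the constraint, and exponentiating while taking the supremum over $f_1,f_2\in\mathcal{C}$ yields the stated bound, with the convention $f_2(0)^0:=1$ covering $k_2=0$. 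I expect the main obstacle to be step (b): quasi-log-convexity is used essentially here to convert many per-client contributions into a single aggregate term per resource, and without it the subhomogeneity of $\phi_j$ fails and the telescoping strategy collapses. The subsequent two-resource reduction is routine once the analogue of Fact~\ref{prel_lem_w} is in place.
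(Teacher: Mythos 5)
Your proposal is correct and follows essentially the same route as the paper's proof: the same greedy per-step inequality, the same two uses of convexity of $g_j(x)=x\ln\ell_j(x)$ (first shifting the base congestion up to the final greedy load, then aggregating the per-client increments into a single per-resource term), and the same LP relaxation reducing to a two-resource vertex via the analogue of Fact~\ref{prel_lem_w}. The only difference is presentational --- you work additively in log-space and phrase the aggregation step as superadditivity of the convex function $\phi_j$ with $\phi_j(0)=0$, whereas the paper telescopes the corresponding products multiplicatively; these are identical arguments.
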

We show that, when considering the greedy algorithm, the upper bound derived in Theorem \ref{thm_onl_upp} is tight if the considered latency functions are abscissa-scaling and ordinate-scaling (the proof is deferred to the appendix). The following result for polynomial latency functions holds (the proof is deferred to the appendix). 
\begin{corollary}\label{pol_onl_cor}
The competitive ratio of the greedy algorithm applied to weighted load balancing instances with polynomial latency functions of maximum degree $p$ is ${\sf CR}_{\sf G}({\sf WLB}(\mathcal{C}))=4^p$.
\end{corollary}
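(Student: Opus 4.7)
The plan is to combine the upper bound of Theorem \ref{thm_onl_upp} with the matching tight lower bound (deferred to the appendix), showing that both evaluate to $4^p$ when the latency class is $\mathcal{P}(p)$. A preliminary step is to verify that $\mathcal{P}(p)$ consists of quasi-log-convex functions, so that Theorem \ref{thm_onl_upp} is applicable. Writing $f(x) = \sum_{d=0}^p \alpha_d x^d$ with $\alpha_d \geq 0$, direct differentiation yields $f(x)^2 (x\ln f(x))'' = \sum_{a,b} b(b+1-a)\,\alpha_a\alpha_b\, x^{a+b-1}$; symmetrizing $a \leftrightarrow b$ rewrites this as $\tfrac{1}{2}\sum_{a,b}[(a-b)^2 + (a+b)]\alpha_a\alpha_b x^{a+b-1} \geq 0$, as required.

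Next, I would reduce the supremum in Theorem \ref{thm_onl_upp} to the single monomial $f = x^p$. The key observation is that for every $f \in \mathcal{P}(p)$ and $y \geq x > 0$,
\[
\frac{f(y)}{f(x)} = \sum_{d=0}^p \frac{\alpha_d x^d}{f(x)}\left(\frac{y}{x}\right)^d \leq \left(\frac{y}{x}\right)^p,
\]
since the left-hand side is a convex combination of powers $(y/x)^d$ with $d \leq p$ and $y/x \geq 1$. Factoring $f(k+o)^{k+o}/\bigl(f(k)^k f(o)^o\bigr) = [f(k+o)/f(k)]^k \cdot [f(k+o)/f(o)]^o$ and applying the bound to both factors gives $f(k+o)^{k+o}/(f(k)^k f(o)^o) \leq (k+o)^{p(k+o)}/(k^{pk} o^{po})$ uniformly in $f \in \mathcal{P}(p)$. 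Hence it suffices to evaluate the Theorem \ref{thm_onl_upp} expression with $f_1 = f_2 = x^p$.

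Taking logarithms, the monomial expression becomes $p\cdot\bigl[(o_2-k_2)\xi(k_1,o_1) + (k_1-o_1)\xi(k_2,o_2)\bigr]/(o_2 k_1 - o_1 k_2)$, where $\xi(k,o) := (k+o)\ln(k+o) - k\ln k - o\ln o = (k+o)H(o/(k+o))$ and $H$ denotes the binary entropy. Since $H \leq \ln 2$, we have $\xi(k,o) \leq (k+o)\ln 2$, so the numerator is at most $\ln 2\cdot[(o_2-k_2)(k_1+o_1) + (k_1-o_1)(k_2+o_2)]$, which expands and telescopes to exactly $2\ln 2 \cdot (o_2 k_1 - o_1 k_2)$. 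Dividing and exponentiating yields the bound $e^{2p\ln 2} = 4^p$. For the matching lower bound, I would invoke the tight construction from the appendix (valid since $\mathcal{P}(p)$ is both abscissa-scaling and ordinate-scaling); alternatively, plugging $k_1 = o_1$, $k_2 = 0$, and $f_1 = f_2 = x^p$ into the Theorem \ref{thm_onl_upp} supremum directly produces $4^p$, because the second factor collapses to $1$ (using the convention $f_2(0)^0 = 1$ together with $k_2 - o_1$ zeroing out its exponent) and the first factor reduces to $\bigl((2k_1)^{2pk_1}/k_1^{2pk_1}\bigr)^{1/k_1} = 4^p$. The main technical content is the clean algebraic identity $(o_2-k_2)(k_1+o_1) + (k_1-o_1)(k_2+o_2) = 2(o_2 k_1 - o_1 k_2)$, which makes the entropy-bound step tight and collapses the entire supremum to the exact value $4^p$, matching the lower-bound construction.
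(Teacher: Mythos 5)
Your proposal is correct and reaches the same conclusion by the same overall architecture (evaluate the supremum of Theorem \ref{thm_onl_upp} at $4^p$ for $\mathcal{P}(p)$, then invoke the matching lower bound of the appendix, which applies because $\mathcal{P}(p)$ is abscissa- and ordinate-scaling), and your reduction to the monomial $x^p$ via the convex-combination bound $f(y)/f(x)\leq (y/x)^p$ is essentially the paper's first step in Lemma \ref{lemma_CR_weighted_Polynomial latency functions}. Where you genuinely diverge is in evaluating the resulting monomial supremum. The paper substitutes $k:=k_1/o_1$, $h:=k_2/o_2$ to reduce to a two-variable function $F(k,h)$ and bounds $\ln F(k,h)$ by Jensen's inequality applied to the concave function $g(x)=(x+1)\ln(x+1)-x\ln x$, observing that the convex combination $\frac{1-h}{k-h}(k+1)+\frac{k-1}{k-h}(h+1)$ collapses to exactly $2$. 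You instead stay in the four original variables, bound each term $\xi(k,o)=(k+o)H\!\left(\tfrac{o}{k+o}\right)$ by $(k+o)\ln 2$ via the entropy bound, and close with the bilinear identity $(o_2-k_2)(k_1+o_1)+(k_1-o_1)(k_2+o_2)=2(o_2k_1-o_1k_2)$. Both arguments are tight at $k_1=o_1$, $k_2=0$ (where the coefficient of the second $\xi$ vanishes, so the non-tightness of $H$ there is harmless); yours localizes the concavity step to each resource separately and avoids the change of variables, while the paper's mixes the two resources in a single Jensen step. A further point in your favor: you explicitly verify that every $f\in\mathcal{P}(p)$ is quasi-log-convex via the symmetrized expansion $\tfrac{1}{2}\sum_{a,b}\left[(a-b)^2+(a+b)\right]\alpha_a\alpha_b x^{a+b-1}\geq 0$, a hypothesis of Theorem \ref{thm_onl_upp} that the paper never checks for polynomials. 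One trivial slip: in your final paragraph the exponent that vanishes is $k_1-o_1$, not ``$k_2-o_1$''.
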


We show that, when considering polynomial latency functions, the upper bound of Corollary  \ref{pol_onl_cor} is tight for any  online algorithm, i.e., we are able to provide a matching lower bound to the online load balancing problem (the proof is deferred to the appendix). 
\section{Concluding Remarks and Open Problems}\label{sec_conlusion}
To the best of our knowledge, this is the first work that adopts the NSW as a benchmarking quality measure in load balancing problems. Several open problems deserve further investigation.

First of all, our paper mostly focuses on evaluating the performance of selfish and online load balancing. 
Concerning complexity issues, it is worth noticing that, on the one hand, when considering unweighted players, an optimal configuration with respect to the NSW can be trivially computed in polynomial time by exploiting the same techniques developed in \cite{CMNV05,MS12} for the utilitarian social welfare (\cite{CMNV05,MS12} use, in turn, an approach similar to the one adopted in \cite{FPT04} for the computation of a Nash equilibrium); on the other hand, when considering weighted players, a simple reduction from the NP-complete problem $\mathsf{PARTITION}$ shows that the problem becomes NP-hard.  Therefore, an interesting open problem is that of providing polynomial time approximation algorithms for the weighted case (we notice that Corollary \ref{pol_onl_cor} provides a $4^p$-approximation algorithm for weighted load balancing instances with polynomial latency functions of maximum degree $p$).

Moreover, a natural extension of our results consists in considering other families of congestion games, being more general than the one of load balancing games, such as the family of matroid congestion games \Mycite{HRV08,dKU16}. 

Finally, it would be interesting to apply the NSW measure to other classes of games, whose performances, in the literature, have only been analysed with respect to the utilitarian and/or egalitarian social welfare functions.
%
%
%
%
\bibliographystyle{splncs04}
\bibliography{bibliofinal}

\begin{thebibliography}{10}
\providecommand{\url}[1]{\texttt{#1}}
\providecommand{\urlprefix}{URL }
\providecommand{\doi}[1]{https://doi.org/#1}

\bibitem{HRV08}
Ackermann, H., R{\"o}glin, H., V{\"o}cking, B.: On the impact of combinatorial
  structure on congestion games. Journal of {ACM}  \textbf{55}(6),  25:1--25:22
  (2008)

\bibitem{ADGMS11}
Aland, S., Dumrauf, D., Gairing, M., Monien, B., Schoppmann, F.: Exact price of
  anarchy for polynomial congestion games. SIAM Journal on Computing
  \textbf{40}(5),  1211--1233 (2011)

\bibitem{A99}
Albers, S.: Better bounds for online scheduling. SIAM Journal on Computing
  \textbf{29},  459--473 (1999)

\bibitem{AW19}
Aleksandrov, M., Walsh, T.: Greedy algorithms for fair division of mixed manna.
  CoRR  \textbf{abs/1911.11005} (2019)

\bibitem{AAFPW97}
Aspnes, J., Azar, Y., Fiat, A., Plotkin, S.A., Waarts, O.: On-line routing of
  virtual circuits with applications to load balancing and machine scheduling.
  Journal of {ACM}  \textbf{44}(3),  486--504 (1997)

\bibitem{AAE05}
Awerbuch, B., Azar, Y., Epstein, A.: The price of routing unsplittable flow.
  In: Proceedings of the Thirty-seventh Annual ACM Symposium on Theory of
  Computing (STOC). pp. 57--66 (2005)

\bibitem{AAG+95}
Awerbuch, B., Yossi, A., Grove, E.F., Kao, M., Krishnan, P., Vitter, J.S.: Load
  balancing in the l\({}_{\mbox{p}}\) norm. In: Proceedings of the 36th Annual
  Symposium on Foundations of Computer Science (FOCS). pp. 383--391 (1995)

\bibitem{ANR92}
Azar, Y., Naor, J., Rom, R.: The competitiveness of on-line assignments. In:
  Proceedings of the Third Annual {ACM/SIGACT-SIAM} Symposium on Discrete
  Algorithms (SODA). pp. 203--210 (1992)

\bibitem{BFKV95}
Bartal, Y., Fiat, A., Karloff, H., Vohra, R.: New algorithms for an ancient
  scheduling problem. Journal of Computer and System Sciences  \textbf{51},
  359--366 (1995)

\bibitem{Beckmann1956}
Beckmann, M.J., McGuire, C.B., Winsten, C.B.: Studies in the Economics of
  Transportation. Yale University Press (1956)

\bibitem{B+10}
Bei, X., Garg, J., Hoefer, M., Mehlhorn, K.: Earning and utility limits in
  fisher markets. ACM Transactions on Economics and Computation  \textbf{7}(2),
   10:1--10:35 (2019)

\bibitem{BGR10}
Bhawalkar, K., Gairing, M., Roughgarden, T.: Weighted congestion games: price
  of anarchy, universal worst-case examples, and tightness. ACM Transactions on
  Economics and Computation  \textbf{2}(4),  1--23 (2014)

\bibitem{B18}
Bil{\`{o}}, V.: A unifying tool for bounding the quality of non-cooperative
  solutions in weighted congestion games. Theory Comput. Syst.  \textbf{62}(5),
   1288--1317 (2018)

\bibitem{BFFM09}
Bil\`o, V., Fanelli, A., Flammini, M., Moscardelli, L.: Performances of
  one-round walks in linear congestion games. Theory of Computing Systems
  \textbf{49}(1),  24--45 (2011)

\bibitem{BV17}
Bil{\`o}, V., Vinci, C.: On the impact of singleton strategies in congestion
  games. In: 25th Annual European Symposium on Algorithms, {ESA}. pp.
  17:1--17:14 (2017)

\bibitem{BV19}
Bil{\`{o}}, V., Vinci, C.: Dynamic taxes for polynomial congestion games. {ACM}
  Trans. Economics and Comput.  \textbf{7}(3),  15:1--15:36 (2019)

\bibitem{BMSY19}
Bogomolnaia, A., Moulin, H., Sandomirskiy, F., Yanovskaia, E.: Dividing bads
  under additive utilities. Social Choice and Welfare  \textbf{52}(3),
  395--417 (2019)

\bibitem{BoE98}
Borodin, A., El-Yaniv, R.: Online Computation and Competitive Analysis.
  Cambridge University Press (1998)

\bibitem{BS00}
Brainard, W.C., Scarf, H.E.: How to compute equilibrium prices in 1891.
  CowlesFoundation Discussion Paper 1270  (2000)

\bibitem{BGM17}
Br{\^{a}}nzei, S., Gkatzelis, V., Mehta, R.: {N}ash social welfare
  approximation for strategic agents. In: Proceedings of the 2017 {ACM}
  Conference on Economics and Computation (EC). pp. 611--628 (2017)

\bibitem{C08}
Caragiannis, I.: Better bounds for online load balancing on unrelated machines.
  In: Proceedings of the {ACM-SIAM} Symposium on Discrete Algorithms (SODA).
  pp. 972--981 (2008)

\bibitem{CFKKM11}
Caragiannis, I., Flammini, M., Kaklamanis, C., Kanellopoulos, P., Moscardelli,
  L.: Tight bounds for selfish and greedy load balancing. Algorithmica
  \textbf{61}(3),  606--637 (2011)

\bibitem{CGH19}
Caragiannis, I., Gravin, N., Huang, X.: Envy-freeness up to any item with high
  {N}ash welfare: The virtue of donating items. In: Proceedings of the 2019
  {ACM} Conference on Economics and Computation (EC). pp. 527--545 (2019)

\bibitem{CKMP016}
Caragiannis, I., Kurokawa, D., Moulin, H., Procaccia, A.D., Shah, N., Wang, J.:
  The unreasonable fairness of maximum {N}ash welfare. In: Proceedings of the
  2016 {ACM} Conference on Economics and Computation (EC). pp. 305--322 (2016)

\bibitem{CMNV05}
Chakrabarty, D., Mehta, A., Nagarajan, V., Vazirani, V.: Fairness and
  optimality in congestion games. In: Proceedings of the 6th ACM Conference on
  Electronic Commerce. pp. 52--57. EC '05 (2005)

\bibitem{CK04}
Chekuri, C., Khanna, S.: Handbook of Scheduling: Algorithms, Models, and
  Performance Analysis, chap. Approximation Algorithms for Minimizing Average
  Weighted Completion Time. Chapman \& Hall/CRC (2004)

\bibitem{CK05}
Christodoulou, G., Koutsoupias, E.: The price of anarchy of finite congestion
  games. In: Proceedings of the 37th Annual ACM Symposium on Theory of
  Computing (STOC). pp. 67--73 (2005)

\bibitem{CMS06}
Christodoulou, G., Mirrokni, V.S., Sidiropoulos, A.: Convergence and
  approximation in potential games. Theor. Comput. Sci.  \textbf{438},  13--27
  (2012)

\bibitem{CG18}
Cole, R., Gkatzelis, V.: Approximating the {N}ash social welfare with
  indivisible items. {SIAM} J. Comput.  \textbf{47}(3),  1211--1236 (2018)

\bibitem{czum}
Czumaj, A., V\"ocking, B.: Tight bounds for worst-case equilibria. ACM Trans.
  Algorithms  \textbf{3}(1),  4:1--4:17 (2007)

\bibitem{FPT04}
Fabrikant, A., Papadimitriou, C.H., Talwar, K.: The complexity of pure {{N}ash}
  equilibria. In: Proceedings of the 36th Annual ACM Symposium on Theory of
  Computing (STOC). pp. 604--612 (2004)

\bibitem{FKT89}
Faigle, U., Kern, W., Turan, G.: On the performance of on-line algorithms for
  particular problems. Acta Cybernetica  \textbf{9},  107--119 (1989)

\bibitem{FW00}
Fleischer, R., Wahl, M.: Online scheduling revisited. Journal of Scheduling
  \textbf{3},  343--353 (2000)

\bibitem{FW86}
Fleming, P.J., Wallace, J.: How not to lie with statistics: The correct way to
  summarize benchmark results. Commun. {ACM}  \textbf{29}(3),  218--221 (1986)

\bibitem{F10}
Fotakis, D.: Stackelberg strategies for atomic congestion games. Theor. Comp.
  Sys.  \textbf{47}(1),  218--249 (2010)

\bibitem{GLMM06}
Gairing, M., L{\"{u}}cking, T., Mavronicolas, M., Monien, M.: The price of
  anarchy for restricted parallel links. Parallel Processing Letters
  \textbf{16}(1),  117--132 (2006)

\bibitem{GS07}
Gairing, M., Schoppmann, F.: Total latency in singleton congestion games. In:
  Proceedings of the Third International Workshop on Internet and Network
  Economics (WINE). LNCS, vol.~4858, pp. 381--387 (2007)

\bibitem{GW93}
Galambos, G., Woeginger, G.: An on-line scheduling heuristic with better worst
  case ratio than graham's list scheduling. SIAM Journal on Computing
  \textbf{22},  349--355 (1993)

\bibitem{GHM18}
Garg, J., Hoefer, M., Mehlhorn, K.: Approximating the {N}ash social welfare
  with budget-additive valuations. In: Proceedings of the Twenty-Ninth Annual
  {ACM-SIAM} Symposium on Discrete Algorithms (SODA). pp. 2326--2340 (2018)

\bibitem{Grah}
Graham, R.L.: Bounds for certain multiprocessing anomalies. The Bell System
  Technical Journal  \textbf{45}(9),  1563--1581 (1966)

\bibitem{HS87}
Hochbaum, D.S., Shmoys, D.B.: Using dual approximation algorithms for
  scheduling problems: theoretical and practical results. Journal of {ACM}
  \textbf{34},  144--162 (1987)

\bibitem{HS76}
Horowitz, E., Sahni, S.K.: Exact and approximate algorithms for scheduling
  nonidentical processors. Journal of {ACM}  \textbf{23},  317--327 (1976)

\bibitem{I+05}
Ieong, S., McGrew, R., Nudelman, E., Shoham, Y., Sun, Q.: Fast and compact: A
  simple class of congestion games. In: Proceedings of the 20th AAAI Conference
  on Artificial Intelligence (AAAI). pp. 489--494 (2005)

\bibitem{dKU16}
de~Jong, J., Klimm, M., Uetz, M.: Efficiency of equilibria in uniform matroid
  congestion games. In: Proceedings of the 9th International Symposium on
  Algorithmic Game Theory (SAGT). LNCS, vol.~9928, pp. 105--116 (2016)

\bibitem{KPT96}
Karger, D.R., Phillips, S.J., Torng, E.: A better algorithm for an ancient
  scheduling problem. Journal of Algorithms  \textbf{20},  400--430 (1996)

\bibitem{KST19}
Klimm, M., Schmand, D., T{\"o}nnis, A.: The online best reply algorithm for
  resource allocation problems. In: Proceedings of the 12th International
  Symposium on Algorithmic Game Theory (SAGT). pp. 200--215 (2019)

\bibitem{KP99}
Koutsoupias, E., Papadimitriou, C.: Worst-case equilibria. In: Proceedings of
  the 16th Annual Conference on Theoretical Aspects of Computer Science. pp.
  404--413. STACS (1999)

\bibitem{LST90}
Lenstra, J.K., Shmoys, D.B., Tardos, E.: Approximation algorithms for
  scheduling unrelated parallel machines. Mathematical Programming
  \textbf{46},  259--271 (1990)

\bibitem{LMMR08}
L{\"u}cking, T., Mavronicolas, M., Monien, B., Rode, M.: A new model for
  selfish routing. Theoretical Computer Science  \textbf{406}(3),  187--2006
  (2008)

\bibitem{MS12}
Meyers, C.A., Schulz, A.S.: The complexity of welfare maximization in
  congestion games. Networks  \textbf{59}(2),  252--260 (2012)

\bibitem{Nash50}
{N}ash, J.: The bargaining problem. Econometrica  \textbf{18}(2),  155--162
  (1950)

\bibitem{pigo}
Pigou, A.C.: The economics of welfare. London: Macmillan and Co. (1938)

\bibitem{R73}
Rosenthal, R.W.: A class of games possessing pure-strategy {{N}ash} equilibria.
  International Journal of Game Theory  \textbf{2},  65--67 (1973)

\bibitem{R03}
Roughgarden, T.: The price of anarchy is independent of the network topology.
  J. Comput. Syst. Sci.  \textbf{67}(2),  341--364 (2003)

\bibitem{R04}
Roughgarden, T.: The maximum latency of selfish routing. In: Proceedings of the
  Fifteenth Annual {ACM-SIAM} Symposium on Discrete Algorithms (SODA). pp.
  980--981 (2004)

\bibitem{roughowb}
Roughgarden, T., Tardos, E.: How bad is selfish routing? J. ACM
  \textbf{49}(2),  236--259 (2002)

\bibitem{rougboun}
Roughgarden, T., Tardos, E.: Bounding the inefficiency of equilibria in
  nonatomic congestion games. Games and Economic Behavior  \textbf{47}(2),
  389--403 (2004)

\bibitem{R01}
Rudin~III, J.F.: Improved bounds for the on-line scheduling problem. The
  University of Texas at Dallas (2001)

\bibitem{STZ07}
Suri, S., T{\'o}th, C., Zhou, Y.: Selfish load balancing and atomic congestion
  games. Algorithmica  \textbf{47}(1),  79--96 (2007)

\bibitem{V19}
Vinci, C.: Non-atomic one-round walks in congestion games. Theor. Comput. Sci.
  \textbf{764},  61--79 (2019)

\bibitem{V07}
V{\"o}cking, B.: Algorithmic Game Theory, chap. Selfish Load Balancing.
  Cambridge (2007)

\bibitem{ward}
Wardrop, J.G.: Some theoretical aspects of road traffic research. Proceedings
  of the Institution of Civil Engineers, Part II  \textbf{1}(36),  352--362
  (1952)

\end{thebibliography}

\newpage

\appendix
\section{Missing Proofs of Subsection 3.1}
\subsection{Proof of Fact \ref{prel_lem_w}}
First of all, by exploiting the structure of the optimization problem, we can introduce the  normalization constraint $\sum_{j\in R}\alpha_j k_j=\sum_{j\in R}\alpha_j o_j=1$ without affecting the optimal value of the problem. By introducing such normalization constraint, the optimization problem becomes the following linear program:
\begin{align}
\max \quad & \sum_{j\in R(\sg^*)}\alpha_j o_j\left(\ln(\ell_j(k_j+o_j))-\ln(\ell_j(o_j))\right)\label{w_form_5c}\\
\text{s.t.} \quad &\sum_{j\in R}\alpha_j k_j=1,\quad   \sum_{j\in R}\alpha_j o_j=1,\quad \alpha_j\geq 0\ \forall j\in R.\nonumber
\end{align}
By standard arguments of linear programming, we have that an optimal solution of (\ref{w_form_5c}) is given by a vertex of the polyhedral region defined by the linear constraints of (\ref{w_form_5c}), and such vertex can be obtained by nullifying at least $|R|-2$ variables. Thus, we can assume w.l.o.g. that in an optimal solution there are at most
two variables, say $\alpha_1$ and $\alpha_2$, such that $\alpha_1\geq 0$ and $\alpha_2\geq 0$.
If both variables $\alpha_1$ and $\alpha_2$ are positive, we have that they are univocally determined by the constraints $\alpha_1k_1+\alpha_2k_2=1$ and $\alpha_1o_1+\alpha_2o_2=1$, so that 
\begin{align}
\alpha_1=\frac{o_2-k_2}{k_1 o_2-k_2 o_1}> 0,\quad \alpha_2=\frac{k_1-o_1}{k_1 o_2-k_2 o_1}> 0,\quad \alpha_j=0\ \forall j\geq 3.\label{w_form_6b}
\end{align}
By symmetry, we can assume w.l.o.g. that $k_1 o_2-k_2 o_1>0$, so that $k_1> o_1\geq 0$ and $o_2>k_2\geq 0$. 

Now, assume that one variable among $\alpha_1$ and $\alpha_2$ is null, and assume w.l.o.g. that $\alpha_2=0$. In this case, we necessarily get $k_1=o_1>0$ and $\alpha_1=1/o_1$, and the value of the objective function becomes $\ln(f_1(2o_1))-\ln(f_1(o_1))$. Anyway, we obtain the same value of the objective function by using in (\ref{w_form_5c}) the values of $\alpha_1$ and $\alpha_2$ considered in (\ref{w_form_6b}), and by setting $k_1=o_1>0$ and $o_2>k_2\geq 0$. We also observe that, if $o_1=0$ and $\alpha_1,\alpha_2>0$, the value of the objective function is $\ln(f_2(k_2+o_2))-\ln(f_2(o_2))\leq \ln(f_2(2o_2))-\ln(f_2(o_2))$, i.e., at most equal to the value of the objective function in which one of the two variables among $\alpha_1$ and $\alpha_2$ is null. Thus, we may omit the case $o_1=0$. 

We conclude that, by considering the objective function of (\ref{w_form_5c}) with the values $\alpha_1$ and $\alpha_2$ defined in (\ref{w_form_6b}), and by considering  the supremum of the objective function over $k_1\geq o_1>0$ and $o_2>k_2\geq 0$, we obtain the upper bound of the claim. 
\subsection{Proof of Theorem \ref{thm_w_low}}
First of all, we deal with part (ii) of the claim: Let us assume that $\mathcal{C}$ is abscissa-scaling, ordinate-scaling, and unbounded-including. In order to prove part (ii), we equivalently show that for any $M<\sup_{k_1\geq o_1> 0,o_2>k_2\geq 0,f_1,f_2\in \mathcal{C}}\left(\frac{f_1(k_1+o_1)}{f_1(o_1)}\right)^{\frac{(o_2-k_2)o_1}{k_1 o_2-k_2 o_1}}\left(\frac{f_2(k_2+o_2)}{f_2(o_2)}\right)^{\frac{(k_1-o_1)o_2}{k_1 o_2-k_2 o_1}}$ there exists a game $\CG\in {\sf WLB}(\mathcal{C})$ such that ${\sf NPoA}(\CG)>M$. 

Let $f_1,f_2\in\mathcal{C}$, $k_1,k_2,o_1,o_2\geq 0$ such that $k_1\geq o_1>0,o_2>k_2\geq 0$, and a sufficiently small $\epsilon>0$ such that $\left(\frac{f_1(k_1+o_1)}{f_1(o_1)}\right)^{\frac{(o_2-k_2)o_1}{k_1 o_2-k_2 o_1}}\left(\frac{f_2(k_2+o_2)}{f_2(o_2)}\right)^{\frac{(k_1-o_1)o_2}{k_1 o_2-k_2 o_1}}> M+\epsilon.$ 
Let $f,g\in\mathcal{C}$ be such that $f(x):=f_1(o_1x)$ and $g(x):=f_2(o_2x)$,  and let $k:=k_1/o_1$ and $h:=k_2/o_2$. Since $$\left(\frac{f_1(k_1+o_1)}{f_1(o_1)}\right)^{\frac{(o_2-k_2)o_1}{k_1 o_2-k_2 o_1}}\left(\frac{f_2(k_2+o_2)}{f_2(o_2)}\right)^{\frac{(k_1-o_1)o_2}{k_1 o_2-k_2 o_1}}=\left(\frac{f(k+1)}{f(1)}\right)^{\frac{1-h}{k-h}}\left(\frac{g(h+1)}{g(1)}\right)^{\frac{k-1}{k-h}}$$

we have that 
\begin{equation}\label{w_low_form_0_app}
\left(\frac{f(k+1)}{f(1)}\right)^{\frac{1-h}{k-h}}\left(\frac{g(h+1)}{g(1)}\right)^{\frac{k-1}{k-h}}> M+\epsilon\text{, for some } f,g\in\mathcal{C},\ k\geq 1,\text{ and }h<1.
\end{equation}
Observe that $f$ and $g$ can be chosen in such a way that they are non-constant functions. Indeed, if one of them is constant, it is sufficient replacing it with an arbitrary non-constant function, so that  (\ref{w_low_form_0_app}) holds as well. Since $\mathcal{C}$ is unbounded-including and $f,g$ are non-constant, we have that $\lim_{x\rightarrow \infty}f(x)=\lim_{x\rightarrow \infty}g(x)=\infty$. 

First of all, we assume that $h>0$. Given two integers $m\geq 3$ and $s\geq 1$, let $\CG(m,s)$ be a symmetric weighted load balancing game where the resources are partitioned into $2m$ groups $R_1,R_2,R_3\ldots, R_{2m}$. Each group $R_j$ has $s^{j-1}$ resources and the latency function of each resource $r\in R_{j}$ is defined as $\ell_r(x):=\alpha_j \hat{f}_j\left(\beta_j x\right)$ with 
\begin{align}
&\hat{f}_j:=
\begin{cases}
f & \text{ if }j\leq m-1\\
g & \text{ if }j\geq m
\end{cases},\quad 
\beta_j:=
\begin{cases}
\left(\frac{s}{k}\right)^{j-1} & \text{ if }j\leq m-1\\
\left(\frac{s}{h}\right)^{j-m}\left(\frac{s}{k}\right)^{m-1} & \text{ if }m\leq j\leq 2m
\end{cases},\\
&\alpha_j:=
\begin{cases}
\left(\frac{f(k)}{f(k+1)}\right)^{j-1} & \text{ if }j\leq m-1\\
\left(\frac{g(h)}{g(h+1)}\right)^{j-m}\left(\frac{f(k)}{g(h+1)}\right)\left(\frac{f(k)}{f(k+1)}\right)^{m-2} & \text{ if }m\leq j\leq 2m-1\\
\frac{g(h)}{g(1)}\left(\frac{g(h)}{g(h+1)}\right)^{m-1}\left(\frac{f(k)}{g(h+1)}\right)\left(\frac{f(k)}{f(k+1)}\right)^{m-2} & \text{ if }j=2m
\end{cases}.
\end{align}
\normalsize
The set of players $\N$ is partitioned into $2m-1$ sets $\N_1,\N_2,\ldots, \N_{2m-1}$, and each group $\N_j$ has $s^{j}$ players having weight $w_j:=1/\beta_{j+1}$. Let $\sg$ be the strategy profile in which, for any $j\in [2m-1]$, each resource of group $R_{j}$ is selected by exactly $s$ players of group $\N_j$ (see Figure \ref{fig:boat2}.a). Observe that, by construction of $\alpha_j,\beta_j,w_j$, the following properties hold:\small
\begin{equation}\label{prop_low_w_app}
\begin{cases}
\alpha_{j}f(k)=\alpha_{j+1}f(k+1) &\text{if }j\leq m-2\\
\alpha_{j}f(k)=\alpha_{j+1}g(h+1) &\text {if }j=m-1\\
\alpha_{j}g(h)=\alpha_{j+1}g(h+1) &\text {if }m\leq j\leq 2m-2\\
\alpha_{j}g(h)=\alpha_{j+1}g(1)& \text {if }j=2m-1
\end{cases},
\begin{cases}
\beta_{j} w_j s=k,\ w_j|N_j|=k^j&\text{if }j\leq m-1\\
\beta_{j} w_j s=h,\ w_j|N_j|=h^{j+1-m} k^{m-1} &\text {if }m\leq j\leq 2m-1\\
\beta_{j+1} w_j=1&\text{if }j\leq 2m-1
\end{cases}
\end{equation}\normalsize
We now show that, by choosing a sufficiently large $s$, the strategy profile $\sg$ is a pure Nash equilibrium of $\CG(m,s)$. 
Let $j\in [2m-1]$, $t\in [2m]$, and $i$ be an arbitrary player selecting a resource $r_j$ of group $R_j$ in the strategy profile $\sg$, and assume that she deviates to a resource $r_t$ of group $R_{t}$. We have three cases:
\begin{description}
\item[$\bullet\ {\bf t\bm=j\bm+1}$:] First of all, assume that $j\leq m-2$. By using (\ref{prop_low_w_app}), we get $
cost_i(\sg)=\ell_{r_j}(k_{r_j}(\sg))=\alpha_j\hat{f}_j\left(\beta_j s w_j\right)=\alpha_jf\left(k\right)=\alpha_{j+1}f\left(k+1\right)=\alpha_{j+1}f\left(\beta_{j+1} s w_{j+1}+\beta_{j+1}w_j\right)=\alpha_{j+1}\hat{f}_{j+1}\left(\beta_{j+1}(s w_{j+1}+w_j)\right)=\ell_{r_h}(k_{r_h}(\sg_{-i},\{r_t\}))=cost_i(\sg_{-i},\{r_t\})$. The cases $j=m-1$, $m\leq j\leq 2m-2$, and $j=2m-1$ can be separately considered by exploiting (\ref{prop_low_w_app}), so that one can analogously show  $cost_i(\sg)=\alpha_j\hat{f}_j(\beta_j s w_j)=\alpha_{j+1}\hat{f}_{j+1}(\beta_{j+1} (s w_{j+1}+w_j))=cost_i(\sg_{-i},\{r_t\})$, where we set $w_{2m}:=0$. 
\item[$\bullet\ \bf t\bm\leq j:$] From the previous case, we have that if one player is playing a resource at some level $l$, and deviates to some resource at level $l+1$, her cost does not change. Thus, we necessarily have that the cost of each resource in strategy profile $\sg$ is a non-increasing function of the level $l\in [2m]$ which it belongs to. Thus, since $t\leq j$, we necessarily have that  $cost_i(\sg)\leq cost_i(\sg_{-i},\{r_t\})$. 

\item[$\bullet\ \bf t\bm>j\bm+1:$]If we consider the asymptotic behaviour of $cost_i(\sg)$ and $cost_i(\sg_{-i},\{r_t\})$ with respect to parameter $s$, we get $cost_i(\sg)=\alpha_j\hat{f}_j(\beta_j sw_j)=\alpha_j\hat{f}_j(\Theta(s^{j-1}\cdot s\cdot  s^{-j}))=\Theta(1)$, thus $cost_i(\sg)$ does not depend on $s$; furthermore, we get $cost_i(\sg_{-i},\{r_t\})\geq \alpha_j\hat{f}_j(\beta_t w_{j+1})=\alpha_j\hat{f}_j(\Theta(s^{t-1}s^{-j}))\geq \alpha_j\hat{f}_j(\Theta(s))$, thus, since $\lim_{x\rightarrow \infty}\hat{f}(x)=\infty$, we have that $cost_i(\sg_{-i},\{r_t\})$ can be arbitrarily large as $s$ increases. We conclude that, by taking a sufficiently large $s$, we get $cost_i(\sg)\leq cost_i(\sg_{-i},\{r_t\})$ for any $j$ and $t>j+1$. 
\end{description}
The previous case-analysis shows that player $i$ does not improve her cost after deviating in favour of any resource $r_t$ at level $t$, for any $t\in [2m]$, and thus that $\sg$ is a pure Nash equilibrium of $\CG(m,s)$. 
For any integer $m\geq 3$, let $s_m$ be a sufficiently large integer such that (according to the previous case-analysis) $\sg$ is a pure Nash equilibrium of the game $\CG(m,s_m)$. 

Now, let $\sg^*$ be the strategy profile of $\CG(m,s_m)$ in which, for any $j\in [2m-1]$, each resource of group $R_{j+1}$ is selected by exactly one player of group $\N_j$ (see Figure \ref{fig:boat2}.b). By exploiting the definitions of $\alpha_j$,$\beta_j$, $\hat{f}_j$, $w_j$, and $N_j$, we have that:\small
\begin{align}
&{\sf NPoA}(\CG(m,s_m))\nonumber\\
\geq &\frac{{\sf NSW}(\sg)}{{\sf NSW}(\sg^*)}\nonumber\\
=&\left(\frac{\prod_{j=1}^{2m-1}\left(\alpha_j\hat{f}_j\left(\beta_j s_m w_j\right)\right)^{|\N_j|w_j}}{\prod_{j=2}^{2m}\left(\alpha_{j}\hat{f}_j\left(\beta_j w_{j-1}\right)\right)^{|\N_{j-1}|w_{j-1}}}\right)^{\frac{1}{\sum_{j=1}^{2m-1}|\N_j|w_j}}\nonumber\\
=&\left(\frac{\left(\prod_{j=1}^{m-1}\left(\alpha_jf\left(k\right)\right)^{|\N_j|w_j} \right)\left(\prod_{j=m}^{2m-1}\left(\alpha_j g\left(h\right)\right)^{|\N_j|w_j}\right)}{\left(\prod_{j=2}^{m-1}\left(\alpha_{j}f\left(1\right)\right)^{|\N_{j-1}|w_{j-1}}\right)\left(\prod_{j=m}^{2m}\left(\alpha_j g\left(1\right)\right)^{|\N_{j-1}|w_{j-1}}\right)}\right)^{\frac{1}{\sum_{j=1}^{2m-1}|\N_j|w_j}}\label{w_low_form_3-_app}\\
=&\left(\frac{\left(\prod_{j=1}^{m-1}\left(\alpha_jf\left(k\right)\right)^{k^j}\right) \left(\prod_{j=m}^{2m-1}\left(\alpha_j g\left(h\right)\right)^{h^{j+1-m}k^{m-1}}\right)}{\left(\prod_{j=2}^{m-1}\left(\alpha_{j}f\left(1\right)\right)^{k^{j-1}}\right)\left(\prod_{j=m}^{2m}\left(\alpha_j g\left(1\right)\right)^{h^{j-m}k^{m-1}}\right)}\right)^{\frac{1}{\sum_{j=1}^{2m-1}|\N_j|w_j}}\nonumber\\
=&\left(\frac{\left(\prod_{j=1}^{m-2}\left(\alpha_{j+1}f\left(k+1\right)\right)^{k^j}\right) \left(\prod_{j=m-1}^{2m-2}\left(\alpha_{j+1} g\left(h+1\right)\right)^{h^{j+1-m}k^{m-1}}\right)\left(\alpha_{2m} g\left(1\right)\right)^{h^{m}k^{m-1}}}{\left(\prod_{j=2}^{m-1}\left(\alpha_{j}f\left(1\right)\right)^{k^{j-1}}\right)\left(\prod_{j=m}^{2m}\left(\alpha_j g\left(1\right)\right)^{h^{j-m}k^{m-1}}\right)}\right)^{\frac{1}{\sum_{j=1}^{2m-1}|\N_j|w_j}}\label{w_low_form_3_app}\\
=&\left(\frac{\left(\prod_{j=1}^{m-2}\left(\alpha_{j+1}f\left(k+1\right)\right)^{k^j}\right) \left(\prod_{j=m-1}^{2m-2}\left(\alpha_{j+1} g\left(h+1\right)\right)^{h^{j+1-m}k^{m-1}}\right)\left(\alpha_{2m} g\left(1\right)\right)^{h^{m}k^{m-1}}}{\left(\prod_{j=1}^{m-2}\left(\alpha_{j+1}f\left(1\right)\right)^{k^{j}}\right)\left(\prod_{j=m-1}^{2m-1}\left(\alpha_{j+1} g\left(1\right)\right)^{h^{j+1-m}k^{m-1}}\right)}\right)^{\frac{1}{\sum_{j=1}^{2m-1}|\N_j|w_j}}\nonumber\\
=&\left(\frac{\left(\prod_{j=1}^{m-2}\left(\alpha_{j+1}f\left(k+1\right)\right)^{k^j}\right) \left(\prod_{j=m-1}^{2m-2}\left(\alpha_{j+1} g\left(h+1\right)\right)^{h^{j+1-m}k^{m-1}}\right)}{\left(\prod_{j=1}^{m-2}\left(\alpha_{j+1}f\left(1\right)\right)^{k^{j}}\right)\left(\prod_{j=m-1}^{2m-2}\left(\alpha_{j+1} g\left(1\right)\right)^{h^{j+1-m}k^{m-1}}\right)}\right)^{\frac{1}{\sum_{j=1}^{m-2}k^j+\sum_{j=m-1}^{2m-1}h^{j+1-m}k^{m-1}}}\nonumber\\
=&\left(\left(\prod_{j=1}^{m-2}\left(\frac{f(k+1)}{f(1)}\right)^{k^j}\right)\left(\prod_{j=m-1}^{2m-2}\left(\frac{g(h+1)}{g(1)}\right)^{h^{j+1-m}k^{m-1}}\right)\right)^{\frac{1}{\sum_{j=1}^{m-2}k^j+\sum_{j=m-1}^{2m-1}h^{j+1-m}k^{m-1}}}\nonumber\\
=&\left(\left(\frac{f(k+1)}{f(1)}\right)^{\sum_{j=1}^{m-2}k^j}\left(\frac{g(h+1)}{g(1)}\right)^{\sum_{j=m-1}^{2m-2}h^{j+1-m}k^{m-1}}\right)^{\frac{1}{\sum_{j=1}^{m-2}k^j+\sum_{j=m-1}^{2m-1}h^{j+1-m}k^{m-1}}},\label{w_low_form_4_app}
\end{align}
\normalsize
where (\ref{w_low_form_3-_app}) and (\ref{w_low_form_3_app}) come from (\ref{prop_low_w_app}). We have two cases: $k>1$ and $k=1$. If $k>1$, by continuing from (\ref{w_low_form_4_app}) and by considering a sufficiently large $m$, we get
\begin{align}
&\left(\left(\frac{f(k+1)}{f(1)}\right)^{\sum_{j=1}^{m-2}k^j}\left(\frac{g(h+1)}{g(1)}\right)^{\sum_{j=m-1}^{2m-2}h^{j+1-m}k^{m-1}}\right)^{\frac{1}{\sum_{j=1}^{m-2}k^j+\sum_{j=m-1}^{2m-1}h^{j+1-m}k^{m-1}}}\nonumber\\
=&\left(\left(\frac{f(k+1)}{f(1)}\right)^{\frac{k^{m-1}-k}{k-1}}\left(\frac{g(h+1)}{g(1)}\right)^{k^{m-1}\left(\frac{1-h^{m}}{1-h}\right)}\right)^{\frac{1}{\frac{k^{m-1}-k}{k-1}+k^{m-1}\left(\frac{1-h^{m+1}}{1-h}\right)}}\nonumber\\
=&\left(\frac{f(k+1)}{f(1)}\right)^{\frac{\frac{k^{m-1}-k}{k-1}}{\frac{k^{m-1}-k}{k-1}+k^{m-1}\left(\frac{1-h^{m+1}}{1-h}\right)}}\left(\frac{g(h+1)}{g(1)}\right)^{\frac{k^{m-1}\left(\frac{1-h^{m}}{1-h}\right)}{\frac{k^{m-1}-k}{k-1}+k^{m-1}\left(\frac{1-h^{m+1}}{1-h}\right)}}\nonumber\\
=&\left(\frac{f(k+1)}{f(1)}\right)^{\frac{\frac{1-h}{1-h^{m+1}}}{\frac{1-h}{1-h^{m+1}}+k^{m-1}\left(\frac{k-1}{k^{m-1}-k}\right)}}\left(\frac{g(h+1)}{g(1)}\right)^{\frac{k^{m-1}\left(\frac{k-1}{k^{m-1}-k}\right)\left(\frac{1-h^m}{1-h^{m+1}}\right)}{\frac{1-h}{1-h^{m+1}}+k^{m-1}\left(\frac{k-1}{k^{m-1}-k}\right)}}\nonumber\\
\geq &\lim_{m\rightarrow \infty} \left(\frac{f(k+1)}{f(1)}\right)^{\frac{\frac{1-h}{1-h^{m+1}}}{\frac{1-h}{1-h^{m+1}}+k^{m-1}\left(\frac{k-1}{k^{m-1}-k}\right)}}\left(\frac{g(h+1)}{g(1)}\right)^{\frac{k^{m-1}\left(\frac{k-1}{k^{m-1}-k}\right)\left(\frac{1-h^m}{1-h^{m+1}}\right)}{\frac{1-h}{1-h^{m+1}}+k^{m-1}\left(\frac{k-1}{k^{m-1}-k}\right)}}-\epsilon\label{w_low_form_5-_app}\\
=&\left(\frac{f(k+1)}{f(1)}\right)^{\frac{1-h}{(1-h)+(k-1)}}\left(\frac{g(h+1)}{g(1)}\right)^{\frac{k-1}{(1-h)+(k-1)}}-\epsilon\label{w_low_form_5_app}\\
=&\left(\frac{f(k+1)}{f(1)}\right)^{\frac{1-h}{k-h}}\left(\frac{g(h+1)}{g(1)}\right)^{\frac{k-1}{k-h}}-\epsilon\nonumber\\
> & M+\epsilon-\epsilon\label{w_low_form_6_app}\\
= &M,\label{w_low_form_7_app}
\end{align}
where (\ref{w_low_form_5-_app}) holds if $m$ is sufficiently large, (\ref{w_low_form_5_app}) comes from the fact that $k>1$ and $h<1$, and (\ref{w_low_form_6_app}) comes from (\ref{w_low_form_0_app}). 

If $k=1$, by continuing from (\ref{w_low_form_4_app}), we get:
\begin{align}
&\left(\left(\frac{f(k+1)}{f(1)}\right)^{\sum_{j=1}^{m-2}k^j}\left(\frac{g(h+1)}{g(1)}\right)^{\sum_{j=m-1}^{2m-2}h^{j+1-m}k^{m-1}}\right)^{\frac{1}{\sum_{j=1}^{m-2}k^j+\sum_{j=m-1}^{2m-1}h^{j+1-m}k^{m-1}}}\nonumber\\
=&\left(\frac{f(k+1)}{f(1)}\right)^{\frac{m-2}{m-2+\frac{1-h^{m+1}}{1-h}}}\left(\frac{g(h+1)}{g(1)}\right)^{\frac{\frac{1-h^{m}}{1-h}}{m-2+\frac{1-h^{m+1}}{1-h}}}\nonumber\\
\geq &\lim_{m\rightarrow \infty}\left(\frac{f(k+1)}{f(1)}\right)^{\frac{m-2}{m-2+\frac{1-h^{m+1}}{1-h}}}\left(\frac{g(h+1)}{g(1)}\right)^{\frac{\frac{1-h^{m}}{1-h}}{m-2+\frac{1-h^{m+1}}{1-h}}}-\epsilon\label{w_low_form_8-_app}\\
=&\left(\frac{f(k+1)}{f(1)}\right)^1\left(\frac{g(h+1)}{g(1)}\right)^0-\epsilon\nonumber\\
=&\left(\frac{f(k+1)}{f(1)}\right)^{\frac{1-h}{k-h}}\left(\frac{g(h+1)}{g(1)}\right)^{\frac{k-1}{k-h}}-\epsilon\label{w_low_form_8_app}\\
> & M+\epsilon-\epsilon\label{w_low_form_9_app}\\
= &M,\label{w_low_form_10_app}
\end{align}
where (\ref{w_low_form_8-_app}) holds if $m$ is sufficiently large, (\ref{w_low_form_8_app}) comes from the fact that $k=1$ and $h<1$, and (\ref{w_low_form_9_app}) comes from (\ref{w_low_form_0_app}). By (\ref{w_low_form_7_app}) and (\ref{w_low_form_10_app}), we have that, for a sufficiently large $m$, ${\sf NPoA}(\CG(m,s_m))\geq M$, thus showing part (ii) of the claim. 

If $h=0$, we consider a load balancing game defined as $\CG(m,s_m)$, but restricted to the resources of groups $R_1,\ldots, R_m$ and to the players of groups $N_1,\ldots, N_{m-1}$. By using the same proof arguments as those used for $h>0$, one can show the claim as well.

We now show part (i). Assume that $\mathcal{C}$ is abscissa-scaling and ordinate-scaling. Analogously to the proof of part (ii), we have that (\ref{w_low_form_0_app}) holds. Moreover, let $\CG'(m,s)$ be a weighted load balancing game equal to game $\CG(m,s)$ defined in the proof of part (ii), except for the strategy set of each player: for any $j\in [2m-1]$, the strategy set of each player of group $\N_j$ is ${\Sigma}_j:=R_j\cup R_{j+1}$. 
Let $\sg$ and $\sg^*$ be the strategy profiles defined as in game $\CG(m,s)$. 
By considering the case $h=j+1$ analyzed in the proof of part (ii) of the claim, it also holds that $\sg$ is a pure Nash equilibrium of $\CG'(m,s)$ for any $s\geq 1$. 
Therefore, if we take a sufficiently large $m$, an arbitrary $s\geq 1$, and by applying to game $\CG'(m,s)$ the same inequalities as in (\ref{w_low_form_7_app}) and (\ref{w_low_form_10_app}), part (i) follows.
\subsection{Proof of Lemma \ref{lemma_PoA_weighted_Polynomial latency functions}}
We have that
\begin{align}
&\sup_{k_1\geq o_1> 0,o_2>k_2\geq 0,f_1,f_2\in \mathcal{P}(p)}\left(\frac{f_1(k_1+o_1)}{f_1(o_1)}\right)^{\frac{(o_2-k_2)o_1}{k_1 o_2-k_2 o_1}}\left(\frac{f_2(k_2+o_2)}{f_2(o_2)}\right)^{\frac{(k_1-o_1)o_2}{k_1 o_2-k_2 o_1}}\nonumber\\
=&\sup_{\substack{k_1\geq o_1> 0,\\o_2>k_2\geq 0,\\\alpha_0,\ldots, \alpha_p,\geq 0\\\beta_0,\ldots,\beta_p\geq 0}}\left(\frac{\sum_{d=0}^p\alpha_d(k_1+o_1)^d}{\sum_{d=0}^p\alpha_d o_1^d}\right)^{\frac{(o_2-k_2)o_1}{k_1 o_2-k_2 o_1}}\left(\frac{\sum_{d=0}^p\beta_d(k_2+o_2)^d}{\sum_{d=0}^p\beta_d o_2^d}\right)^{\frac{(k_1-o_1)o_2}{k_1 o_2-k_2 o_1}}\nonumber\\
= &\sup_{\substack{k_1\geq o_1> 0,\\o_2>k_2\geq 0}}\left(\max_{d\in [p]\cup\{0\}}\frac{(k_1+o_1)^d}{o_1^d}\right)^{\frac{(o_2-k_2)o_1}{k_1 o_2-k_2 o_1}}\left(\max_{d\in [p]\cup\{0\}}\frac{(k_2+o_2)^d}{o_2^d}\right)^{\frac{(k_1-o_1)o_2}{k_1 o_2-k_2 o_1}}\nonumber\\
=&\sup_{k_1\geq o_1> 0,o_2>k_2\geq 0}\left(\left(\frac{k_1+o_1}{o_1}\right)^p\right)^{\frac{(o_2-k_2)o_1}{k_1 o_2-k_2 o_1}}\left(\left(\frac{k_2+o_2}{o_2}\right)^p\right)^{\frac{(k_1-o_1)o_2}{k_1 o_2-k_2 o_1}}\nonumber\\
=&\sup_{k\geq 1,0\leq h<1}\left((k+1)^{\frac{1-h}{k-h}}(h+1)^{\frac{k-1}{k-h}}\right)^p,\label{w_pol_form_11}
\end{align}
where (\ref{w_pol_form_11}) can be obtained  by setting $k:=k_1/o_1$ and $h:=k_2/o_2$. Now, we show that the maximum value of function $F(k,h):=(k+1)^{\frac{1-h}{k-h}}(h+1)^{\frac{k-1}{k-h}}$ over $k\geq 1$ and $0\leq h<1$ is equal to $2$. Observe that $
\ln(F(k,h))=\frac{1-h}{k-h}\ln(k+1)+\frac{k-1}{k-h}\ln(h+1)\leq \ln\left(\frac{1-h}{k-h}(k+1)+\frac{k-1}{k-h}(h+1)\right)$, 
where the last inequality holds since $\ln(F(k,h))$ is defined as convex combination of $\ln(k+1)$ and $\ln(h+1)$, and because of the concavity of the natural logarithm. Thus, we get
\begin{align}
F(k,h)&\leq \frac{1-h}{k-h}(k+1)+\frac{k-1}{k-h}(h+1)=\frac{(k-h)+(k-h)}{k-h}=2.\label{w_pol_form_13}
\end{align}
Finally, since $F(k,h)=2$ for $k=1$ and $h=0$, and because of (\ref{w_pol_form_13}), we have that the maximum of $F(k,h)$ over $k\geq 1$ and $0\leq h<1$ is $2$. Thus, we get that (\ref{w_pol_form_11}) is at most $2^p$. 
\subsection{Proof of Corollary \ref{cor1b}}
Let $\epsilon>0$. Let $\CG'(m)$ be the load balancing game defined as the game $\CG'(m,s)$ considered in the proof of part (i) of Theorem \ref{thm_w_low}, with $s=2$, $k=1$, $h=0$, and $f,g$ defined as $f(x)=g(x)=x^p$. One can easily observe that $\CG'(m)$ is a game with identical resources. Furthermore, because of the proof of Theorem \ref{thm_w_low}, there exists a sufficiently large integer $m$ such that ${\sf NPoA}(\CG'(m))>2^p-\epsilon$, and the claim follows by the arbitrariness of $\epsilon>0$. 
\section{Missing Proofs of Subsection 3.2}
\subsection{Proof of Theorem \ref{thm_unw_upp}}
Let $\CG\in {\sf ULB(\mathcal{C})}$ be an unweighted load balancing game with latency functions in $\mathcal{C}$, and let $\sg$ and $\sg^*$ be a worst-case pure Nash equilibrium and an optimal strategy profile of $\CG$, respectively. Let $k_j$ denote $k_j(\sg)$ and $o_j$ denote $k_j(\sg^*)$. As in Theorem \ref{thm_w_upp}, we get
\begin{align}
\prod_{j\in R(\sg)}\ell_{j}(k_{j})^{k_j}\leq \prod_{j\in R(\sg^*)}\ell_{j}(k_{j}+1)^{o_j}.\label{unw_form_nash} 
\end{align}
By exploiting the properties of the logarithmic function, we get
\begin{align}
\ln\left({\sf NPoA}(\CG)\right)&=\ln\left(\frac{\left({\prod_{j\in R(\sg)}\ell_j(k_{j})^{k_{j}}}\right)^{\frac{1}{n}}}{\left({\prod_{j\in R(\sg^*)}\ell_j(o_{j})^{o_{j}}}\right)^{\frac{1}{n}}}\right)\nonumber\\
&\leq \ln\left(\frac{\left({\prod_{j\in R(\sg^*)}\ell_j(k_{j}+1)^{o_{j}}}\right)^{\frac{1}{n}}}{\left({\prod_{j\in R(\sg^*)}\ell_j(o_{j})^{o_{j}}}\right)^{\frac{1}{n}}}\right)\label{unw_form_3b}\\
&=\frac{\sum_{j\in R(\sg^*)}o_j(\ln(\ell_j(k_j+1))-\ln(\ell_j(o_j)))}{\sum_{j\in R}k_j},\nonumber
\end{align}
where (\ref{unw_form_3b}) comes from (\ref{unw_form_nash}). 
Now, let $R_+:=\{j\in R(\sg^*):k_j\geq o_j \}$. 
We have that
\begin{align}
&\frac{\sum_{j\in R(\sg^*)}o_j(\ln(\ell_j(k_j+1))-\ln(\ell_j(o_j)))}{\sum_{j\in R}k_j}\nonumber\\
\leq & \frac{\sum_{j\in R(\sg^*)}o_j(\ln(\ell_j(k_j+1))-\ln(\ell_j(o_j)))}{\sum_{j\in R(\sg^*)}k_j}\nonumber\\
\leq & \frac{\sum_{j\in R_+}o_j(\ln(\ell_j(k_j+1))-\ln(\ell_j(o_j)))}{\sum_{j\in R_+}k_j}\label{eqn_R_piu}\\
\leq & \max_{j\in R_+}\frac{o_j(\ln(\ell_j(k_j+1))-\ln(\ell_j(o_j)))}{k_j}\nonumber\\
\leq &\sup_{f\in\mathcal{C}, k\in \ZP, o\in [k]}\frac{o(\ln(f(k+1))-\ln(f(o)))}{k},\nonumber
\end{align}
where (\ref{eqn_R_piu}) holds because for any $j \in R(\sg^*) \setminus R_+$, it holds that $o_j(\ln(\ell_j(k_j+1))-\ln(\ell_j(o_j))) \leq 0$.
Therefore, we conclude that
\begin{equation*}
\ln\left({\sf NPoA}(\CG)\right)\leq \sup_{f\in\mathcal{C}, k\in \ZP, o\in  [k]}\frac{o(\ln(f(k+1))-\ln(f(o)))}{k},
\end{equation*}
and by exponentiating the previous inequality we get the claim. 
\subsection{Tightness of the Upper Bound in Theorem \ref{thm_unw_upp}}
\begin{theorem}\label{thm_unw_low}
Let $\mathcal{C}$ be a class of latency functions. If $\mathcal{C}$ is ordinate-scaling, then
$
{\sf NPoA}({\sf ULB}(\mathcal{C}))\geq \sup_{f\in\mathcal{C}, k\in \ZP, o\in  [k]}\left(\frac{f(k+1)}{f(o)}\right)^{\frac{o}{k}}.
$
\end{theorem}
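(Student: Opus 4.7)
Fix $f\in\mathcal{C}$, $k\in\mathbb{N}$, $o\in[k]$, and $\epsilon>0$. The plan is to exhibit, for every sufficiently large integer $m$, a game $\CG_m\in{\sf ULB}(\mathcal{C})$, a pure Nash equilibrium $\bm\sigma$ of $\CG_m$, and a feasible profile $\bm\sigma^*$ of $\CG_m$ such that ${\sf NSW}(\bm\sigma)/{\sf NSW}(\bm\sigma^*)\to (f(k+1)/f(o))^{o/k}$ as $m\to\infty$. Since ${\sf NSW}(\bm\sigma^*)\geq{\sf NSW}(\bm\sigma^*(\CG_m))$ by optimality of the true optimum, this yields ${\sf NPoA}(\CG_m)\geq {\sf NSW}(\bm\sigma)/{\sf NSW}(\bm\sigma^*)$; letting $m\to\infty$ and taking the supremum over $(f,k,o)$ then concludes. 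Ordinate-scaling of $\mathcal{C}$ is used only to ensure that all the latency functions $\alpha_j f$ appearing below belong to $\mathcal{C}$.

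\textbf{Construction.} $\CG_m$ has $m$ levels of resources $R_1,\dots,R_m$, with $|R_j|=r_j$, chosen so that $r_{j+1}=(k/o)r_j$ for $j\in[m-2]$ and $r_m=k\,r_{m-1}$ (picking $r_1$ divisible by $o^{m-2}$ when $k/o$ is not an integer guarantees every $r_j\in\mathbb{N}$). Resources in $R_j$ have latency $\alpha_j f$, where $\alpha_1:=1$, $\alpha_{j+1}:=\alpha_j f(k)/f(k+1)$ for $j\in[m-2]$, and $\alpha_m:=\alpha_{m-1}f(k)/f(1)$. The $m-1$ player groups $N_1,\dots,N_{m-1}$ satisfy $|N_j|=k\,r_j$. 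Each player in $N_j$ has a two-element strategy set: a \emph{home} in $R_j$ (every resource of $R_j$ is home to exactly $k$ players from $N_j$) and a \emph{target}: for $j\le m-2$ the target is a resource of $R_{j+1}$ shared as target by exactly $o$ players from $N_j$, while for $j=m-1$ every player receives a private target in $R_m$ (feasible since $r_m=|N_{m-1}|$).

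\textbf{Verification and main obstacle.} The profile $\bm\sigma$ in which every player plays their home is a pure Nash equilibrium: for $j\le m-2$, a deviation from home to target costs $\alpha_{j+1}f(k+1)=\alpha_j f(k)$ (the target already hosts $k$ members of $N_{j+1}$ at $\bm\sigma$), and for $j=m-1$ it costs $\alpha_m f(1)=\alpha_{m-1}f(k)$ (the target is empty at $\bm\sigma$); both equal the current cost by construction of the $\alpha_j$. In the profile $\bm\sigma^*$ where every player plays their target, levels $2,\dots,m-1$ host $o$ players per resource while level $m$ hosts exactly one player per resource, so the per-player ratio $cost_i(\bm\sigma)/cost_i(\bm\sigma^*)$ equals $f(k+1)/f(o)$ for every $i\in N_1\cup\dots\cup N_{m-2}$ and equals $1$ for every $i\in N_{m-1}$. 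Taking the weighted geometric mean gives ${\sf NSW}(\bm\sigma)/{\sf NSW}(\bm\sigma^*)=(f(k+1)/f(o))^{\Phi_m}$ with $\Phi_m:=\sum_{j=1}^{m-2}r_j/\sum_{j=1}^{m-1}r_j\to o/k$ as $m\to\infty$, by a routine geometric-series calculation valid for both $k>o$ and $k=o$. The main technical obstacle is precisely the design of the level-$m$ boundary: a naive shifted-chain construction that also places $o$ players per level-$m$ resource in $\bm\sigma^*$ introduces an unwanted factor $(f(1)/f(o))^{(k-o)/k}<1$ that strictly misses the target whenever $o>1$; giving each $N_{m-1}$ player a private target in $R_m$ (compatible with the Nash-tight definition of $\alpha_m$ because those targets are empty at $\bm\sigma$) eliminates this loss and restores the exact $(f(k+1)/f(o))^{o/k}$ bound.
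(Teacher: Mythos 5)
Your proof is correct, and it rests on the same core idea as the paper's own lower bound: a chain of $m$ levels whose latencies are ordinate-scaled copies of $f$, with the scalings calibrated so that a deviation from the all-home equilibrium is exactly cost-neutral ($\alpha_{j+1}f(k+1)=\alpha_j f(k)$ along the chain, and an $f(1)$-calibration at the boundary), a forward-shifted profile as the benchmark, and the bound recovered in the limit $m\to\infty$. Where the two constructions genuinely differ is in how the exponent $o/k$ is produced. The paper keeps every level at constant size ($k$ players, one central resource plus $k-o$ private side resources, each side resource calibrated via $f(1)$) and, in the benchmark, sends only $o$ players of each group forward while parking the remaining $k-o$ on within-level side resources at cost ratio $1$, so that exactly a fraction $o/k$ of each level realizes the ratio $f(k+1)/f(o)$. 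You instead send every player forward onto a next level whose size grows by the factor $k/o$, so that each target receives congestion exactly $o$ and the fraction $o/k$ emerges from the geometric weighting $\Phi_m$ of the level sizes; the private-target treatment of the last level correctly avoids the loss you identify. Both realizations are valid; yours dispenses with the within-level absorption resources at the price of exponentially growing levels and a divisibility condition on $r_1$, whereas the paper's instance has size linear in $m$.
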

\begin{proof}
In order to prove the theorem, we equivalently show that, for any $M<\sup_{f\in\mathcal{C}, k\in \ZP, o\in  [k]}\left(\frac{f(k+1)}{f(o)}\right)^{\frac{o}{k}}$, there exists a game $\CG\in {\sf ULB(\mathcal{C})}$ such that ${\sf NPoA}(\CG)> M$. 

Fix an arbitrary $M<\sup_{f\in\mathcal{C}, k\in \ZP, o\in  [k]}\left(\frac{f(k+1)}{f(o)}\right)^{\frac{o}{k}}$. Let $f\in\mathcal{C}$, $k\in \ZP$, $o\in  [k]$, and a sufficiently small $\epsilon>0$ such that
\begin{equation}\label{unw_low_form_0_app}
\left(\frac{f(k+1)}{f(o)}\right)^{\frac{o}{k}}> M+\epsilon.
\end{equation}

Given an integer $m>0$, let $\CG(m)$ be an unweighted load balancing game with $(k-o+1)m+o$ resources, partitioned into $m$ groups $R_1,R_2,\ldots, R_m$ such that $R_j:=\{r_{j,0},r_{j,1},\ldots, r_{j,k-o}\}$ for any $j\in [m-1]$, and $R_m:=\{r_{m,0},r_{m,1},\ldots, r_{m,k}\}$. Each resource $r_{j,h}$ has latency function $\ell_{r_{j,h}}(x):=\alpha_{j,h}f(x)$, with 
\begin{equation*}
\alpha_{j,h}:=
\begin{cases}
\left(\frac{f(k)}{f(k+1)}\right)^{j-1} & \text{ if }h=0\\
\frac{f(k)}{f(1)}\left(\frac{f(k)}{f(k+1)}\right)^{j-1} & \text{ otherwise.}
\end{cases}
\end{equation*}
We have $n:=mk$ players split into $m$ groups $\N_1,\N_2,\ldots ,\N_m$ of $k$ players each. For $j\in [m-1]$, the set of strategies ${\Sigma}_j$ of players of group $N_j$ is $R_j\cup \{r_{j+1,0}\}$, and the set of strategies ${\Sigma}_m$ of players in $\N_m$ is $R_m$. 

Let $\sg$ be the strategy profile such that, for any $j\in [m]$, all $k$ players of group $N_j$ select resource $r_{j,0}$, so that each resource $r_{j,0}$ has congestion $k$, and all the remaining resources have null congestion (see Figure \ref{fig:boat1}.a). 
We show that $\sg$ is a pure Nash equilibrium. Given an arbitrary player $i$ of group $\N_j$ with $j\in [m]$, such player has a cost equal to $\ell_{r_{j,0}}(k)=\alpha_{j,0}f(k)=\left(\frac{f(k)}{f(k+1)}\right)^{j-1}f(k)$ when playing strategy $\sigma_i$. If $j\in [m-1]$, and player $i$ unilaterally deviates to strategy $r_{j+1,0}$, her cost is $\ell_{r_{j+1,0}}(k+1)=\alpha_{j+1,0}f(k+1)=\left(\frac{f(k)}{f(k+1)}\right)^{j}f(k+1)=\left(\frac{f(k)}{f(k+1)}\right)^{j-1}f(k)=\ell_{r_{j,0}}(k)$, thus her cost does not improve. Analogously, if $j\in [m]$, and player $i$ unilaterally deviates to any strategy $r_{j,h}$ with $h\neq 0$, her cost is $\ell_{r_{j,h}}(1)=\alpha_{j,h}f(1)=\frac{f(k)}{f(1)}\left(\frac{f(k)}{f(k+1)}\right)^{j-1}f(1)=\ell_{r_{j,0}}(k)$, thus her cost does not improve as well. We conclude that $\sg$ is a pure Nash equilibrium. 

Now, let $\bm\sigma^*$ be a strategy profile defined as follows: (i) for any $j\in [m-1]$, $o$ players of group $N_j$ select resource $r_{j+1,0}$, and each of the $k-o$ remaining players of $N_j$ selects a distinct resource of $R_{j}\setminus\{r_{j,0}\}$, (ii) all the $k$ players of group $N_m$ select a distinct resource of $E_{m}\setminus\{r_{m,0}\}$. Thus, in $\bm\sigma^*$, any resource of type $r_{j,0}$ with $j>1$ has congestion $o$, resource $r_{1,0}$ has null congestion, and the remaining resources have unitary congestion (see Figure \ref{fig:boat1}.b). 
\begin{figure}
  \includegraphics[width=\linewidth]{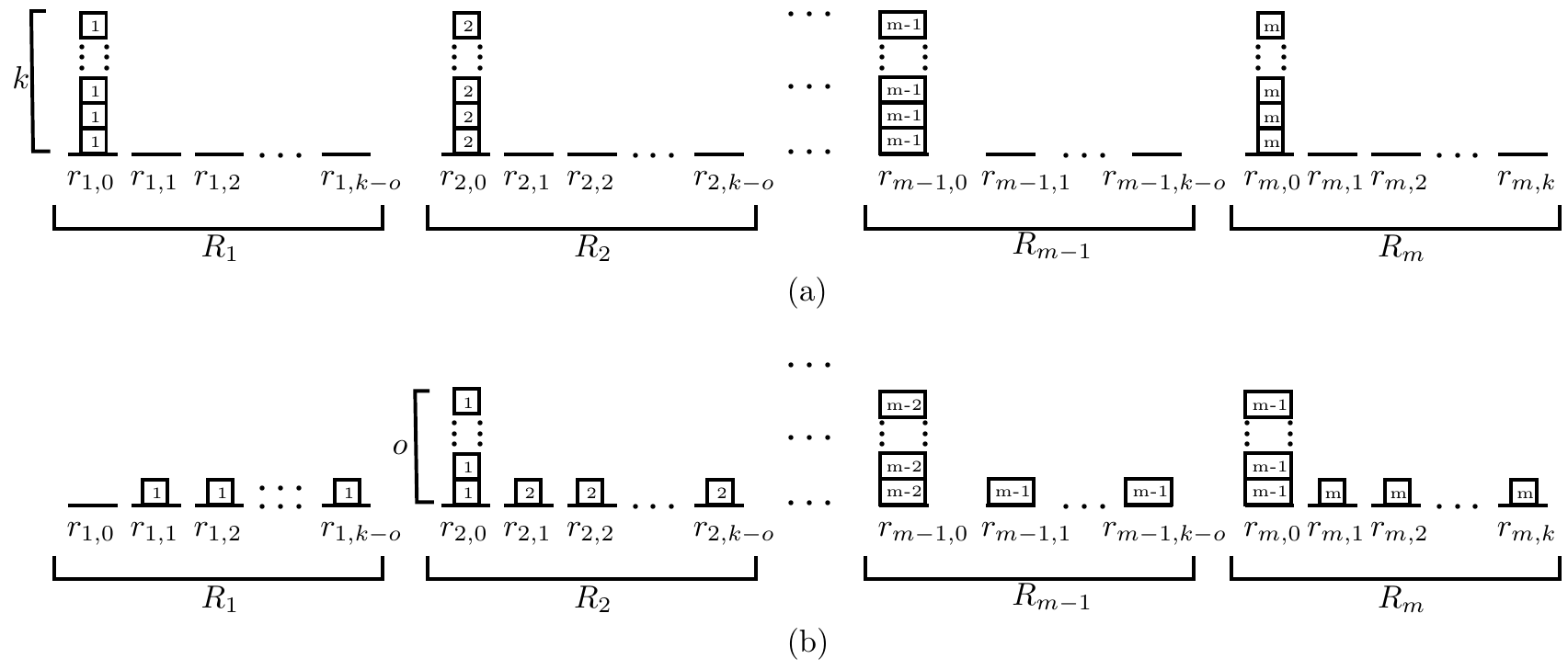}
  \caption{The $\CG$ used in the proof of Theorem \ref{thm_unw_low}. Columns represent resources and squares represent players (number $j$ inside a square means that the player belongs to group $N_j$). (a): The Nash equilibrium $\bm\sigma$; (b): The strategy profile ${\bm\sigma}^*$.}
  \label{fig:boat1}
\end{figure}
By some algebraic manipulation, it holds that
\begin{align}
&\ \frac{{\sf NSW}(\sg)}{{\sf NSW}(\sg^*)}=\nonumber\\
&=\left(\frac{\prod_{j=1}^m\ell_{j,0}(k)^k}{\prod_{j=1}^{m-1}\left(\ell_{r_{j+1,0}}(o)^o\prod_{r\in R_{j}\setminus\{r_{j,0}\}}\ell_{r}(1)\right)\prod_{r\in R_{m}\setminus\{r_{m,0}\}}\ell_{r}(1)}\right)^{\frac{1}{km}}\nonumber\\
&=\left(\frac{\prod_{j=1}^m\left(\left(\frac{f(k)}{f(k+1)}\right)^{j-1}f(k)\right)^k}{\prod_{j=1}^{m-1}\left[\left(\left(\frac{f(k)}{f(k+1)}\right)^{j}f(o)\right)^o\left(\frac{f(k)}{f(1)}\left(\frac{f(k)}{f(k+1)}\right)^{j-1}f(1)\right)^{k-o}\right]\left(\frac{f(k)}{f(1)}\left(\frac{f(k)}{f(k+1)}\right)^{m-1}f(1)\right)^{k}}\right)^{\frac{1}{km}}\nonumber\\
&=\left(\frac{\prod_{j=1}^m\left(\left(\frac{f(k)}{f(k+1)}\right)^{j}f(k+1)\right)^k}{\prod_{j=1}^{m-1}\left[\left(\left(\frac{f(k)}{f(k+1)}\right)^{j}f(o)\right)^o\left(\left(\frac{f(k)}{f(k+1)}\right)^{j}f(k+1)\right)^{k-o}\right]\left(\left(\frac{f(k)}{f(k+1)}\right)^{m}f(k+1)\right)^{k}}\right)^{\frac{1}{km}}\nonumber\\
&=\left(\frac{\left(\prod_{j=1}^m\left(\frac{f(k)}{f(k+1)}\right)^{kj}\right)f(k+1)^{km}}{\left(\prod_{j=1}^{m-1}\left(\frac{f(k)}{f(k+1)}\right)^{kj}\right)f(o)^{o(m-1)}f(k+1)^{(k-o)(m-1)}\left(\frac{f(k)}{f(k+1)}\right)^{km}f(k+1)^{k}}\right)^{\frac{1}{km}}\nonumber\\
&=\left(\frac{f(k+1)^{km}}{f(o)^{o(m-1)}f(k+1)^{(k-o)(m-1)}f(k+1)^{k}}\right)^{\frac{1}{km}}\nonumber\\
&=\left(\frac{f(k+1)^{o(m-1)}}{f(o)^{o(m-1)}}\right)^{\frac{1}{km}}\nonumber\\
&=\left(\frac{f(k+1)}{f(o)}\right)^{\frac{o(m-1)}{km}}.\label{unw_low_form_1_app}
\end{align}
By using (\ref{unw_low_form_0_app}) and (\ref{unw_low_form_1_app}), and by choosing a sufficiently large $m$, we get
\begin{align*}
{\sf NPoA}(\CG(m))&\geq \frac{{\sf NSW}(\sg)}{{\sf NSW}(\sg^*)}\\
&= \left(\frac{f(k+1)}{f(o)}\right)^{\frac{o(m-1)}{km}}\\
&\geq \lim_{m\rightarrow \infty}\left(\frac{f(k+1)}{f(o)}\right)^{\frac{o(m-1)}{km}}-\epsilon\\
&=\left(\frac{f(k+1)}{f(o)}\right)^{\frac{o}{k}}-\epsilon\\
&> M+\epsilon-\epsilon\\
&=M,
\end{align*}
thus showing the claim. \qed
\end{proof}
\subsection{Proof of Corollary \ref{cor2}}
The claim follows from the following lemma.
\begin{lemma}\label{lemma_PoA_unweighted_Polynomial latency functions}
$\sup_{f\in\mathcal{P}(p),  k\in \ZP, o\in  [k]}\left(\frac{f(k+1)}{f(o)}\right)^{\frac{o}{k}} = 2^p.$
\end{lemma}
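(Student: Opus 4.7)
}

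The plan is to first separate the supremum into (i) the inner supremum over $f\in\mathcal{P}(p)$ (with $k,o$ fixed) and (ii) the outer supremum over integer pairs $(k,o)$ with $k\geq o\geq 1$. For step (i), writing $f(x)=\sum_{d=0}^{p}\alpha_d x^d$ with $\alpha_d\geq 0$, I would use the elementary inequality
\[
\frac{\sum_{d}\alpha_d (k+1)^d}{\sum_{d}\alpha_d o^d}\ \leq\ \max_{d\in[p]\cup\{0\}}\frac{(k+1)^d}{o^d}\ =\ \left(\frac{k+1}{o}\right)^{p},
\]
where the last equality uses $k+1>o$ (so the monomial of maximum degree dominates), and the bound is tight by taking $f(x)=x^p$. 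This reduces the lemma to showing
\[
\sup_{k\in\mathbb{N},\,o\in[k]}\left(\frac{k+1}{o}\right)^{o/k}=\,2,
\]
from which exponentiating by $p$ yields the claim.

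For the outer supremum, the pair $k=o=1$ already achieves the value $2$, so only the matching upper bound remains. Equivalently, I want to show $g(k,o):=\frac{o}{k}\ln\!\frac{k+1}{o}\leq \ln 2$ for every integer $k\geq 1$ and integer $o\in[k]$. Treating $o$ as a continuous variable on $[1,k]$, the derivative $\partial g/\partial o=\frac{1}{k}\bigl[\ln\!\frac{k+1}{o}-1\bigr]$ vanishes at $o^\ast=(k+1)/e$, where $g$ attains the value $(k+1)/(ek)=(1+1/k)/e$.

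The case split is then transparent. For $k=1$, the only admissible integer is $o=1$ (note $o^\ast=2/e<1$ and $g$ is decreasing on $[o^\ast,\infty)$), which gives $g(1,1)=\ln 2$ exactly. For $k\geq 2$, the critical point $o^\ast=(k+1)/e\in[1,k]$ lies inside the interval, so the continuous maximum dominates the integer maximum; its value $(1+1/k)/e$ is decreasing in $k$ and is therefore upper bounded by its $k=2$ value $3/(2e)\approx 0.552$, which is strictly less than $\ln 2\approx 0.693$. Combining both cases gives $\sup g=\ln 2$, hence the outer supremum equals $e^{\ln 2}=2$ and the lemma follows.

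I do not foresee a real obstacle: the reduction to monomials is standard, and the remaining one-variable optimization is a short calculus exercise. The only point requiring some care is handling $k=1$ separately, because the continuous optimizer $(k+1)/e$ falls outside the feasible range $[1,k]$ exactly in that corner case; it is precisely this boundary effect that explains why the tight bound $2$ is achieved (and only achieved in the limit sense via $f(x)=x^p$, $k=o=1$) rather than the smaller continuous value $e^{1/e}$ that governs the non-atomic counterpart of Corollary~\ref{cor3}.
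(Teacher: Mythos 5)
Your proposal is correct, and its main reduction is exactly the paper's: the mediant inequality gives $\frac{\sum_d\alpha_d(k+1)^d}{\sum_d\alpha_d o^d}\leq\max_{d}\left(\frac{k+1}{o}\right)^{d}=\left(\frac{k+1}{o}\right)^{p}$, tight at $f(x)=x^p$, so everything hinges on showing $\sup_{k\in\mathbb{N},\,o\in[k]}\left(\frac{k+1}{o}\right)^{o/k}=2$. For that last step the paper argues more compactly: setting $x:=k/o\geq 1$ it writes $\left(\frac{k+1}{o}\right)^{o/k}=\left(x+\frac{1}{o}\right)^{1/x}\leq(x+1)^{1/x}\leq 2$, the final inequality being just $2^x\geq x+1$ for $x\geq 1$. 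You instead maximize $g(k,o)=\frac{o}{k}\ln\frac{k+1}{o}$ over continuous $o$ and split on whether the critical point $(k+1)/e$ lies in $[1,k]$; the computation is correct (for $k\geq 2$ the continuous optimum $(1+1/k)/e\leq 3/(2e)<\ln 2$, and for $k=1$ the only integer point gives exactly $\ln 2$). Both routes work; yours is longer but makes explicit that the binding case is the corner $k=o=1$ and cleanly explains the contrast with the non-atomic bound $e^{1/e}$ of Corollary~\ref{cor3}, whereas the paper's one-line inequality hides this. One nitpick: with $f(x)=x^p$ and $k=o=1$ the value $2^p$ is attained exactly, so the supremum is a maximum, not merely a limit as your closing remark suggests.
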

\begin{proof}
We have that
\begin{align}
\sup_{f\in\mathcal{P}(p),  k\in \ZP, o\in  [k]}\left(\frac{f(k+1)}{f(o)}\right)^{\frac{o}{k}} &=\sup_{\alpha_0,\alpha_1,\ldots, \alpha_p\geq 0,  k\in \ZP, o\in  [k]}\left(\frac{\sum_{d=0}^p\alpha_d (k+1)^d}{\sum_{d=0}^p\alpha_d o^d}\right)^{\frac{o}{k}} \nonumber\\
&= \sup_{k\in \ZP, o\in  [k]}\left(\max_{d\in [p]\cup\{0\}}\frac{(k+1)^d}{o^d}\right)^{\frac{o}{k}} \nonumber\\
&= \sup_{k\in \ZP, o\in  [k]} \left( \left( \frac{k+1}{o} \right)^p \right) ^{\frac{o}{k}} \nonumber\\
&=\left(\sup_{k\in \ZP, o\in  [k]}\left(\frac{k+1}{o}\right)^{\frac{o}{k}}\right)^p\nonumber\\
& = 2^p,\label{w_pol_form_0_12}
\end{align}
where (\ref{w_pol_form_0_12}) holds for the following reasons: First of all, we have that $(\frac{k+1}{o})^{\frac{o}{k}}=2$ if $o=k=1$, thus showing that $2\leq \sup_{k\in \ZP, o\in  [k]}\left(\frac{k+1}{o}\right)^{\frac{o}{k}}$; furthermore, by setting $x:=k/o$, we obtain $(\frac{k+1}{o})^{\frac{o}{k}} = \left(x+\frac{1}{o}\right)^{\frac{1}{x}} \leq (x+1)^{\frac{1}{x}} \leq 2$, where the last inequality is equivalent to the well-known inequality $2^x\geq x+1$ which holds for any $x\geq 1$.\qed
\end{proof}
\section{Missing Proofs of Subsection 3.3}
\subsection{Proof of Theorem \ref{thm_non_upp}}
Let $\NCG\in {\sf NLB(\mathcal{C})}$ be a non-atomic load balancing game with latency functions in $\mathcal{C}$, and let $\nsg$ and $\nsg^*$ be a worst-case pure Nash equilibrium and an optimal strategy profile of $\NCG$, respectively. Let $k_j$ denote $k_j(\nsg)$ and $o_j$ denote $k_j(\nsg^*)$. 

For any player type $i$ and pair $(j,j^*)$ of resources, let $\alpha^i_{j,j^*}$ be the amount of players of type $i$ selecting resource $j$ in $\nsg$ and resource $j^*$ in $\nsg^*$. Clearly, it holds that, for any $i \in N$, $\sum_{j,j^* \in R} \alpha^i_{j,j^*}= r_i$.  

Since $\nsg$ is a pure Nash equilibrium, if there exists $i \in N$ such that $\alpha^i_{j,j^*}>0$, we have that $cost_j(\nsg)\leq cost_{j^*}(\nsg)$. For any $j,j^* \in R$, let $A_{j,j^*}=\sum_{i \in N} \alpha^i_{j,j^*}$. Clearly, it holds that 
\begin{equation}\label{nonatomic_form_1}
cost_j(\nsg)^{A_{j,j^*}}\leq cost_{j^*}(\nsg)^{A_{j,j^*}}.
\end{equation}

Since, for any $j \in R(\nsg)$, $\sum_{j^* \in R} A_{j,j^*} = k_j$ and, symmetrically, for any $j^* \in R(\nsg^*)$, $\sum_{j \in R} A_{j,j^*} = o_j$, it follows that 
\begin{equation}\label{nonatomic_form_2}
\prod_{j,j^* \in R} cost_j(\nsg)^{A_{j,j^*}} = \prod_{j \in R(\nsg)} cost_j(\nsg)^{k_j}
\end{equation}
and
\begin{equation}\label{nonatomic_form_3}
\prod_{j,j^* \in R} cost_{j^*}(\nsg^*)^{A_{j,j^*}} = \prod_{j \in R(\nsg^*)} cost_j(\nsg)^{o_j}.
\end{equation}

By multiplying (\ref{nonatomic_form_1}) over all pairs of resources in $R$ and by exploiting (\ref{nonatomic_form_2}) and (\ref{nonatomic_form_3}), we obtain
\begin{align}
&\prod_{j\in R(\nsg)}\ell_{j}(k_{j})^{k_j}= \prod_{j \in R(\nsg)} cost_j(\nsg)^{k_j} 
=\prod_{j,j^* \in R}cost_j(\nsg)^{A_{j,j^*}}\nonumber\\
&\leq \prod_{j,j^* \in R}cost_{j^*}(\nsg)^{A_{j,j^*}}= \prod_{j \in R(\nsg^*)} cost_j(\nsg)^{o_j} 
= \prod_{j\in R(\nsg^*)}\ell_{j}(k_{j})^{o_j}.\label{nonatomic_form_nash} 
\end{align}

By exploiting the properties of the logarithmic function, we get
\begin{align}
\ln\left({\sf NPoA}(\CG)\right)&=\ln\left(\frac{\left({\prod_{j\in R(\nsg)}\ell_j(k_{j})^{k_{j}}}\right)^{\frac{1}{\sum_{i \in N} r_i}}}{\left({\prod_{j\in R(\nsg^*)}\ell_j(o_{j})^{o_{j}}}\right)^{\frac{1}{\sum_{i \in N} r_i}}}\right)\nonumber\\
&\leq \ln\left(\frac{\left({\prod_{j\in R(\nsg^*)}\ell_j(k_{j})^{o_{j}}}\right)^{\frac{1}{\sum_{i \in N} r_i}}}{\left({\prod_{j\in R(\nsg^*)}\ell_j(o_{j})^{o_{j}}}\right)^{\frac{1}{\sum_{i \in N} r_i}}}\right)\label{nonatomic_form_3b}\\
&=\frac{\sum_{j\in R(\nsg^*)}o_j\ln(\ell_j(k_j))-\sum_{j\in R(\nsg^*)}o_j\ln(\ell_j(o_j))}{\sum_{i \in N} r_i}\nonumber\\
&=\frac{\sum_{j\in R(\nsg^*)}o_j(\ln(\ell_j(k_j))-\ln(\ell_j(o_j)))}{\sum_{j \in R} k_j},\nonumber\\
&\leq \frac{\sum_{j\in R_+}o_j(\ln(\ell_j(k_j))-\ln(\ell_j(o_j)))}{\sum_{j\in R_+}k_j}\label{nonatomic_eqn_R_piu}\\
&\leq  \max_{j\in R_+}\frac{o_j(\ln(\ell_j(k_j))-\ln(\ell_j(o_j)))}{k_j}\nonumber\\
&\leq \sup_{f\in\mathcal{C}, k\geq o>0}\frac{o(\ln(f(k))-\ln(f(o)))}{k},\nonumber
\end{align}
where (\ref{nonatomic_form_3b}) comes from (\ref{nonatomic_form_nash}), and (\ref{nonatomic_eqn_R_piu}) is obtained by using similar arguments as in Theorem \ref{thm_unw_upp} (in particular, see inequalities (\ref{eqn_R_piu})). 
Therefore, we conclude that
\begin{equation*}
\ln\left({\sf NPoA}(\NCG)\right)\leq \sup_{f\in\mathcal{C}, k\geq o>0}\frac{o(\ln(f(k))-\ln(f(o)))}{k},
\end{equation*}
and by exponentiating the previous inequality we get the claim. 
\subsection{Tightness of the Upper Bound of Theorem \ref{thm_non_upp}}
\begin{theorem}\label{thm_non_low}
Let $\mathcal{C}$ be a class of latency functions. If $\mathcal{C}$ is all-constant-including, then
$
{\sf NPoA}({\sf NLB}(\mathcal{C}))=
{\sf NPoA}({\sf SNLB}(\mathcal{C}))\geq \sup_{f\in\mathcal{C},  k\geq o>0}\left(\frac{f(k)}{f(o)}\right)^{\frac{o}{k}}.
$
\end{theorem}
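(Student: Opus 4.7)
The plan is to prove the lower bound by exhibiting, for each $f \in \mathcal{C}$ and each pair of reals $k \geq o > 0$, a \emph{symmetric} non-atomic load balancing game whose Nash price of anarchy is exactly $(f(k)/f(o))^{o/k}$. Combined with the trivial inclusion ${\sf SNLB}(\mathcal{C}) \subseteq {\sf NLB}(\mathcal{C})$ and the matching upper bound of Theorem~\ref{thm_non_upp}, this immediately yields both the claimed inequality and the equality ${\sf NPoA}({\sf NLB}(\mathcal{C})) = {\sf NPoA}({\sf SNLB}(\mathcal{C}))$.

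The construction exploits the all-constant-including hypothesis to create a Nash indifference that is not available in the atomic setting. Concretely, I will define a game $\NCG$ with two resources $r_1, r_2$, a single type of players of total amount $r = k$ and strategy set $\{r_1, r_2\}$, and latency functions $\ell_{r_1} = f$ and $\ell_{r_2}(x) \equiv f(k)$; the latter lies in $\mathcal{C}$ since $f(k) > 0$ and $\mathcal{C}$ contains all positive constant functions. The profile $\nsg$ that concentrates the entire mass on $r_1$ is then a pure Nash equilibrium, because the two resources yield cost exactly $f(k)$ in $\nsg$ and therefore the Nash indifference condition $cost_{r_1}(\nsg) \leq cost_{r_2}(\nsg)$ holds with equality.

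To turn this equilibrium into a lower bound, I will compare $\nsg$ against the alternative profile $\nsg'$ that sends mass $o$ to $r_1$ and mass $k - o$ to $r_2$. A direct calculation using the definition of ${\sf NSW}$ in the non-atomic setting gives ${\sf NSW}(\nsg) = (f(k)^k)^{1/k} = f(k)$ and ${\sf NSW}(\nsg') = (f(o)^o f(k)^{k-o})^{1/k} = f(o)^{o/k} f(k)^{(k-o)/k}$, whence
\[
{\sf NPoA}(\NCG) \;\geq\; \frac{{\sf NSW}(\nsg)}{{\sf NSW}(\nsg^*(\NCG))} \;\geq\; \frac{{\sf NSW}(\nsg)}{{\sf NSW}(\nsg')} \;=\; \left(\frac{f(k)}{f(o)}\right)^{o/k},
\]
where the first inequality uses ${\sf NSW}(\nsg^*(\NCG)) \leq {\sf NSW}(\nsg')$. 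Taking the supremum over $f \in \mathcal{C}$ and $k \geq o > 0$ (the case $k = o$ being trivial, since the ratio equals $1$) completes the argument, and the game is symmetric by construction.

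I do not anticipate any real obstacle here. The whole argument hinges on the observation that, in the non-atomic model, a constant latency function can be tuned to exactly equal the congested cost on the "bad" resource, simultaneously making the selfish profile an equilibrium and allowing a better alternative profile to decrease the NSW by diverting mass to the constant resource. In contrast to the atomic lower bounds (Theorem~\ref{thm_w_low} and the tightness argument for Theorem~\ref{thm_unw_upp}), no elaborate multi-layer gadget and no parametric limit $m \to \infty$ are needed: the continuous splittability of non-atomic demand already achieves the supremum exactly, rather than only in the limit.
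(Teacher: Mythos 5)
Your construction is exactly the one used in the paper's proof of Theorem~\ref{thm_non_low}: a single player type of mass $k$, two resources with latencies $f$ and the constant $f(k)$, the full-mass profile as the Nash equilibrium, and the split $(o,k-o)$ as the comparison profile, yielding the ratio $\left(f(k)/f(o)\right)^{o/k}$. The only (inessential) difference is presentational — you take the supremum directly while the paper argues via an arbitrary $M$ below the supremum — so the proposal is correct and matches the paper's argument.
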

\begin{proof}
To show the theorem, we equivalently show that, for any $M<\sup_{f\in\mathcal{C}, k\geq o>0}\left(\frac{f(k)}{f(o)}\right)^{\frac{o}{k}}$, there exists a symmetric non-atomic load balancing game $\NCG\in {\sf SNLB(\mathcal{C})}$ such that ${\sf NPoA}(\NCG)> M$. 
Fix an arbitrary $M<\sup_{f\in\mathcal{C},  k\geq o>0}\left(\frac{f(k)}{f(o)}\right)^{\frac{o}{k}}$. Let $f\in\mathcal{C}$ and $k\geq o>0$ such that $
\left(\frac{f(k)}{f(o)}\right)^{\frac{o}{k}}> M$. Let $\NCG$ be a symmetric non-atomic load balancing game with a unique player type, say $1$, and two resources having latency defined as $\ell_{1}(x):=f(x)$ and $\ell_{2}(x):=f(k)$. 
Assume that the amount of players of type $1$ is $r_1=k$. 
Let $\nsg$ be the strategy profile in which all players select resource $1$, and let $\nsg^*$ be the strategy profile in which an amount $o$ of players selects resource $1$ and the remaining one (i.e., $k-o$) selects resource $2$. 
We trivially have that $\nsg$ is a pure Nash equilibrium. 
Thus, we obtain ${\sf NPoA}(\NCG)\geq \frac{{\sf NSW}(\nsg)}{{\sf NSW}(\nsg^*)}=\left(\frac{\ell_{1}(k)^{k}}{\ell_{1}(o)^{o}\ell_{2}(k-o)^{k-o}}\right)^{\frac{1}{k}}=\left(\frac{f(k)^{k}}{f(o)^{o}f(k)^{k-o}}\right)^{\frac{1}{k}}=\left(\frac{f(k)}{f(o)}\right)^{\frac{o}{k}}>M,
$ and the claim follows. \qed
\end{proof}
\subsection{Proof of Corollary \ref{cor3}}
We have that 
\begin{align}
{\sf NPoA}({\sf NLB}(\mathcal{P}(p)))&={\sf NPoA}({\sf SNLB}(\mathcal{P}(p)))\label{non_pol_form_-1}\\
&=\sup_{f\in\mathcal{C}, k\geq o>0}\left(\frac{f(k)}{f(o)}\right)^{\frac{o}{k}}\label{non_pol_form_0}\\
&=\sup_{\alpha_0,\alpha_1,\ldots, \alpha_p\geq 0, k\geq o>0}\left(\frac{\sum_{d=0}^p\alpha_d k^d}{\sum_{d=0}^p\alpha_d o^d}\right)^{\frac{o}{k}}\nonumber\\
&= \sup_{k\geq o>0}\left(\max_{d\in [p]\cup\{0\}}\frac{k^d}{o^d}\right)^{\frac{o}{k}}\nonumber\\
&=\max_{d\in [p]\cup\{0\}}\sup_{k\geq o>0}\left(\frac{k^d}{o^d}\right)^{\frac{o}{k}}\nonumber\\
&=\max_{d\in [p]\cup\{0\}}\left(\sup_{k\geq o>0}\left(\frac{k}{o}\right)^{\frac{o}{k}}\right)^d\nonumber\\
&=\left(\sup_{k\geq o>0}\left(\frac{k}{o}\right)^{\frac{o}{k}}\right)^p,\nonumber\\
&=\left(\sup_{x>0}x^{\frac{1}{x}}\right)^p,\label{non_pol_form_1}\\
&=\left(e^{\frac{1}{e}}\right)^p,\label{non_pol_form_2}
\end{align}
where (\ref{non_pol_form_-1}) and (\ref{non_pol_form_0}) come from Theorems \ref{thm_non_upp} and \ref{thm_non_low} (observe that polynomial latency functions are all-constant-including), (\ref{non_pol_form_1}) can be obtained  by setting $x:=k/o$, and (\ref{non_pol_form_2}) comes from the fact that function $F(x):=x^{1/x}$ is maximized by $x=e$. 
\section{Missing Proofs of Section 4}
\subsection{Proof of Theorem \ref{thm_onl_upp}}
Let $\I\in {\sf WLB}(\mathcal{C})$ be a load balancing instance with latency functions in $\mathcal{C}$, and let $\sg$ and $\sg^*$ be the states returned by the greedy algorithm and an optimal strategy profile of $\CG$, respectively.

Let $k_j$ denote $k_j(\sg)$ and $o_j$ denote $k_j(\sg^*)$. For any $i\in N$ and resource $j$, let $(\sg^{i})$ be the partial state in which the first $i$ clients have been assigned according to $\sg$, and let $(\sg^{i-1},j)$ be the state in which the first $i-1$ clients have been assigned according to $\sg$ and client $i$ is assigned to resource $j$. By definition of greedy algorithm, we have that $\sigma_i\in \arg\min_{j\in R}{\sf NSW}(\sg^{i-1},j)=\arg\min_{j\in R}\frac{\prod_{l\leq i}cost_l(\sg^{i-1},j)}{\prod_{l\leq i-1}cost_l(\sg^{i-1})}=\arg\min_{j\in R}\frac{\ell_j(k_j(\sg^{i-1},j))^{k_j(\sg^{i-1},j)}}{\ell_j(k_j(\sg^{i-1}))^{k_j(\sg^{i-1})}}$, where we set $\ell_j(0)^{0}:=1$. Thus, we can equivalently define the greedy assignment by saying that each client $i$ is assigned to the resource $j$ minimizing $\frac{\ell_j(k_j(\sg^{i-1},j))^{k_j(\sg^{i-1},j)}}{\ell_j(k_j(\sg^{i-1}))^{k_j(\sg^{i-1})}}$, so that
\begin{equation}\label{onl_form_0}
\frac{\ell_{\sigma_i}(k_{\sigma_i}(\sg^{i}))^{k_{\sigma_i}(\sg^{i})}}{\ell_{\sigma_i}(k_{\sigma_i}(\sg^{i-1}))^{k_{\sigma_i}(\sg^{i-1})}}\leq \frac{\ell_{\sigma_i^*}(k_{\sigma_i}(\sg^{i-1},\sigma_i^*))^{k_{\sigma_i^*}(\sg^{i-1},\sigma_i^*)}}{\ell_{\sigma_i^*}(k_{\sigma_i^*}(\sg^{i-1}))^{k_{\sigma_i^*}(\sg^{i-1})}}.
\end{equation}
We have that:
\begin{align}
\prod_{i\in \N}\frac{\ell_{\sigma_i}(k_{\sigma_i}(\sg^{i}))^{k_{\sigma_i}(\sg^{i})}}{\ell_{\sigma_i}(k_{\sigma_i}(\sg^{i-1}))^{k_{\sigma_i}(\sg^{i-1})}}&=\prod_{j\in R(\sg)}\prod_{i\in \N:\sigma_i=j}\frac{\ell_{j}(k_{j}(\sg^{i}))^{k_{j}(\sg^{i})}}{\ell_{j}(k_{j}(\sg^{i-1}))^{k_{j}(\sg^{i-1})}}\nonumber\\
&=\prod_{j\in R(\sg)}\ell_{j}(k_{j}(\sg^{n}))^{k_{j}(\sg^{n})}\label{onl_form_1-}\\
&=\prod_{j\in R(\sg)}\ell_{j}(k_{j})^{k_{j}},\label{onl_form_1}
\end{align}
where (\ref{onl_form_1-}) is obtained by exploiting telescoping properties. Furthermore, we get
\begin{align}
&\prod_{i\in \N}\frac{\ell_{\sigma_i^*}(k_{\sigma_i^*}(\sg^{i-1},\sigma_i^*))^{k_{\sigma_i^*}(\sg^{i-1},\sigma_i^*)}}{\ell_{\sigma_i^*}(k_{\sigma_i^*}(\sg^{i-1}))^{k_{\sigma_i^*}(\sg^{i-1})}}\nonumber\\
&=\prod_{i\in \N}\frac{\ell_{\sigma_i^*}(k_{\sigma_i^*}(\sg^{i-1})+w_i)^{k_{\sigma_i^*}(\sg^{i-1})+w_i}}{\ell_{\sigma_i^*}(k_{\sigma_i^*}(\sg^{i-1}))^{k_{\sigma_i^*}(\sg^{i-1})}}\nonumber\\
&\leq \prod_{i\in \N}\frac{\ell_{\sigma_i^*}(k_{\sigma_i^*}+w_i)^{k_{\sigma_i^*}+w_i}}{\ell_{\sigma_i^*}(k_{\sigma_i^*})^{k_{\sigma_i^*}}}\label{onl_form_2---}\\
&= \prod_{j\in R(\sg^*)}\prod_{i\in \N:\sigma_i^*=j}\frac{\ell_{j}(k_{j}+w_i)^{k_{j}+w_i}}{\ell_{j}(k_{j})^{k_{j}}}\nonumber\\
&\leq \prod_{j\in R(\sg^*)}\prod_{i\in \N:\sigma_i^*=j}\frac{\ell_{j}(k_{j}+\sum_{t\leq i:\sigma^*_t=j}w_t)^{k_{j}+\sum_{t\leq i:\sigma^*_t=j}w_t}}{\ell_{j}(k_{j}+\sum_{t< i:\sigma^*_t=j}w_t)^{k_{j}+\sum_{t< i:\sigma^*_t=j}w_t}}\label{onl_form_2--}\\
&= \prod_{j\in R(\sg^*)}\frac{\ell_{j}(k_{j}+\sum_{t:\sigma^*_t=j}w_t)^{k_{j}+\sum_{t:\sigma^*_t=j}w_t}}{\ell_{j}(k_{j})^{k_{j}}}\label{onl_form_2-}\\
&= \prod_{j\in R(\sg^*)}\frac{\ell_{j}(k_{j}+o_{j})^{k_{j}+o_{j}}}{\ell_{j}(k_{j})^{k_{j}}}\label{onl_form_2},
\end{align}
where (\ref{onl_form_2-}) is obtained by exploiting telescoping properties, and (\ref{onl_form_2---}) and (\ref{onl_form_2--}) easily come from the following fact:
\begin{fact}
Given a quasi-log-convex latency function $f$, we have that $\frac{f(x+z)^{x+z}}{f(x)^{}}\leq \frac{f(x+y+z)^{x+y+z}}{f(x+y)^{x+y}}$ for any $x,y,z\geq 0$.
\end{fact}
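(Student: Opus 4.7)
The plan is to exploit the definition of quasi-log-convexity directly. Set $g(x) := x \ln f(x)$; by hypothesis $g$ is convex on $[0,\infty)$, with $g(0) = 0$ under the convention $f(0)^0 = 1$ that was already adopted in the theorem statement. Taking the natural logarithm of both sides of the target inequality and rearranging, the claim reduces to
\[
g(x+z) + g(x+y) \leq g(x) + g(x+y+z),
\]
i.e., the $z$-increment of $g$ at the point $x$ is dominated by the $z$-increment of $g$ at the larger point $x+y$.

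The key step is then to derive this monotonicity of increments from convexity of $g$. I would do so by a direct application of Jensen's inequality: if $y + z > 0$, then $x+z$ and $x+y$ are convex combinations of the endpoints $x$ and $x+y+z$, with complementary weights $(z,y)$ and $(y,z)$ scaled by $1/(y+z)$. Writing out the two convexity inequalities
\[
g(x+z) \leq \frac{z}{y+z}\, g(x+y+z) + \frac{y}{y+z}\, g(x),
\]
\[
g(x+y) \leq \frac{y}{y+z}\, g(x+y+z) + \frac{z}{y+z}\, g(x),
\]
and summing them makes the coefficients of $g(x+y+z)$ and $g(x)$ both collapse to $1$, yielding precisely $g(x+z) + g(x+y) \leq g(x+y+z) + g(x)$. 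Exponentiating restores the multiplicative form of the fact.

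The boundary cases are handled cheaply: when $x=0$ one uses $f(0)^0=1$ so that $g(0)=0$ and the argument goes through verbatim; when $y=0$ or $z=0$ the inequality reduces to an equality; and when $y+z=0$ both sides are trivially equal. I do not foresee a real obstacle, since the statement is essentially a restatement of the classical non-decreasing-increment property of convex functions, applied to the particular function $g(x)=x\ln f(x)$ whose convexity is, by definition, the quasi-log-convexity hypothesis on $f$.
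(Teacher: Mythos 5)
Your proposal is correct and follows essentially the same route as the paper's proof: both reduce the claim, after taking logarithms, to the inequality $g(x+z)-g(x)\leq g(x+y+z)-g(x+y)$ for the convex function $g(t)=t\ln(f(t))$ and then exponentiate. The only difference is that you spell out the monotone-increment property via explicit convex combinations (Jensen), whereas the paper invokes it directly as a standard consequence of convexity.
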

\begin{proof}
Since the function $g$ such that $g(t)=t\ln(f(t))$ is convex, we have that $g(x+z)-g(x)\leq g(x+y+z)-g(x+y)$ for any $x,y,z\geq 0$, thus, by exponentiating the previous inequality, the claim follows. \qed
\end{proof}
By putting together (\ref{onl_form_0}), (\ref{onl_form_1}), and (\ref{onl_form_2}), we get
\begin{align}
\prod_{j\in R(\sg)}\ell_{j}(k_j)^{k_j}&= \prod_{i\in \N}\frac{\ell_{\sigma_i}(k_{\sigma_i}(\sg^{i}))^{k_{\sigma_i}(\sg^{i})}}{\ell_{\sigma_i}(k_{\sigma_i}(\sg^{i-1}))^{k_{\sigma_i}(\sg^{i-1})}}\nonumber\\
&\leq \prod_{i\in \N}\frac{\ell_{\sigma_i^*}(k_{\sigma_i^*}(\sg^{i-1},\sigma_i^*))^{k_{\sigma_i^*}(\sg^{i-1},\sigma_i^*)}}{\ell_{\sigma_i^*}(k_{\sigma_i^*}(\sg^{i-1}))^{k_{\sigma_i^*}(\sg^{i-1})}}\nonumber\\
&\leq \prod_{j\in R(\sg^*)}\frac{\ell_{j}(k_{j}+o_{j})^{k_{j}+o_{j}}}{\ell_{j}(k_{j})^{k_{j}}}.\label{onl_form_nash} 
\end{align}
By exploiting the properties of the logarithmic function, we obtain
\begin{align}
&\ln\left({\sf CR}_{\sf G}(\I)\right)\nonumber\\
&=\ln\left(\frac{\left({\prod_{j\in R(\sg)}\ell_j(k_{j})^{k_{j}}}\right)^{\frac{1}{\sum_{i\in N}w_i}}}{\left({\prod_{j\in R(\sg^*)}\ell_j(o_{j})^{o_{j}}}\right)^{\frac{1}{\sum_{i\in N}w_i}}}\right)\nonumber\\
&\leq \ln\left(\frac{\left(\prod_{j\in R(\sg^*)}\frac{\ell_{j}(k_{j}+o_{j})^{k_{j}+o_{j}}}{\ell_{j}(k_{j})^{k_{j}}}\right)^{\frac{1}{\sum_{i\in N}w_i}}}{\left({\prod_{j\in R(\sg^*)}\ell_j(o_{j})^{o_{j}}}\right)^{\frac{1}{\sum_{i\in N}w_i}}}\right)\label{onl_form_3b}\\
&=\frac{\sum_{j\in R(\sg^*)}\left((k_j+o_j)\ln(\ell_j(k_j+o_j))-k_j\ln(\ell_j(k_j))-o_j\ln(\ell_j(o_j))\right)}{\sum_{i\in N}w_i},\label{onl_form_4b}
\end{align}
where (\ref{onl_form_3b}) comes from (\ref{onl_form_nash}). Since $\sum_{i\in \N}w_i=\sum_{j\in R}k_j=\sum_{j\in R}o_j$, we have that  (\ref{onl_form_4b}) is upper bounded by the optimal solution of the following optimization problem on some new linear variables $(\alpha_j)_{j\in R}$:
\begin{align*}
\max \quad & \frac{\sum_{j\in R(\sg^*)}\alpha_j\left((k_j+o_j)\ln(\ell_j(k_j+o_j))-k_j\ln(\ell_j(k_j))-o_j\ln(\ell_j(o_j))\right)}{\sum_{j\in R}\alpha_j k_j}\\
\text{s.t.} \quad &\sum_{j\in R}\alpha_j k_j=\sum_{j\in R}\alpha_j o_j,\quad  \alpha_j\geq 0\ \forall j\in R.
\end{align*}
By normalizing the denominator of the objective function, we obtain the following equivalent linear program:
\begin{align}
\max \quad & \sum_{j\in R(\sg^*)}\alpha_j\left((k_j+o_j)\ln(\ell_j(k_j+o_j))-k_j\ln(\ell_j(k_j))-o_j\ln(\ell_j(o_j))\right)\label{onl_form_5b}\\
\text{s.t.} \quad &\sum_{j\in R}\alpha_j k_j=1,\quad   \sum_{j\in R}\alpha_j o_j=1,\quad \alpha_j\geq 0\ \forall j\in R.\nonumber
\end{align}
We have the following fact, whose proof is omitted, since it is similar to that of Fact \ref{prel_lem_w}. 
\begin{fact}\label{prel_lem_onl}
The maximum value of the linear program considered in (\ref{w_form_5b}) is at most $$\sup_{k_1\geq  o_1> 0,o_2>k_2\geq 0,f_1,f_2\in \mathcal{C}}\frac{(o_2-k_2) F(f_1,o_1,k_1)+(k_1-o_1)F(f_2,o_2,k_2)}{k_1 o_2-k_2 o_1},$$ where $F(f,o,k):=(k+o)\ln(f(k+o))-k\ln(f(k))-o\ln(f(o)).$ 
\end{fact}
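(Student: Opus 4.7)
The plan is to mirror the LP-vertex argument used to prove Fact~\ref{prel_lem_w}, as the paper itself signals. Observe that (\ref{onl_form_5b}) is a linear program over the variables $(\alpha_j)_{j\in R}$ whose feasible region is determined by precisely the same two equality constraints $\sum_{j\in R}\alpha_j k_j=1$ and $\sum_{j\in R}\alpha_j o_j=1$, together with non-negativity. Only the objective changes: the coefficient of $\alpha_j$ is now $F(\ell_j,o_j,k_j)$ in place of the simpler difference-of-logarithms used in (\ref{w_form_5c}). Since the feasible polyhedron is structurally identical, the entire vertex analysis transfers almost verbatim.

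Concretely, any optimal vertex has at most two non-zero components, because only two linearly independent equalities are imposed. Relabel these as $\alpha_1,\alpha_2$, with corresponding latency functions $f_1,f_2$ and loads $k_1,o_1,k_2,o_2$. If both variables are strictly positive, the two equalities uniquely determine $\alpha_1=\frac{o_2-k_2}{k_1 o_2-k_2 o_1}$ and $\alpha_2=\frac{k_1-o_1}{k_1 o_2-k_2 o_1}$; by the symmetry between indices $1$ and $2$, we may assume $k_1 o_2-k_2 o_1>0$, which forces $k_1\geq o_1$ and $o_2\geq k_2$. Substituting these values into the objective of (\ref{onl_form_5b}) produces exactly $\frac{(o_2-k_2)\,F(f_1,o_1,k_1)+(k_1-o_1)\,F(f_2,o_2,k_2)}{k_1 o_2-k_2 o_1}$, and taking the supremum over all admissible tuples $(k_1,o_1,k_2,o_2,f_1,f_2)$ gives the claimed upper bound.

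It then remains to check that the degenerate vertices do not exceed this bound. If $\alpha_2=0$ then necessarily $k_1=o_1>0$ and $\alpha_1=1/o_1$, yielding objective value $F(f_1,o_1,o_1)/o_1$; this is recovered from the two-variable formula by setting $k_1=o_1$, which makes the coefficient $k_1-o_1$ of the second term vanish, so this boundary introduces no new maximum. If instead $o_1=0$ with both $\alpha_1,\alpha_2$ positive, then the contribution of the first $F$ vanishes (using $F(f_1,0,k_1)=k_1\ln f_1(k_1)-k_1\ln f_1(k_1)-0=0$, where the last zero follows from the convention $f(0)^0:=1$ adopted in the theorem statement). The value thus collapses to $F(f_2,o_2,k_2)/o_2$, which we argue is dominated by $F(f_2,o_2,o_2)/o_2$ and hence by the one-non-zero vertex already accounted for.

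The main technical point beyond Fact~\ref{prel_lem_w} is precisely this last dominance: one must show $F(f,o,k)\leq F(f,o,o)$ for $0\leq k\leq o$. This does not follow from monotonicity of $f$ alone, but it is exactly where the standing quasi-log-convexity hypothesis of Theorem~\ref{thm_onl_upp} enters. Indeed, writing $g(t):=t\ln f(t)$, one has $\partial F/\partial k = g'(k+o)-g'(k)\geq 0$ whenever $g$ is convex, so $F$ is non-decreasing in its third argument on $[0,o]$. With this inequality in hand, every boundary case is subsumed, and the supremum of the original LP equals the expression stated. All remaining steps are routine bookkeeping, so I expect the quasi-log-convexity step to be the only substantive obstacle.
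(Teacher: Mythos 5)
Your proof is correct and follows exactly the route the paper intends: the paper omits this proof, stating only that it is ``similar to that of Fact~\ref{prel_lem_w}'', and your vertex analysis of the LP (at most two nonzero $\alpha_j$'s, the explicit values $\alpha_1=\frac{o_2-k_2}{k_1o_2-k_2o_1}$, $\alpha_2=\frac{k_1-o_1}{k_1o_2-k_2o_1}$, and the reduction of the degenerate vertices to the $k_1=o_1$ case) is precisely that template. Your one genuine addition --- that the boundary case $o_1=0$ now requires $F(f,o,k)\leq F(f,o,o)$ for $k\leq o$, which follows from convexity of $g(t)=t\ln f(t)$ (quasi-log-convexity) rather than from monotonicity of $f$ as in the weighted analogue --- is both correct and necessary, since monotonicity alone does not imply this inequality.
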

By continuing from (\ref{onl_form_4b}) and by using Fact \ref{prel_lem_onl}, we get
\begin{equation*}
\ln\left({\sf CR}_{\sf G}(\I)\right)\leq \sup_{k_1\geq  o_1> 0,o_2>k_2\geq 0,f_1,f_2\in \mathcal{C}}\frac{(o_2-k_2) F(f_1,o_1,k_1)+(k_1-o_1)F(f_2,o_2,k_2)}{k_1 o_2-k_2 o_1}.
\end{equation*}
By exponentiating the previous inequality, we get the claim. 
\subsection{Tightness of the Upper Bound of Theorem \ref{thm_onl_upp}}
\begin{theorem}\label{thm_onl_low}
Let $\mathcal{C}$ be a class of latency functions and let ${\sf G}$ be the greedy algorithm. If $\mathcal{C}$ is abscissa-scaling and ordinate-scaling, then 
\begin{align}
&{\sf CR}_{\sf G}({\sf WLB}(\mathcal{C}))\nonumber\\
&\geq \sup_{\substack{k_1\geq  o_1> 0,\\o_2>k_2\geq 0,\\f_1,f_2\in \mathcal{C}}}\left(\frac{f_1(k_1+o_1)^{k_1+o_1}}{f_1(k_1)^{k_1}f_1(o_1)^{o_1}}\right)^{\frac{o_2-k_2}{o_2 k_1-o_1k_2}}\left(\frac{f_2(k_2+o_2)^{k_2+o_2}}{f_2(k_2)^{k_2}f_2(o_2)^{o_2}}\right)^{\frac{k_1-o_1}{o_2 k_1-o_1k_2}}.\label{low_onl_weig1}
\end{align}
\end{theorem}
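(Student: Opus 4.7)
The plan is to adapt the hierarchical lower-bound construction of Theorem~\ref{thm_w_low}(i) to the online/greedy setting, replacing Nash-indifference with a greedy-indifference condition. First I would use the abscissa- and ordinate-scaling of $\mathcal{C}$ exactly as in Theorem~\ref{thm_w_low} to reduce to the normalized case $o_1=o_2=1$, fixing non-constant functions $f,g\in\mathcal{C}$, $k\ge 1$, $h\in[0,1)$, and $\epsilon>0$ with
\[
\left(\frac{f(k+1)^{k+1}}{f(k)^{k}\,f(1)}\right)^{\!\frac{1-h}{k-h}}\left(\frac{g(h+1)^{h+1}}{g(h)^{h}\,g(1)}\right)^{\!\frac{k-1}{k-h}}>M+\epsilon,
\]
where $M$ is strictly less than the right-hand side of~(\ref{low_onl_weig1}); it then suffices to build an instance $\I$ with ${\sf CR}_{\sf G}(\I)>M$.

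I would recycle the topology of the game $\CG'(m,s)$ used in the proof of Theorem~\ref{thm_w_low}(i): $2m$ resource groups $R_j$ of sizes $|R_j|=s^{j-1}$, client groups $N_1,\ldots,N_{2m-1}$ of sizes $|N_j|=s^j$ with weights $w_j=1/\beta_{j+1}$, admissible sets $\Sigma_j=R_j\cup R_{j+1}$, and the same shape parameters $\beta_j$ and piecewise choice $\hat f_j\in\{f,g\}$. The ordinate scalings $\alpha_j$, however, would be re-derived so as to encode a \emph{greedy} rather than Nash indifference at each level transition: the new $\alpha_j$ are cumulative products of the ratios $f(k+1)^{k+1}/(f(k)^{k}f(1))$ across the first $m-1$ levels, of a bridging factor at $j=m$, and of $g(h+1)^{h+1}/(g(h)^{h}g(1))$ across the remaining levels, so that each level-to-level jump contributes exactly one factor of the target bound.

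The clients would be presented bottom-up (first $N_1$, then $N_2$, and so on up to $N_{2m-1}$) and within each $N_j$ grouped resource-by-resource, so that each intended resource of $R_j$ receives its $s$ clients consecutively. By induction on $j$, when $N_j$ begins processing, the resources of $R_j\cup R_{j+1}$ are both empty, so the greedy comparison reduces to (a) loading one more client onto the intended $r\in R_j$, incurring NSW-increment $\ell_r(tw_j)^{tw_j}/\ell_r((t-1)w_j)^{(t-1)w_j}$, versus (b) placing it on an empty $r'\in R_{j+1}$, incurring NSW-increment $\ell_{r'}(w_j)^{w_j}$; the new $\alpha_j$ are chosen so that (a) is dominated by (b) for every $t\in[s]$, with intra-$R_j$ ties broken toward the intended resource. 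Once this greedy fidelity is established, greedy outputs exactly the state $\sg$ from Theorem~\ref{thm_w_low}, and comparing to the same $\sg^*$ (each client of $N_j$ on a distinct resource of $R_{j+1}$), the NSW ratio telescopes across levels exactly as in~(\ref{w_low_form_3-_app})--(\ref{w_low_form_4_app}), with every factor $f(k+1)/f(1)$ of the Nash computation replaced by $f(k+1)^{k+1}/(f(k)^{k}f(1))$ and likewise for $g$. The asymptotic manipulation~(\ref{w_low_form_5-_app})--(\ref{w_low_form_10_app}), applied for a sufficiently large $m$, then yields ${\sf CR}_{\sf G}(\I(m,s_m))>M$.

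The main obstacle will be verifying the greedy inequality \emph{uniformly in} the partial congestion $t\in[s]$: Nash-style arguments compare only the final latencies of two alternatives, whereas greedy compares ratios $\ell(x+w)^{x+w}/\ell(x)^{x}$ at every intermediate state, and quasi-log-convexity (used in the upper bound of Theorem~\ref{thm_onl_upp}) is \emph{not} assumed here. I plan to bypass this by exploiting the freedom in $s$: since the scaled congestion on the intended resource grows as $\beta_j w_j t=(k/s)\,t$ (or $(h/s)\,t$) for $t=1,\ldots,s$, the $s$ consecutive greedy increments on the intended resource telescope to a single factor $\ell_r(k)^{k}$ (or $\ell_r(h)^{h}$), which must be compared against a single fixed-size alternative increment $\ell_{r'}(1)$ at the unit congestion $\beta_{j+1}w_j=1$; this global telescoping comparison is precisely what the new $\alpha_j$ are engineered to satisfy, and a small multiplicative perturbation of the $\alpha_j$ (absorbed into the $\epsilon$ slack) converts the per-step equality into the strict inequality needed for greedy to commit to the intended resource at each individual arrival.
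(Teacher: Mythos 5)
Your high-level strategy (normalize via abscissa/ordinate scaling, build a layered instance whose ordinate factors $\alpha_j$ encode a greedy-indifference condition, force the greedy algorithm onto the expensive resources, compare with the shifted optimal assignment, and telescope) is exactly the paper's strategy. However, two of your concrete choices break the argument, and both stem from the arrival order.

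First, the bottom-up ordering cannot certify the claimed bound. Under your order, when a client of $N_j$ is processed every resource of $R_{j+1}$ is empty, so every greedy comparison it faces involves the latencies only at scaled congestions $tk/s$ ($t\in[s]$) on the intended resource and at $1$ on the alternative; the value $f(k+1)$ is never evaluated by the algorithm and never realized in either $\sg$ or $\sg^*$. An indifference condition between these options telescopes to $\alpha_j f(k)=\alpha_{j+1}f(1)$, which makes the greedy and optimal NSW increments per level coincide and yields no gap at all; conversely, if you instead define $\alpha_j$ through $f(k+1)^{k+1}/(f(k)^k f(1))$ as you propose, greedy is no longer indifferent and will route the clients to the $R_{j+1}$ resources, i.e., to the optimum. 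The paper's proof processes the clients in \emph{reverse} order on a single chain $r_1,\ldots,r_{2m}$ with client $j$'s strategy set $\{r_j,r_{j+1}\}$: when client $j$ arrives, $r_{j+1}$ is already loaded by client $j+1$ to scaled congestion $k$, so the alternative increment is $(\alpha_{j+1}f(k+1))^{w_{j+1}+w_j}/(\alpha_{j+1}f(k))^{w_{j+1}}$, and the indifference condition $\alpha_j f(k)=\alpha_{j+1}f(k+1)^{k+1}/f(k)^{k}$ is precisely what injects the factor $f(k+1)^{k+1}/f(k)^k$ that separates the online bound ($4^p$) from the Nash bound ($2^p$).

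Second, the uniform-in-$t$ greedy fidelity you flag as the main obstacle is a genuine one, and the telescoping fix does not resolve it. Greedy decides at every single arrival, so you need each increment $\ell_r(tw_j)^{tw_j}/\ell_r((t-1)w_j)^{(t-1)w_j}$ to be at most the alternative; equality of the products over $t\in[s]$ only constrains their geometric mean. Since $x\mapsto x\ln \ell(x)$ has nondecreasing increments for nondecreasing $\ell$, the per-step increments grow with $t$, so the last clients of each batch strictly prefer an empty $R_{j+1}$ resource, and no perturbation of $\alpha_j$ that is "absorbed into $\epsilon$" can repair a violation bounded away from zero. The paper avoids the issue entirely: in its chain instance each resource receives at most one greedy client, so there is no intermediate congestion to control, and the exponential replication with parameter $s$ (needed in Theorem~\ref{thm_w_low} only to defeat long-range Nash deviations) is unnecessary here.
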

\begin{proof}
Let us assume that $\mathcal{C}$ is abscissa-scaling and ordinate-scaling. We equivalently show that for any $M<\sup_{k_1\geq  o_1> 0,o_2>k_2\geq 0,f_1,f_2\in \mathcal{C}}\left(\frac{f_1(k_1+o_1)^{k_1+o_1}}{f_1(k_1)^{k_1}f_1(o_1)^{o_1}}\right)^{\frac{o_2-k_2}{o_2 k_1-o_1k_2}}\left(\frac{f_2(k_2+o_2)^{k_2+o_2}}{f_2(k_2)^{k_2}f_2(o_2)^{o_2}}\right)^{\frac{k_1-o_1}{o_2 k_1-o_1k_2}}$ there exists an instance $\I\in {\sf WLB}(\mathcal{C})$ such that ${\sf NPoA}(\I)>M$. 

Let $f_1,f_2\in\mathcal{C}$, $k_1,k_2,o_1,o_2\geq 0$ such that $k_1\geq o_1>0,o_2>k_2\geq 0$, and let $\epsilon>0$ be a sufficiently small number such that $\left(\frac{f_1(k_1+o_1)^{k_1+o_1}}{f_1(k_1)^{k_1}f_1(o_1)^{o_1}}\right)^{\frac{o_2-k_2}{o_2 k_1-o_1k_2}}\left(\frac{f_2(k_2+o_2)^{k_2+o_2}}{f_2(k_2)^{k_2}f_2(o_2)^{o_2}}\right)^{\frac{k_1-o_1}{o_2 k_1-o_1k_2}}> M+\epsilon.$ 
Let $f,g\in\mathcal{C}$ be such that $f(x):=f_1(o_1x)$ and $g(x):=f_2(o_2x)$,  and let $k:=k_1/o_1$ and $h:=k_2/o_2$. Since $$\left(\frac{f_1(k_1+o_1)^{k_1+o_1}}{f_1(k_1)^{k_1}f_1(o_1)^{o_1}}\right)^{\frac{o_2-k_2}{o_2 k_1-o_1k_2}}\left(\frac{f_2(k_2+o_2)^{k_2+o_2}}{f_2(k_2)^{k_2}f_2(o_2)^{o_2}}\right)^{\frac{k_1-o_1}{o_2 k_1-o_1k_2}}=\left(\frac{f(k+1)^{k+1}}{f(k)^{k}f(1)}\right)^{\frac{1-h}{k-h}}\left(\frac{g(h+1)^{h+1}}{g(h)^{h}g(1)}\right)^{\frac{k-1}{k-h}}$$

we have that 
\begin{equation}\label{onl_low_form_0}
\left(\frac{f(k+1)^{k+1}}{f(k)^{k}f(1)}\right)^{\frac{1-h}{k-h}}\left(\frac{g(h+1)^{h+1}}{g(h)^{h}g(1)}\right)^{\frac{k-1}{k-h}}> M+\epsilon\text{, for some } f,g\in\mathcal{C},\ k\geq 1,\text{ and }h<1.
\end{equation}
First of all, we assume that $h>0$. Given an integer $m\geq 3$, let $\I(m)$ be a 
load balancing instance having $2m$ resources $r_1,r_2,r_3\ldots, r_{2m}$ and $2m-1$ clients such that the set of strategies of each client $j$ is $\{r_j,r_{j+1}\}$. Each resource $r_j$ has a latency function defined as $\ell_j(x):=\alpha_j \hat{f}_j\left(\beta_j x\right)$, and the weight of each client $j$ is defined as $w_j:=1/\beta_{j+1}$, where $\alpha_j, \hat{f}_j$, and $\beta_j$ are defined as follows:
\begin{align}
&\hat{f}_j:=
\begin{cases}
f & \text{ if }j\leq m-1\\
g & \text{ if }j\geq m
\end{cases},\quad 
\beta_j:=
\begin{cases}
\left(\frac{1}{k}\right)^{j-1} & \text{ if }j\leq m-1\\
\left(\frac{1}{h}\right)^{j-m}\left(\frac{1}{k}\right)^{m-1} & \text{ if }m\leq j\leq 2m
\end{cases},\\
&\alpha_j:=
\begin{cases}
\left(\frac{f(k)^{k+1}}{f(k+1)^{k+1}}\right)^{j-1} & \text{ if }j\leq m-1\\
\left(\frac{g(h)^{h+1}}{g(h+1)^{h+1}}\right)^{j-m}\left(\frac{f(k)g(h)^h}{g(h+1)^{h+1}}\right)\left(\frac{f(k)^{k+1}}{f(k+1)^{k+1}}\right)^{m-2} & \text{ if }m\leq j\leq 2m-1\\
\frac{g(h)}{g(1)}\left(\frac{g(h)^{h+1}}{g(h+1)^{h+1}}\right)^{m-1}\left(\frac{f(k)g(h)^h}{g(h+1)^{h+1}}\right)\left(\frac{f(k)^{k+1}}{f(k+1)^{k+1}}\right)^{m-2} & \text{ if }j=2m
\end{cases}.
\end{align}
Observe that, by construction of $\alpha_j,\beta_j,w_j$, the following properties hold:\small
\begin{equation}\label{prop_low_onl}
\begin{cases}
\alpha_{j}f(k)=\alpha_{j+1}\frac{f(k+1)^{k+1}}{f(k)^{k}} &\text{if }j\leq m-2\\
\alpha_{j}f(k)=\alpha_{j+1}\frac{g(h+1)^{h+1}}{g(h)^{h}} &\text {if }j=m-1\\
\alpha_{j}g(h)=\alpha_{j+1}\frac{g(h+1)^{h+1}}{g(h)^{h}} &\text {if }m\leq j\leq 2m-2\\
\alpha_{j}g(h)=\alpha_{j+1}g(1)& \text {if }j=2m-1
\end{cases},
\begin{cases}
\beta_{j} w_j=k,\ w_{j}=k^j &\text{if }j\leq m-1\\
\beta_{j} w_j=h,\ w_{j}=h^{j+1-m}k^{m-1} &\text {if }m\leq j\leq 2m-1\\
\beta_{j+1} w_j=1&\text{if }j\leq 2m-1
\end{cases}
\end{equation}\normalsize
Let $\sg$ be the strategy profile in which each client $j$ is assigned to  resource $r_j$. We show that $\sg$ is a state that can be possibly returned by the greedy algorithm when clients are processed in reverse order w.r.t. index $j$. We equivalently show that $\frac{{\sf NSW}(\sg^j)}{{\sf NSW}(\sg^{j+1})}\leq \frac{{\sf NSW}(\sg^{j+1},r_{j+1})}{{\sf NSW}(\sg^{j+1})}$ for any $j\leq 2m-1$, where $\sg^j$ denotes the partial assignment in which each client $t\geq j$ is assigned to resource $r_t$, and $(\sg^{j+1},r_{j+1})$ denotes the partial assignment in which each client $t\geq j+1$ is assigned to resource $r_t$ and client $j$ is assigned to resource $r_{j+1}$. 
Let $j\in [2m-1]$. First of all, assume that $j\leq m-2$. By using (\ref{prop_low_onl}), we get 
\begin{align*}
&\frac{{\sf NSW}(\sg^j)}{{\sf NSW}(\sg^{j+1})}=\ell_{r_j}(k_{r_j}(\sg))^{k_{r_j}(\sg)}=\left(\alpha_j\hat{f}_j\left(\beta_j w_j\right)\right)^{w_j}=(\alpha_jf\left(k\right))^{w_j}=\left(\alpha_{j+1}\frac{f\left(k+1\right)^{k+1}}{f(k)^k}\right)^{w_j}\nonumber\\
&=\alpha_{j+1}^{w_j}\frac{f\left(k+1\right)^{kw_j+w_j}}{f(k)^{kw_j}}=\alpha_{j+1}^{w_j}\frac{f\left(k+1\right)^{w_{j+1}+w_j}}{f(k)^{w_{j+1}}}=\frac{\left(\alpha_{j+1}f\left(k+1\right)\right)^{w_{j+1}+w_j}}{\left(\alpha_{j+1}f(k)\right)^{w_{j+1}}}\nonumber\\
&=\frac{\left(\alpha_{j+1}f\left(\beta_{j+1}(w_{j+1}+w_j)\right)\right)^{w_{j+1}+w_j}}{\left(\alpha_{j+1}f(\beta_{j+1}w_{j+1})\right)^{w_{j+1}}}=\frac{\left(\alpha_{j+1}\hat{f}_{j+1}\left(\beta_{j+1}(w_{j+1}+w_j)\right)\right)^{w_{j+1}+w_j}}{\left(\alpha_{j+1}\hat{f}_{j+1}(\beta_{j+1}w_{j+1})\right)^{w_{j+1}}}\nonumber\\
&=\frac{\ell_{r_{j+1}}(k_{r_{j+1}}(\sg^{j+1},r_{j+1}))^{k_{r_{j+1}}(\sg^{j+1},r_{j+1})}}{\ell_{r_{j+1}}(k_{r_{j+1}}(\sg^{j+1}))^{k_{r_{j+1}}(\sg^{j+1})}}=\frac{{\sf NSW}(\sg^{j+1},r_{j+1})}{{\sf NSW}(\sg^{j+1})}.\nonumber
\end{align*}
The cases $j=m-1$, $m\leq j\leq 2m-2$, and $j=2m-1$ can be separately considered by exploiting (\ref{prop_low_onl}), so that one can analogously get
\begin{equation}\label{onl_low_ciao1}
\frac{{\sf NSW}(\sg^j)}{{\sf NSW}(\sg^{j+1})}=\left(\alpha_j\hat{f}_j\left(\beta_j w_j\right)\right)^{w_j}=\frac{\left(\alpha_{j+1}\hat{f}_{j+1}\left(\beta_{j+1}(w_{j+1}+w_j)\right)\right)^{w_{j+1}+w_j}}{\left(\alpha_{j+1}\hat{f}_{j+1}(\beta_{j+1}w_{j+1})\right)^{w_{j+1}}}=\frac{{\sf NSW}(\sg^{j+1},r_{j+1})}{{\sf NSW}(\sg^{j+1})},
\end{equation}
where we set $\left(\alpha_{2m}\hat{f}_{2m}(\beta_{2m}w_{2m})\right)^{w_{2m}}:=1$ and $w_{2m}:=0$. 
Now, let $\sg^*$ be the strategy profile of $\I(m)$ in which each client $j\in [m-1]$ is assigned to resource $r_{j+1}$. By exploiting the definitions of $\alpha_j$,$\beta_j$, $\hat{f}_j$, and $w_j$, and by considering a sufficiently large $m$, we have that:\small  
\begin{align}
&{\sf NPoA}(\I(m))\nonumber\\
\geq &\frac{{\sf NSW}(\sg)}{{\sf NSW}(\sg^*)}\nonumber\\
=&\left(\frac{\prod_{j=1}^{2m-1}\left(\alpha_j\hat{f}_j\left(\beta_j w_j\right)\right)^{w_j}}{\prod_{j=2}^{2m}\left(\alpha_{j}\hat{f}_j\left(\beta_j w_{j-1}\right)\right)^{w_{j-1}}}\right)^{\frac{1}{\sum_{j=1}^{2m-1}w_j}}\nonumber\\
=&\left(\frac{\prod_{j=1}^{2m-1}\left(\frac{\left(\alpha_{j+1}\hat{f}_{j+1}\left(\beta_{j+1}(w_{j+1}+w_j)\right)\right)^{w_{j+1}+w_j}}{\left(\alpha_{j+1}\hat{f}_{j+1}(\beta_{j+1}w_{j+1})\right)^{w_{j+1}}}\right)}{\prod_{j=2}^{2m}\left(\alpha_{j}\hat{f}_j\left(\beta_j w_{j-1}\right)\right)^{w_{j-1}}}\right)^{\frac{1}{\sum_{j=1}^{2m-1}w_j}}\label{onl_low_ciao2}\\
=&\left(\frac{\prod_{j=1}^{2m-1}\left(\frac{\left(\alpha_{j+1}\hat{f}_{j+1}\left(\beta_{j+1}(w_{j+1}+w_j)\right)\right)^{w_{j+1}+w_j}}{\left(\alpha_{j+1}\hat{f}_{j+1}(\beta_{j+1}w_{j+1})\right)^{w_{j+1}}}\right)}{\prod_{j=1}^{2m-1}\left(\alpha_{j+1}\hat{f}_{j+1}\left(\beta_{j+1} w_{j}\right)\right)^{w_{j}}}\right)^{\frac{1}{\sum_{j=1}^{2m-1}w_j}}\nonumber\\
=&\left(\prod_{j=1}^{2m-1}\left(\frac{\left(\alpha_{j+1}\hat{f}_{j+1}\left(\beta_{j+1}(w_{j+1}+w_j)\right)\right)^{w_{j+1}+w_j}}{\left(\alpha_{j+1}\hat{f}_{j+1}(\beta_{j+1}w_{j+1})\right)^{w_{j+1}}\left(\alpha_{j+1}\hat{f}_{j+1}\left(\beta_{j+1} w_{j}\right)\right)^{w_{j}}}\right)\right)^{\frac{1}{\sum_{j=1}^{2m-1}w_j}}\nonumber\\
=&\left(\prod_{j=1}^{m-2}\left(\frac{f(k+1)^{k^{j+1}+k^j}}{f(k)^{k^{j+1}}f(1)^{k^j}}\right) \prod_{j=m-1}^{2m-2}\left(\frac{g(h+1)^{h^{j+2-m}k^{m-1}+h^{j+1-m}k^{m-1}}}{g(h)^{h^{j+2-m}k^{m-1}}g(1)^{h^{j+1-m}k^{m-1}}}\right)\right)^{\frac{1}{\sum_{j=1}^{m-2}k^j+\sum_{j=m-1}^{2m-1}h^{j+1-m}k^{m-1}}}\nonumber\\
=&\left(\prod_{j=1}^{m-2}\left(\frac{f(k+1)^{k+1}}{f(k)^{k}f(1)}\right)^{k^j} \prod_{j=m-1}^{2m-2}\left(\frac{g(h+1)^{h+1}}{g(h)^{h}g(1)}\right)^{h^{j+1-m}k^{m-1}}\right)^{\frac{1}{\sum_{j=1}^{m-2}k^j+\sum_{j=m-1}^{2m-1}h^{j+1-m}k^{m-1}}}\nonumber\\
\geq &\left(\frac{f(k+1)^{k+1}}{f(k)^{k}f(1)}\right)^{\frac{1-h}{k-h}}\left(\frac{g(h+1)^{h+1}}{g(h)^{h}g(1)}\right)^{\frac{k-1}{k-h}}-\epsilon\label{onl_low_form_5}\\
> & M+\epsilon-\epsilon\label{onl_low_form_6}\\
= &M,\label{onl_low_form_7}
\end{align}\normalsize
where (\ref{onl_low_ciao2}) comes from (\ref{onl_low_ciao1}), (\ref{onl_low_form_5}) can be shown by using similar arguments as in the proof of Theorem \ref{thm_w_low} (see steps (\ref{w_low_form_5_app}) and (\ref{w_low_form_8_app})), and  (\ref{onl_low_form_6}) comes from (\ref{onl_low_form_0}). By (\ref{onl_low_form_7}), the claim follows. 

If $h=0$, we consider a load balancing instance defined as $\I(m)$, but restricted to resources $r_1,r_2,\ldots, r_m$ and to players in $[m-1]$. By using the same proof arguments as those used for $h>0$, one can show the claim as well. \qed
\end{proof}
\subsection{Proof of Corollary \ref{pol_onl_cor}}
The proof follows from the following lemma.
\begin{lemma}\label{lemma_CR_weighted_Polynomial latency functions}
$$\sup_{\substack{k_1\geq  o_1> 0,\\o_2>k_2\geq 0,\\f_1,f_2\in \mathcal{C}}}\left(\frac{f_1(k_1+o_1)^{k_1+o_1}}{f_1(k_1)^{k_1}f_1(o_1)^{o_1}}\right)^{\frac{o_2-k_2}{o_2 k_1-o_1k_2}}\left(\frac{f_2(k_2+o_2)^{k_2+o_2}}{f_2(k_2)^{k_2}f_2(o_2)^{o_2}}\right)^{\frac{k_1-o_1}{o_2 k_1-o_1k_2}}= 4^p.$$
\end{lemma}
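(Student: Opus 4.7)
The plan is to establish the equality by proving the upper bound $\le 4^p$ via a two-step reduction, and then exhibiting explicit parameters that attain $4^p$. First I reduce each factor of the form $\frac{f(k+o)^{k+o}}{f(k)^k f(o)^o}$ to its ``monomial'' counterpart $\frac{(k+o)^{p(k+o)}}{k^{pk}o^{po}}$; after this reduction the supremum is bounded by the entropy function $A(k,o):=(k+o)\ln(k+o)-k\ln k-o\ln o$ (with the convention $0\ln 0=0$), which I then control via the Shannon entropy inequality.

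The reduction rests on the identity
$$\frac{f(k+o)^{k+o}}{f(k)^k f(o)^o}=\left(\frac{f(k+o)}{f(k)}\right)^k\left(\frac{f(k+o)}{f(o)}\right)^o.$$
For $f\in\mathcal{P}(p)$ with non-negative coefficients and $a\ge b>0$, the ratio $\frac{f(a)}{f(b)}=\frac{\sum_d\alpha_d a^d}{\sum_d\alpha_d b^d}$ is a weighted average of $(a/b)^d$ with positive weights $\alpha_d b^d$, hence is at most $\max_d(a/b)^d=(a/b)^p$ (the same device used in Lemma~\ref{lemma_PoA_weighted_Polynomial latency functions}). Applying this to both factors yields $\frac{f(k+o)^{k+o}}{f(k)^k f(o)^o}\le\frac{(k+o)^{p(k+o)}}{k^{pk}o^{po}}=e^{pA(k,o)}$, with the convention $0^0=1$ covering the case $k_2=0$. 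Hence the supremum in the statement is at most
$$\sup_{\substack{k_1\ge o_1>0\\ o_2>k_2\ge 0}}\exp\!\left(p\cdot\frac{(o_2-k_2)A(k_1,o_1)+(k_1-o_1)A(k_2,o_2)}{o_2 k_1-o_1k_2}\right).$$

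Next I invoke the Shannon bound $A(k,o)\le(k+o)\ln 2$, i.e.\ $-\rho\ln\rho-(1-\rho)\ln(1-\rho)\le\ln 2$ with $\rho=k/(k+o)$ (equality at $\rho=1/2$). Substituting this into the numerator gives $\ln 2\cdot\bigl[(o_2-k_2)(k_1+o_1)+(k_1-o_1)(k_2+o_2)\bigr]$; expanding the bracket shows that the cross-terms $o_1o_2$ and $k_1k_2$ cancel, leaving exactly $2(k_1o_2-o_1k_2)$. Therefore the entire argument of the exponential is at most $\frac{2p\ln 2\cdot(o_2k_1-o_1k_2)}{o_2k_1-o_1k_2}=2p\ln 2$, and the supremum is at most $e^{2p\ln 2}=4^p$.

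For the matching lower bound I take $f_1(x)=f_2(x)=x^p$, $k_1=o_1=o_2=1$ and $k_2=0$: the first exponent $\frac{o_2-k_2}{o_2k_1-o_1k_2}$ equals $1$ with base $\frac{2^{2p}}{1\cdot 1}=4^p$, while the second exponent $\frac{k_1-o_1}{o_2k_1-o_1k_2}$ vanishes, so the product equals $4^p$ (using $f_2(0)^0=1$). The only non-routine obstacle is verifying the algebraic cancellation $(o_2-k_2)(k_1+o_1)+(k_1-o_1)(k_2+o_2)=2(k_1o_2-o_1k_2)$; once this identity is checked, both the monomial reduction and the entropy inequality are standard steps analogous to those already carried out in the paper.
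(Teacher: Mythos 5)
Your proof is correct and follows essentially the same strategy as the paper's: both arguments first reduce each polynomial ratio to its monomial counterpart via the weighted-average bound $f(a)/f(b)\le (a/b)^p$, and both then exploit the concavity of the same entropy-type function. The only difference is presentational — the paper normalizes to $k=k_1/o_1$, $h=k_2/o_2$ and applies Jensen to a convex combination of $g(k)$ and $g(h)$ with $g(x)=(x+1)\ln(x+1)-x\ln x$ (whose average point is $1$), whereas you use the equivalent tangent-line form of that concavity, namely the binary-entropy bound $A(k,o)\le (k+o)\ln 2$, applied to each resource separately, followed by the linear cancellation $(o_2-k_2)(k_1+o_1)+(k_1-o_1)(k_2+o_2)=2(k_1o_2-o_1k_2)$; your extremal instance $k_1=o_1=o_2=1$, $k_2=0$ is the same point at which the paper's $F(k,h)$ attains $4$.
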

\begin{proof}
We have that
\begin{align}
&\sup_{k_1\geq  o_1> 0,o_2>k_2\geq 0,f_1,f_2\in \mathcal{C}}\left(\frac{f_1(k_1+o_1)^{k_1+o_1}}{f_1(k_1)^{k_1}f_1(o_1)^{o_1}}\right)^{\frac{o_2-k_2}{o_2 k_1-o_1k_2}}\left(\frac{f_2(k_2+o_2)^{k_2+o_2}}{f_2(k_2)^{k_2}f_2(o_2)^{o_2}}\right)^{\frac{k_1-o_1}{o_2 k_1-o_1k_2}}\nonumber\\
=&\sup_{\substack{k_1\geq o_1> 0,\\o_2>k_2\geq 0,\\\alpha_0,\ldots, \alpha_p\geq 0,\\\beta_0,\ldots,\beta_p\geq 0}}\left(\frac{\left(\sum_{d=0}^{p}\alpha_d(k_1+o_1)^d\right)^{k_1+o_1}}{\left(\sum_{d=0}^{p}\alpha_dk_1^d\right)^{k_1}\left(\sum_{d=0}^{p}\alpha_do_1^d\right)^{o_1}}\right)^{\frac{o_2-k_2}{o_2 k_1-o_1k_2}}\left(\frac{\left(\sum_{d=0}^{p}\beta_d(k_2+o_2)^d\right)^{k_2+o_2}}{\left(\sum_{d=0}^{p}\beta_dk_2^d\right)^{k_2}\left(\sum_{d=0}^{p}\beta_do_2^d\right)^{o_2}}\right)^{\frac{k_1-o_1}{o_2 k_1-o_1k_2}}\nonumber\\
=&\sup_{\substack{k_1\geq o_1> 0,o_2>k_2\geq 0,\\\alpha_0,\ldots, \alpha_p,\beta_0,\ldots,\beta_p\geq 0}}\left(\left(\frac{\sum_{d=0}^{p}\alpha_d(k_1+o_1)^d}{\sum_{d=0}^{p}\alpha_dk_1^d}\right)^{k_1}\left(\frac{\sum_{d=0}^{p}\alpha_d(k_1+o_1)^d}{\sum_{d=0}^{p}\alpha_do_1^d}\right)^{o_1}\right)^{\frac{o_2-k_2}{o_2 k_1-o_1k_2}}\nonumber\\
&\quad\quad\quad\quad\quad\quad\quad\cdot \left(\left(\frac{\sum_{d=0}^{p}\beta_d(k_2+o_2)^d}{\sum_{d=0}^{p}\beta_dk_2^d}\right)^{k_2}\left(\frac{\sum_{d=0}^{p}\beta_d(k_2+o_2)^d}{\sum_{d=0}^{p}\beta_d o_2^d}\right)^{o_2}\right)^{\frac{k_1-o_1}{o_2 k_1-o_1k_2}}\nonumber\\
=&\sup_{k_1\geq o_1> 0,o_2>k_2\geq 0}\left(\left(\max_{d\in [p]\cup\{0\}}\frac{(k_1+o_1)^d}{k_1^d}\right)^{k_1}\left(\max_{d\in [p]\cup\{0\}}\frac{(k_1+o_1)^d}{o_1^d}\right)^{o_1}\right)^{\frac{o_2-k_2}{o_2 k_1-o_1k_2}}\nonumber\\
&\quad\quad\quad\quad\quad\quad\quad\cdot \left(\left(\max_{d\in [p]\cup\{0\}}\frac{(k_2+o_2)^d}{k_2^d}\right)^{k_2}\left(\max_{d\in [p]\cup\{0\}}\frac{(k_2+o_2)^d}{o_2^d}\right)^{o_2}\right)^{\frac{k_1-o_1}{o_2 k_1-o_1k_2}}\nonumber\\
=&\sup_{\substack{k_1\geq o_1> 0,\\o_2>k_2\geq 0}}\left(\left(\frac{(k_1+o_1)^p}{k_1^p}\right)^{k_1}\left(\frac{(k_1+o_1)^p}{o_1^p}\right)^{o_1}\right)^{\frac{o_2-k_2}{o_2 k_1-o_1k_2}}\left(\left(\frac{(k_2+o_2)^p}{k_2^p}\right)^{k_2}\left(\frac{(k_2+o_2)^p}{o_2^p}\right)^{o_2}\right)^{\frac{k_1-o_1}{o_2 k_1-o_1k_2}}\nonumber\\
=&\sup_{k\geq 1,0\leq h<1}\left(\left(\frac{(k+1)^{k+1}}{k^k}\right)^{\frac{1-h}{k-h}}\left(\frac{(h+1)^{h+1}}{h^h}\right)^{\frac{k-1}{k-h}}\right)^p,\label{onl_pol_form_11}
\end{align}
where (\ref{onl_pol_form_11}) can be obtained  by setting $k:=k_1/o_1$ and $h:=k_2/o_2$. Now, we show that the maximum value of function $F(k,h):=\left(\frac{(k+1)^{k+1}}{k^k}\right)^{\frac{1-h}{k-h}}\left(\frac{(h+1)^{h+1}}{h^h}\right)^{\frac{k-1}{k-h}}$ over $k\geq 1$ and $0\leq h<1$ is equal to $4$. Observe that $
\ln(F(k,h))=\frac{1-h}{k-h}((k+1)\ln(k+1)-k\ln(k))+\frac{k-1}{k-h}((h+1)\ln(h+1)-h\ln(h))\leq \left(\frac{1-h}{k-h}(k+1)+\frac{k-1}{k-h}(h+1)\right)\ln\left(\frac{1-h}{k-h}(k+1)+\frac{k-1}{k-h}(h+1)\right)$, 
where the second last inequality holds because of the concavity of the function $g$ defined as $g(x):=(x+1)\ln(x+1)-x\ln(x)$ and since $\ln(F(k,h))$ is defined as convex combination of $g(k)$ and $g(h)$. Thus, we get
\begin{align}
F(k,h)&\leq \left(\frac{1-h}{k-h}(k+1)+\frac{k-1}{k-h}(h+1)\right)^{\frac{1-h}{k-h}(k+1)+\frac{k-1}{k-h}(h+1)}\nonumber\\
&=\left(\frac{(k-h)+(k-h)}{k-h}\right)^{\frac{(k-h)+(k-h)}{k-h}}=2^2=4.\label{onl_pol_form_13}
\end{align}
Finally, since $F(k,h)=4$ for $k=1$ and $h=0$, and because of (\ref{onl_pol_form_13}), we have that the maximum of $F(k,h)$ over $k\geq 1$ and $0\leq h<1$ is $4$. Thus, we get that (\ref{onl_pol_form_11}) is at most $4^p$. \qed
\end{proof}
\subsection{Tightness of the Upper Bound of Corollary \ref{pol_onl_cor} w.r.t. any Online Algorithm.}
\begin{theorem}\label{thm_onl_low_problem}
The competitive ratio of any online algorithm ${\sf A}$ applied to load balancing instances with polynomial latencies of maximum degree $p$ is at least ${\sf CR}_{\sf A}(\mathcal{P}(p))\geq 4^p$, even for instances with identical resources.
\end{theorem}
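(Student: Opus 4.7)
The plan is to exhibit, for every online algorithm $\A$ and every $\epsilon>0$, a load balancing instance with identical resources of latency $\ell(x)=x^p$ on which $\A$'s competitive ratio is at least $4^p-\epsilon$. Together with Corollary~\ref{pol_onl_cor}, this shows that the greedy algorithm is essentially optimal on this class. The strategy is to transfer the level-by-level instance $\CG(m,s)$ of Theorem~\ref{thm_onl_low}---which already yields ratio $4^p-\epsilon$ against greedy on non-identical resources---to identical resources, while simultaneously strengthening the adversary so that it is adaptive to $\A$.

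The first step is to eliminate the scaling factors $\alpha_j\beta_j^p$ of $\CG(m,s)$. For polynomial latencies, a resource with latency $c\,x^p$ is cost-equivalent to an identical-latency resource $x^p$ once every client placed on it has its weight multiplied by $c^{1/p}$, since $c\,(\sum w)^p=(\sum c^{1/p}w)^p$. Each client of $\CG(m,s)$, however, has two admissible resources that may carry different scaling factors, and its weight is fixed once and for all; I would resolve this by refining the instance so that the two admissible resources of each client share the same scaling factor---splitting every level of $\CG(m,s)$ into twin sub-levels of identical-scaling copies and adjusting the strategy graph accordingly. This inflates the number of resources and clients by a constant factor but preserves the asymptotic $4^p-\epsilon$ ratio.

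The second step is to handle an arbitrary online algorithm. The adversary reveals clients level by level and, after observing $\A$'s current placement, tailors the weight and strategy set of the next client. The design maintains the invariant that, as long as $\A$'s placements are compatible with the ``intended'' Nash configuration $\sg$ of $\CG(m,s)$, the analysis of Theorem~\ref{thm_onl_low}---in particular the telescoping step leading to~(\ref{onl_low_form_5})---goes through and delivers ratio $4^p-\epsilon$. If $\A$ deviates at some level $j^\ast$ by balancing onto the other admissible resource, the adversary halts and invokes a direct computation: since the weights $w_j=1/\beta_{j+1}$ grow geometrically, the client placed at level $j^\ast$ dominates the total weight revealed so far, and the local contribution of that single deviation already falls into the worst-case regime $(k,h)=(1,0)$ of Lemma~\ref{lemma_CR_weighted_Polynomial latency functions}, which evaluates to $4^p-\epsilon$ via the quasi-log-convexity of $x^p$ underlying Theorem~\ref{thm_onl_upp}.

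The main obstacle will be reconciling the ``mimicking'' and the ``deviating'' branches into a single uniform bound while keeping the resources identical. On the mimicking branch, I need the identical-resource simulation to faithfully reproduce $\A$'s decisions across each pair of admissible resources; symmetry (equal scaling, equal load history) on the twin sub-levels makes this automatic. On the deviating branch, the difficulty is an amortized one: the growth rate of the $w_j$'s must be tuned so that the NSW penalty of a balancing deviation at level $j^\ast$ is at least as large as the collective contribution of all remaining levels in the mimicking branch, calibrating both branches to the same limit $4^p$. I expect that the argument essentially reuses the algebra of~(\ref{w_low_form_7_app}) and~(\ref{onl_low_form_5}) in the appendix, now applied to the adaptively truncated instance.
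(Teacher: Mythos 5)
Your proposal diverges substantially from the paper's proof, and both of its two main steps have genuine gaps. The paper does not adapt the greedy lower-bound instance at all: it builds a \emph{self-similar recursive} instance (after Theorem 17 of \cite{CFKKM11}) with identical latencies $\ell(x)=x^p$, in which $\I(m)$ contains one copy of $\I(i-1)$ for each $i\in[m]$ plus $m$ fresh clients of weights $2^{i-1}$, each choosing between the new fundamental resource and the fundamental resource of the corresponding sub-instance. Clients are revealed in an order guaranteeing that, at every decision point, the two admissible resources have \emph{equal} congestion and identical latency and sit atop isomorphic substructures; hence no online algorithm can distinguish them and the adversary may assume the ``bad'' choice was made. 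This symmetrization is what removes any case analysis on the algorithm's behavior, and it is exactly the ingredient your proposal lacks.

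Concretely, the first gap is in your rescaling step. Replacing a resource of latency $c\,x^p$ by one of latency $x^p$ while multiplying the weights of its users by $c^{1/p}$ preserves each client's \emph{cost}, but not the NSW: the weights appear both as exponents in $\prod_j \ell_j(k_j)^{k_j}$ and in the normalization $1/\sum_i w_i$, so scaling different clients by different factors changes the geometric mean of the two profiles by different amounts and the ratio is not preserved. Worse, every client of $\I(m)$ in Theorem \ref{thm_onl_low} has its two admissible resources at consecutive levels $j$ and $j+1$, whose scalings $\alpha_j\beta_j^p$ and $\alpha_{j+1}\beta_{j+1}^p$ are different by construction; a single fixed weight cannot be consistent with both, and splitting a level into equally-scaled twins does not repair the mismatch \emph{across} levels, which is where the client's choice lives. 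The second gap is the ``deviating branch'': when the algorithm balances onto the other admissible resource it typically moves \emph{toward} the optimal assignment, so halting and claiming the local contribution ``already falls into the worst-case regime $(k,h)=(1,0)$'' is not justified — on the truncated instance the algorithm's NSW can be close to optimal. You correctly identify the calibration of the two branches as the main obstacle, but that calibration is the entire content of the proof and is left undone; the paper's recursive doubling construction (with $N(m,i)=2^{m-i}$ copies of $\I(i)$ and the limit computation via the Stolz--Cesaro argument of Fact \ref{fact_onl}) is what actually delivers the uniform $4^p-\epsilon$ bound.
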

\begin{proof}
We equivalently show that, for any online algorithm $A$ and $\epsilon>0$, there exists a load balancing instance $\I$ such that ${\sf \CR}_{\sf A}(\I)\geq 4^p-\epsilon$. We construct an instance similar to that defined in Theorem 17 of \cite{CFKKM11}. Given an integer $m\geq 0$ and a real number $w>0$, let $\I(m)$ be a load balancing instance with identical polynomial latency functions of type $\ell(x)=x^p$, and recursively defined as follows: 
\begin{itemize}
\item If $m=0$, $\I(m)$ has no clients and there is a unique resource denoted as {\em fundamental resource} of $\I(0)$. 
\item If $m\geq 1$, then: (i) $\I(m)$ contains a sub-instance equivalent to $\I(i-1)$ for any $i\in [m]$; (ii) $\I(m)$ has a further resource $r$ denoted as {\em fundamental resource} of $\I(m)$; (iii) there are further $m$ clients such that, for any $i\in [m]$, the $i$-th client has weight $w_i:=2^{i-1}$ and can select among $r$ and the fundamental resource $r(i)$ of the sub-instance of type $\I(i-1)$ included in $\I(m)$; (iv) for any client $i\in [m]$, $r$ and $r(i)$ are respectively denoted as first and second resource of the $i$-th client included in $\I(m)$.
\end{itemize}
Let $\sg$ and $\sg^*$ be the states of $\I(m)$ in which each client is assigned to her first and second resource, respectively. We have  that $\sg$ is a state that can be returned by any online algorithm if clients are processed according to the following partial ordering: (i) given two clients $i_1$ and $i_2$ having their first resource in sub-instances of type $\I(m_1)$ and $\I(m_2)$ respectively, if $m_1<m_2$ then client $i_1$ is processed before client $i_2$; (ii) the clients defined in the same sub-instance are processed in increasing order with respect to their weights. This fact is true since each time the greedy algorithm processes some client $i$ according to the partial ordering defined above, the congestions of the first and the second resource of that client are equal. Thus, since the latency functions are equal too, any online algorithm cannot distinguish between the two resources selectable by each client, and by symmetry both choices can potentially lead to the same worst-case competitive ratio. 

We have the following fact:
\begin{fact}\label{onl_lem_sub}
Given two integers $m\geq 1$ and $i\in [m-1]\cup\{0\}$ such that $j\geq i$, the number $N(m,i)$ of sub-instances of $\I(m)$  equivalent to $\I(j)$ for some $j\geq i$ is $N(m,i)=2^{m-i}$.
\end{fact}
\begin{proof}
We show the claim by induction on $h(i):=m-i\geq 0$. If $h(i)=0$ the unique sub-instance equivalent to $\I(j)$ for some $j\geq i$ is the entire instance $\I(m)$, thus $N(m,i)=1=2^{h(i)}=2^{m-i}$ and the base step holds. Now, assume that the claim holds for any $h(i)\geq 0$. Observe that we can associate in a one-to-one correspondence each sub-instance that is equivalent to $\I(j)$ for some $j\geq i$, with a sub-instance equivalent to $\I(i-1)$, that is $N(m,i)=N(m,i-1)-N(m,i)\Rightarrow N(m,i-1)=2N(m,i)$. Thus, we have that $N(m,i-1)=2N(m,i)=2\cdot 2^{h(i)}=2^{m-i+1}=2^{h(i)+1}$, and the inductive step holds. \qed
\end{proof}
Let $N(m,i)$ be defined as in Fact \ref{onl_lem_sub} and let $R(i)$ be the set of fundamental resources for sub-instances of type $\I(i)$. Observe that, for any $i\in [m]$ and resource $r$ such that $i$ clients select $r$ as first resource, $r$ is the fundamental resource of a sub-instance of type $\I(i)$, i.e., $r\in R(i)$. Thus, by exploiting Fact \ref{onl_lem_sub}, we get 
\begin{align}
{\sf NSW}(\sg)&=\left(\prod_{i\in [m]}\prod_{r\in R(i)}\ell(k_r(\sg))^{k_r(\sg)}\right)^{\frac{1}{\sum_{r\in R}k_r(\sg)}}\nonumber\\
&=\left(\prod_{i\in [m]}\ell\left(\sum_{j=1}^iw_j\right)^{\left(\sum_{j=1}^iw_j\right)|R(i)|}\right)^{\frac{1}{\sum_{i\in [m]}\left(\sum_{j=1}^iw_j\right)|R(i)|}}\nonumber\\
&=\left(\prod_{i\in [m]}\ell\left(\sum_{j=1}^i2^{j-1}\right)^{\left(\sum_{j=1}^i2^{j-1}\right)(N(m,i)-N(m,i+1))}\right)^{\frac{1}{\sum_{i\in [m]}\left(\sum_{j=1}^i2^{j-1}\right)(N(m,i)-N(m,i+1))}}\nonumber\\
&=\left(\prod_{i\in [m]}\left(2^i-1\right)^{p\left(2^i-1\right)2^{m-i-1}}\right)^{\frac{1}{\sum_{i\in [m]}\left(2^i-1\right)2^{m-i-1}}}\label{onl_part_alg}
\end{align}
and 
\begin{align}
{\sf NSW}(\sg^*)&=\left(\prod_{i\in [m]}\prod_{r\in R(i)}\prod_{j\in [i]}\ell(k_{r(j)}(\sg^*))^{k_{r(j)}(\sg^*)}\right)^{\frac{1}{\sum_{r\in R}k_r(\sg)}}\nonumber\\
&=\left(\prod_{i\in [m]}\prod_{j\in [i]}\ell\left(w_j\right)^{w_j(N(m,i)-N(m,i+1))}\right)^{\frac{1}{\sum_{i\in [m]}\left(2^i-1\right)2^{m-i-1}}}\nonumber\\
&=\left(\prod_{i\in [m]}\prod_{j\in [i]}\left(2^{j-1}\right)^{p2^{j-1}2^{m-i-1}}\right)^{\frac{1}{\sum_{i\in [m]}\left(2^i-1\right)2^{m-i-1}}}\nonumber\\
&=\left(\prod_{i\in [m]}2^{p\left(\sum_{j=0}^{i-1}j2^{j}\right)2^{m-i-1}}\right)^{\frac{1}{\sum_{i\in [m]}\left(2^i-1\right)2^{m-i-1}}}\nonumber\\
&=\left(\prod_{i\in [m]}2^{p\left(i2^i-2\left(2^i-1\right)\right)2^{m-i-1}}\right)^{\frac{1}{\sum_{i\in [m]}\left(2^i-1\right)2^{m-i-1}}}\label{onl_part_opt}
\end{align}
Let $\epsilon>0$. By (\ref{onl_part_alg}) and (\ref{onl_part_opt}), and by taking a sufficiently large integer $m>1$, we get
\begin{align}
{\sf CR}_{\sf A}(\I)&\geq \frac{{\sf NSW}(\sg)}{{\sf NSW}(\sg^*)}\nonumber\\
&= \left(\frac{\prod_{i\in [m]}\left(2^i-1\right)^{p\left(2^i-1\right)2^{m-i-1}}}{\prod_{i\in [m]}2^{p\left(i2^i-2\left(2^i-1\right)\right)2^{m-i-1}}}\right)^{\frac{1}{\sum_{i\in [m]}\left(2^i-1\right)2^{m-i-1}}}\nonumber\\
&= \left(\frac{\prod_{i\in [m]}\left(2^i-1\right)^{p\left(2^i-1\right)2^{-i-1}}}{\prod_{i\in [m]}2^{p\left(i2^i-2\left(2^i-1\right)\right)2^{-i-1}}}\right)^{\frac{1}{\sum_{i\in [m]}\left(2^i-1\right)2^{-i-1}}}\nonumber\\
&=  \left(\frac{\prod_{i\in [m]}\left(2^i\right)^{p\left(2^i-1\right)2^{-i-1}}}{\prod_{i\in [m]}2^{p\left(i2^i-2\left(2^i-1\right)\right)2^{-i-1}}}\right)^{\frac{1}{\sum_{i\in [m]}\left(2^i-1\right)2^{-i-1}}}\prod_{i\in [m]}\left(\frac{2^i-1}{2^i}\right)^{\frac{p\left(2^i-1\right)2^{-i-1}}{\sum_{i\in [m]}\left(2^i-1\right)2^{-i-1}}}\nonumber\\
&=  \left(2^p\right)^{\frac{\sum_{i\in [m]}\left(-i2^{-i-1}+1-2^{-i}\right)}{
\sum_{i\in [m]}(1/2-2^{-i-1})}}\prod_{i\in [m]}\left(\frac{2^i-1}{2^i}\right)^{\frac{p\left(1/2-2^{-i-1}\right)}{\sum_{i\in [m]}\left(1/2-2^{-i-1}\right)}}\label{onl_pol_fin_0}
\end{align}
We have the following fact:
\begin{fact}\label{fact_onl}
\begin{equation*}
\lim_{m\rightarrow \infty}\prod_{i\in [m]}\left(\frac{2^i-1}{2^i}\right)^{\frac{p\left(1/2-2^{-i-1}\right)}{\sum_{i\in [m]}\left(1/2-2^{-i-1}\right)}}=1.
\end{equation*}
\end{fact}
\begin{proof}
Set $\alpha_i:=p\ln\left(\frac{2^i-1}{2^i}\right)$ and $\beta_i:=\left(1/2-2^{-i-1}\right)$. We will equivalently show that $\lim_{m\rightarrow \infty}\frac{\sum_{i=1}^m\alpha_i\beta_i}{\sum_{i=1}^m\beta_i}=0$, since, by exponentiating this equality, we get the claim. Set $a_m:=\sum_{i=1}^m\alpha_i\beta_i$ and $b_m:=\sum_{i=1}^m\beta_i$. We have that sequence $(b_m)_{m\geq 1}$ is positive, increasing, and unbounded. Thus, by the Stolz-Cesaro Theorem, we have that $\lim_{m\rightarrow\infty}\frac{a_m}{b_m}=\lim_{m\rightarrow\infty}\frac{a_{m+1}-a_m}{b_{m+1}-b_m}$. We conclude that $\lim_{m\rightarrow \infty}\frac{\sum_{i=1}^m\alpha_i\beta_i}{\sum_{i=1}^m\beta_i}=\lim_{m\rightarrow\infty}\frac{a_m}{b_m}=\lim_{m\rightarrow\infty}\frac{a_{m+1}-a_m}{b_{m+1}-b_m}=\lim_{m\rightarrow \infty}\frac{\alpha_m\beta_m}{\beta_m}=\lim_{m\rightarrow \infty}p\ln\left(\frac{2^m-1}{2^m}\right)=0$, and the claim follows. \qed
\end{proof}
By continuing from (\ref{onl_pol_fin_0}), we get
\begin{align}
&=  \left(2^p\right)^{\frac{\sum_{i\in [m]}\left(-i2^{-i-1}+1-2^{-i}\right)}{
\sum_{i\in [m]}(1/2-2^{-i-1})}}\prod_{i\in [m]}\left(\frac{2^i-1}{2^i}\right)^{\frac{p\left(1/2-2^{-i-1}\right)}{\sum_{i\in [m]}\left(1/2-2^{-i-1}\right)}}\nonumber\\
&\geq \lim_{m\rightarrow \infty }\left(2^p\right)^{\frac{\sum_{i\in [m]}\left(-i2^{-i-1}+1-2^{-i}\right)}{
\sum_{i\in [m]}(1/2-2^{-i-1})}}\prod_{i\in [m]}\left(\frac{2^i-1}{2^i}\right)^{\frac{p\left(1/2-2^{-i-1}\right)}{\sum_{i\in [m]}\left(1/2-2^{-i-1}\right)}}-\epsilon \nonumber\\
&=\lim_{m\rightarrow \infty}\left(2^p\right)^{\frac{\sum_{i\in [m]}\left(-i2^{-i-1}+1-2^{-i}\right)}{
\sum_{i\in [m]}(1/2-2^{-i-1})}}-\epsilon\label{app_fact_onl}\\
&=\lim_{m\rightarrow \infty}\left(2^p\right)^{\frac{2^{-m-1}m+m+2^{1 - m}-2}{
1/2\left(m+2^{-m}-1\right)}}-\epsilon\nonumber\\
&=\left(2^p\right)^{\left(\lim_{m\rightarrow \infty}\frac{2^{-m-1}m+m+2^{1 - m}-2}{
1/2\left(m+2^{-m}-1\right)}\right)}-\epsilon\nonumber\\
&=\left(2^p\right)^{\left(\lim_{m\rightarrow \infty}\frac{m}{
1/2(m) }\right)}-\epsilon\nonumber\\
&=\left(2^p\right)^{2}-\epsilon\nonumber\\
&=4^p-\epsilon,\nonumber
\end{align}
where (\ref{app_fact_onl}) comes from Fact \ref{fact_onl}.  We conclude that there exists a load balancing instance $\I$ such that ${\sf CR}_{\sf A}(\I)\geq 4^p-\epsilon$, thus, for the arbitrariness of $\epsilon$, the claim follows.
\section{Lower bound for Linear Congestion Games}
Unweighted congestion games are a further generalization of unweighted load balancing games. The difference is that the strategy set of each player $i\in \N$ is a collection ${\Sigma}_i\subseteq 2^R\setminus\{\emptyset\}$, i.e., a strategy is a non-empty subset of $R$. Furthermore, given a strategy profile $\sg=(\sigma_1,\ldots, \sigma_n)$ (with $\sigma_i\in {\Sigma}_i$), the cost of each player $i\in\N$ is $cost_i(\sg):=\sum_{j\in \sigma_i}\ell_j(k_j(\sg))$, where $k_j(\sg):=|i\in \N:j\in \sigma_i|$ is the congestion of resource $j$ in strategy profile $\sg$. In the following theorem, we show that, even for linear latency functions, the Nash price of anarchy of unweighted congestion games with linear latency functions is non-constant in the number of players, differently from the case of load balancing games. This fact exhibits a substantial difference with respect to the case of the price of anarchy when the considered social function is the sum of the players' costs. Indeed, in such case, the price of anarchy for linear congestion games is finite, and the price of anarchy of load balancing games is as high as that of general linear congestion games. 
\begin{theorem}\label{thm_CG}
The Nash price of anarchy of linear congestion games is at least $n^{1-o(1)}$, where $n$ is the number of players (and $o(1)$ is an infinitesimal w.r.t. to $n$). 
\end{theorem}
\begin{proof}
We show that, for any $\epsilon\in (0,1/2)$, there exists a congestion game ${\sf CG}$ with linear latency functions and $n\geq 2$ players such that:
\begin{equation}\label{unw_gen_low_form_0}
{\sf NPoA}({\sf CG})\geq \lceil n\epsilon\rceil^{1-\frac{\lceil n\epsilon\rceil}{n}},
\end{equation}
and this fact will imply the claim, as $\lceil n\epsilon\rceil^{1-\frac{\lceil n\epsilon\rceil}{n}}\in \Theta(n^{1-\epsilon})$ for any fixed $\epsilon\in (0,1/2)$. 
Let $\epsilon\in (0,1/2)$, $n\geq 2$, and $m:=\lceil n\epsilon\rceil$. Let ${\sf CG}(n,\epsilon)$ be an unweighted congestion game with $n$ players defined as follows: The set of resources is organized into three groups $R_1,R_2,R_3$, with $R_j:=\{r_{j,1},\ldots, r_{j,{n-m}}\}$ for any $j\in [2]$, and $R_3:=\{r_{3,1},\ldots, r_{3,{m}}\}$. The latency function of each resource $r_{j,h}$ is $\ell_{r_{j,h}}(x):=\alpha_{j}x$, where $\alpha_1=m+1$, $\alpha_2=1$, and $\alpha_3=m$. There are two groups of players $\N_1,\N_2$, with $\N_1:=\{i_{1,1},\ldots, i_{1,{n-m}}\}$ and $\N_2:=\{i_{2,1},\ldots, i_{2,{m}}\}$. 
Each player $i_{1,h}\in N_1$ has two strategies $S_{1,h}$ and $S_{1,h}^*$ defined as $S_{1,h}:=\{r_{1,h}\}$ and $S_{1,h}^*:=\{r_{2,h}\}$, and each player $i_{2,h}\in N_2$ has two strategies $S_{2,h}$ and $S_{2,h}^*$ defined as $S_{2,h}:=R_2$ and $S_{3,h}^*:=\{r_{3,h}\}$. 
Let $\sg$ (resp. $\sg^*$) be the strategy profile such that each player $i_{t,h}$ plays strategy $S_{t,h}$ (resp. $S_{t,h}^*$), for any $t\in [2]$. One can easily show that $cost_i(\sg)=cost_i(\sg_{-i},\sigma_i^*)$ for any player $i$, thus $\sg$ is a pure Nash equilibrium. We have that:
\begin{align}
&\ {\sf NPoA}({\sf CG}(n,\epsilon))\nonumber\\
&\geq \frac{{\sf NSW}(\sg)}{{\sf NSW}(\sg^*)}\nonumber\\
&=\left(\left(\prod_{i\in \N_1}\frac{cost_i(\sg)}{cost_i(\sg^*)}\right)\left(\prod_{i\in \N_2}\frac{cost_i(\sg)}{cost_i(\sg^*)}\right)\right)^{\frac{1}{n}}\nonumber\\
&=\left(\left(\prod_{i\in \N_1}\frac{cost_i(\sg_{-i},\sigma_i^*)}{cost_i(\sg^*)}\right)\left(\prod_{i\in \N_2}\frac{cost_i(\sg_{-i},\sigma_i^*)}{cost_i(\sg^*)}\right)\right)^{\frac{1}{n}}\nonumber\\
&=\left(\left(\frac{\alpha_2(m+1)}{\alpha_2}\right)^{n-m}\left(\frac{\alpha_3}{\alpha_3}\right)^m\right)^{\frac{1}{n}}\nonumber\\
&=\left(m+1\right)^{\frac{n-m}{n}}\nonumber\\
&\geq \lceil n\epsilon\rceil^{1-\frac{\lceil n\epsilon\rceil}{n}},
\end{align}
thus (\ref{unw_gen_low_form_0}) holds, and the claim follows.\qed
\end{proof}
\end{proof}
\end{document}